\title{Kernelizing Temporal Exploration Problems} 
\author{Emmanuel Arrighi}{University of Bergen, Norwayi \and \url{http://emmanuel.arrighi.eu}}{emmanuel.arrighi@gmail.com}{https://orcid.org/0000-0002-0326-1893}{}
\author{Fedor V. Fomin}{University of Bergen, Norway}{Fedor.Fomin@uib.no}{}{}
\author{Petr Golovach}{University of Bergen, Norway}{Petr.Golovach@ii.uib.no}{}{}
\author{Petra Wolf}{University of Bergen, Norway \and \url{https://www.wolfp.net/}}
{mail@wolfp.net}
{https://orcid.org/0000-0003-3097-3906}
{}
\authorrunning{E. Arrighi, F. Fomin, P. Golovach and P. Wolf} 
\keywords{Temporal graph, temporal exploration, computational complexity, kernel} 
\newcommand{\FPT}{\textsf{FPT}\xspace}
\newcommand{\XP}{\textsf{XP}\xspace}
\newcommand{\NP}{\textsf{NP}\xspace}
\newcommand{\PTime}{\textsf{P}\xspace}
\newcommand{\coNP}{\textsf{coNP}\xspace}
\newcommand{\poly}{\textsf{poly}\xspace}
\newcommand{\W}{\textsf{W}}
\DeclareMathOperator{\operatorClassNP}{{\sf NP}}
\newcommand{\classNP}{\ensuremath{\operatorClassNP}}
\DeclareMathOperator{\operatorClassCoNP}{{\sf coNP}}
\newcommand{\classCoNP}{\ensuremath{\operatorClassCoNP}}
\DeclareMathOperator{\operatorClassFPT}{{\sf FPT}\xspace}
\newcommand{\classFPT}{\ensuremath{\operatorClassFPT}\xspace}
\DeclareMathOperator{\operatorClassW}{{\sf W}}
\newcommand{\classW}[1]{\ensuremath{\operatorClassW[#1]}}
\DeclareMathOperator{\operatorClassXP}{{\sf X}P\xspace}
\newcommand{\classXP}{\ensuremath{\operatorClassXP}\xspace}
\newcommand{\Oh}{\mathcal{O}}
\newcommand{\m}{\mathfrak{m}\xspace}
\newcommand{\G}{\mathcal{G}\xspace}
\newcommand{\T}{\mathcal{T}\xspace}
\newcommand{\GL}{\mathcal{G}= (G_1,\ldots,G_L)\xspace}
\newcommand{\WEX}{\textsc{Weighted $k$-arb NS-TEXP}\xspace}
\newtheorem{redrule}{Reduction Rule}
\begin{document}

\maketitle

\begin{abstract}
We study the kernelization of exploration problems on temporal graphs. A temporal graph consists of a finite sequence of snapshot graphs $\mathcal{G}=(G_1, G_2, \dots, G_L)$ that share a common vertex set but might have different edge sets. 
%
%
%
 The non-strict temporal exploration problem (\textsc{NS-TEXP} for short) introduced by Erlebach and Spooner, asks if a single agent can visit all vertices of a given temporal graph where the edges traversed by the agent are present in non-strict monotonous time steps, i.e., the agent can move along the edges of a snapshot graph with infinite speed. The exploration must at the latest be completed in the last snapshot graph. The optimization variant of this problem is the \textsc{$k$-arb NS-TEXP} problem, where the agent's task is to visit at least $k$  vertices of the temporal graph.
We show that under standard computational complexity assumptions, neither of the problems \textsc{NS-TEXP} nor \textsc{$k$-arb NS-TEXP} allow for polynomial kernels in the standard parameters: number of vertices $n$, lifetime $L$, number of vertices to visit $k$, and maximal number of connected components per time step $\gamma$; as well as in the combined parameters $L+k$, $L + \gamma$, and $k+\gamma$. On the way to establishing these lower bounds, we answer a couple of questions left open by  Erlebach and Spooner.

We also initiate the study of structural kernelization by identifying a new parameter of a temporal graph 
 $p(\mathcal{G}) = \sum_{i=1}^{L} (|E(G_i)|) - |V(G)| +1$. Informally, this parameter measures how dynamic the temporal graph is. Our main algorithmic result is the construction of a polynomial (in $p(\mathcal{G})$) kernel for the more general \textsc{Weighted $k$-arb NS-TEXP}  problem, where weights are assigned to the vertices and the task is to find a temporal walk of weight at least $k$. 
 
%


\end{abstract}
\section{Introduction}
We investigate the kernelization of exploration and connectivity tasks in temporal networks. While kernelization, and in particular, structural kernelization, appears to be a successful approach for addressing many optimization problems on \emph{static} graphs, its applications in temporal graphs are less impressive, to say the least. A reasonable explanation for this  (we will provide some evidence of that later)  is that most of the structural parameters of statical graphs, like treewidth, size of a feedback vertex set, or the vertex cover number, do not seem to be useful when it comes to dynamic settings. This brings us to the following question. 

\begin{tcolorbox}[colback=green!5!white,colframe=blue!40!black]
   What structure of dynamic graphs could be ``helpful'' for kernelization algorithms?
\end{tcolorbox}
    
We propose a new structural parameter of a temporal graph estimating how ``dynamic'' is the temporal graph.  Our main result is an algorithm for the exploration problem on a temporal graph that produces a kernel whose size is polynomial in the new parameter. Before proceeding with the formal statement of the problem and our results, we provide a short overview of temporal graphs and kernelization.

\medskip\noindent\textbf{Temporal  exploration.}
Many networks considered nowadays show an inherently dynamic behavior, for instance connectivity in mobile ad-hoc networks, relationships in a social network, or accessibility of services in the internet. Classical graphs do not suffice to model those dynamic behaviors, which led to an intensive study of so-called \emph{temporal graphs}. In general, temporal graphs are graphs that change over time. There are several models in the literature that consider different types of dynamic behaviors and encodings of the temporal graphs. Here, we consider a temporal graph $\mathcal{G}$ to consist of a sequence of \emph{snapshot} graphs $G_1, G_2, \dots, G_L$ that share a common set of vertices $V$ but might have different edge sets $E_1, E_2,\dots , E_L$. We call the graph $G = (V, \bigcup_{i=1}^L E_i)$ the \emph{underlying graph} of $\mathcal{G}$. We can think of the temporal graph $\mathcal{G}$ as the graph $G$ where only a subset of edges are present at a certain time step. The number of snapshots $L$ is commonly referred to as the \emph{lifetime} of $\mathcal{G}$. Due to its relevance in modeling dynamic systems, a huge variety of classical graph problems have been generalized to temporal graphs. We refer to \cite{DBLP:journals/paapp/CasteigtsFQS12,DBLP:journals/jcss/KempeKK02,DBLP:journals/im/Michail16,RozenshteinG19} for an introduction to temporal graphs and their associated combinatorial problems.

Graph exploration is a fundamental problem in the realm of graph theory and has been extensively studied since its introduction by Shannon in 1951~\cite{shannon1993presentation}. The question if a given graph can be explored by a single agent was generalized to temporal graphs by Michail and Spirakis \cite{DBLP:journals/tcs/MichailS16}. They showed that deciding whether a single agent can visit all vertices of a temporal graph within a given number of time steps while crossing only one edge per time step is an \NP-hard problem. This hardness motivated the study of the parameterized complexity of the temporal exploration problem performed by Erlebach and Spooner in~\cite{DBLP:conf/sand/ErlebachS22}. 

While the temporal exploration problem, \textsc{TEXP} for short, allows the agent to cross only one edge per time step, in some scenarios, the network only changes slowly, way slower than the speed of an agent. Then, it is natural to assume that the agent can travel with infinite speed and explore the whole connected component in which he is currently located in a single time step. Such scenarios arise for instance in delay-tolerant networks~\cite{DBLP:journals/jnca/CRMS16}. Erlebach and Spooner generalized \textsc{TEXP} to allow the agent to travel with infinite speed in~\cite{DBLP:conf/sirocco/ErlebachS20} and called this problem the \emph{non-strict temporal exploration problem}, \textsc{NS-TEXP} for short.
In \textsc{NS-TEXP} the task is to identify whether an agent can visit all vertices of the graph starting from an initial vertex $v$ according to the following procedure. At step~$1$, the agent visits all vertices of the connected component $C_1$ of the snapshot graph $G_1$ containing~$v$. At step $i\in\{2,\dots, L\}$, the agent selects to explore the vertices of one of the components $C_i$ of $G_i$ subject that this component intersects the component explored by the agent at step $(i-1)$.

 Shifting to an infinite exploration speed changes the nature of the temporal exploration problem drastically, as now the exploration time can be significantly shorter than the number of vertices. While for classical graphs, the question if a graph can be explored with infinite speed simply reduces to the question of connectivity; for temporal graphs,  \textsc{NS-TEXP} is \NP-complete~\cite{DBLP:conf/sirocco/ErlebachS20}. 
The parameterized complexity of \textsc{NS-TEXP} was studied by Erlebach and Spooner in~\cite{DBLP:conf/sand/ErlebachS22} where \FPT algorithms for the parameter \emph{lifetime} $L$ of the temporal graph (equivalently number of time steps during which the temporal exploration must be performed) was shown. They also studied an optimization variant of \textsc{NS-TEXP}, where not all but at least  $k$ vertices must be visited during the lifetime of the graph. This variant is called \textsc{$k$-arb NS-TEXP}. An \FPT algorithm with parameter $k$ solving \textsc{$k$-arb NS-TEXP} was obtained in~\cite{DBLP:conf/sand/ErlebachS22}. In the long version of this study~\cite{ErlebachS22Journal}, it was stated as an open question whether \textsc{NS-TEXP} is in \FPT or at least in \XP for the parameter maximal number of connected components in a time step, $\gamma$, and, whether \textsc{k-arb NS-TEXP} is in \FPT for parameter $L$. These open problems were also stated at the 2022 ICALP satellite workshop \emph{Algorithmic Aspects of Temporal Graphs V}.
In this work, we will answer both of these questions negatively.

\medskip
\noindent
\textbf{Kernelization.} Informally,  {kernelization} is a {{preprocessing algorithm}} that consecutively applies various data reduction rules in order to shrink the instance size of a parameterized problem as much as possible. 
Kernelization is one of the major research domains of parameterized complexity and many important advances in the area are on kernelization. 
These advances include general algorithmic findings on problems admitting kernels of polynomial size and frameworks for ruling out polynomial kernels under certain complexity-theoretic assumptions. We refer to the book \cite{kernelizationbook19} for an overview of the area. 
A fruitful approach in kernelization (for static graphs)  is the study of the impact of various structural measurements (i.e., different than just the total input size or expected solution size) on the problem complexity. 
Such structural parameterizations, like the minimum size of a feedback vertex set,  of a vertex cover, or the treedepth, measure the non-triviality of the instance~\cite{BougeretS19,JansenB13,JansenK021,UhlmannW13}.

\medskip
\noindent
\textbf{Our contribution.} Our focus is on the possibility of kernelization for the problems \textsc{NS-TEXP} and \textsc{$k$-arb NS-TEXP}.  
In their work on the parameterized complexity of these problems, Erlebach and Spooner in~\cite{DBLP:conf/sand/ErlebachS22,ErlebachS22Journal} established fixed-parameter tractability for some combinations of these problems and ``standard'' parameterization like 
number of vertices~$n$, lifetime~$L$, number of vertices to visit~$k$, and $L+\gamma$, where $\gamma$ is the maximal number of connected components per time step, see \autoref{tab:overview}. 
As the first step, we rule out the existence of polynomial kernels for both problems when parameterized by each of these standard parameters. Moreover, our lower bounds hold even for ``combined'' parameters $L+k$, $L + \gamma$,  and $k+\gamma$.
On the way to establish our lower bounds for kernelization, we resolve two open problems from the parameterized study of \textsc{NS-TEMP} and \textsc{$k$-arb NS-TEXP} posed by Erlebach and Spooner in~\cite{DBLP:conf/sand/ErlebachS22,ErlebachS22Journal}.
 Namely, we show that \textsc{NS-TEXP} is \NP-complete for constant values of $\gamma \geq 5$ and that \textsc{$k$-arb NS-TEXP} is \W[1]-hard parameterized by $L$.

This motivates the study of structural kernelization of the exploration problems. While at the first glance, the most natural direction would be to explore the structure of the underlying graph of the temporal graph  $\mathcal{G}$, this direction does not seem to bring new algorithmic results. The reason is that our lower bounds on  \textsc{NS-TEXP} and \textsc{$k$-arb NS-TEXP} hold for temporal graphs with \emph{very} restricted underlying graphs. In particular, we show that the problems remain  \NP-hard even when the underlying graph is a tree with vertex cover number at most~$2$.  Thus a reasonable structural parameterization, in this case, should capture not only the ``static'' structure of the underlying graph but also the ``dynamics'' of the edges in  $\mathcal{G}$. With this in mind, we introduce the new parameter of a temporal graph $\mathcal{G} = (G_1, G_2, \dots, G_L)$, 
%
%
  \[p(\mathcal{G}) = \sum_{i=1}^{L} (|E(G_i)|) - |V(G)|+1.\]
  
   Note that if $\sum_{i=1}^{L} (|E(G_i)|) < |V(G)| - 1$, 
then the underlying graph is disconnected 
and hence the temporal graph $\mathcal{G}$ cannot be explored. The parameter $p(\mathcal{G})$ further bounds the number of edge-appearances of edges that appear multiple times and can thereby be seen as a measure of how dynamic the temporal graph is.
Our main result is a polynomial kernel for the more general problem \textsc{Weighted $k$-arb NS-TEXP} in the parameter $p=p(\mathcal{G})$. In \textsc{Weighted $k$-arb NS-TEXP}, the vertices contain weights and the task is to find a temporal graph that visits vertices with a total sum of weights of at least $k$ for some given integer $k$.
The obtained kernel is of size $\Oh(p^4)$ and contains a number of vertices that is linear in the parameter~$p$. 
Our results are summarized in \autoref{tab:overview}. 
\begin{table}	
	\centering
	\begin{tabular}{l|cc|cc}
		Param. & \multicolumn{2}{c}{\textsc{NS-TEXP}} & \multicolumn{2}{c}{\textsc{$k$-arb NS-TEXP}}\\
		& \FPT & Kernel 
		& \FPT & Kernel\\
		\toprule
		$p$ &\colorbox{red!20}{$2^{\Oh(p)}(nL)^{\Oh(1)}$} & \colorbox{red!20}{$\Oh(p^4)^\star$} & \colorbox{red!20}{$2^{\Oh(p)}(nL)^{\Oh(1)}$} & \colorbox{red!20}{$\Oh(p^4)^\star$}\\
		$n$ & $\Oh^*((2e)^n n^{\log{n}})$~\cite{ErlebachS22Journal} & \colorbox{red!20}{{no poly kernel}} & \FPT in $k$~\cite{ErlebachS22Journal} & \colorbox{red!20}{{no poly kernel}}\\
		$L$ & $\Oh^*(L(L!)^2)$~\cite{DBLP:conf/sand/ErlebachS22} & \colorbox{red!20}{{no poly kernel}} & \colorbox{red!20}{{\W[1]-hard}} & \colorbox{red!20}{{no poly kernel}}\\
		$k$ & - & - & $\Oh^*((2e)^k k^{\log{k}})$~\cite{ErlebachS22Journal} & \colorbox{red!20}{{no poly kernel}} \\
		$L+k$ & - & - & \FPT in  $k$~\cite{ErlebachS22Journal} & \colorbox{red!20}{{no poly kernel}}\\
		%
		$\gamma$ & in \PTime for $\leq 2$~\cite{ErlebachS22Journal},  & - & in \PTime for $\leq 2$~\cite{ErlebachS22Journal},  & - \\
		& \colorbox{red!20}{{\NP-hard for $\geq 5$}} & & \colorbox{red!20}{{\NP-hard for $\geq 5$}} & \\
$L+\gamma$ & \FPT in $L$~\cite{DBLP:conf/sand/ErlebachS22} & \colorbox{red!20}{{no poly kernel}} & \colorbox{red!20}{{$\Oh(\gamma^Ln^{\Oh(1)})$}} & \colorbox{red!20}{{no poly kernel}} \\
		& & \colorbox{red!20}{{for $\gamma\geq 6$}} & & \colorbox{red!20}{{for $\gamma\geq 6$}}\\
		$k+\gamma$ & - & - & \FPT in $k$~\cite{ErlebachS22Journal} & \colorbox{red!20}{no poly kernel}\\
	\end{tabular}
	\caption{Overview of the parameterized complexity of \textsc{NS-TEXP} and \textsc{$k$-arb NS-TEXP}. Our contribution is highlighted in red. The results stating that there is no polynomial kernel rely on the assumption that $\NP \not\subseteq \coNP/\poly$. Here, for a temporal graph $\mathcal{G}$, $n$ is the number of vertices, $L$ the lifetime, $\gamma$ is the maximal number of components per time step, and $p=p(\mathcal{G})$.
	 For entries marked with~$^\star$, we in fact get a generalized kernel for the more general variant \textsc{Weighted $k$-arb NS-TEXP}.}
	\label{tab:overview}
\end{table}

\noindent
\textbf{Further related work.} 
Michail and Spirakis~\cite{DBLP:journals/tcs/MichailS16} introduced the \textsc{TEXP} problem and showed that the problem is \NP-complete when no restrictions are placed on the input. They proposed considering the problem under the \emph{always-connected} assumption that requires that the temporal graph is connected in every time step. Erlebach et al.~\cite{DBLP:journals/jcss/Erlebach0K21} followed this proposition and showed that for always-connected temporal graphs, computing a foremost exploration schedule is \NP-hard to approximate with ratio $\Oh(n^{1-\epsilon})$, for every $\epsilon > 0$. They also showed that, under the always-connected assumption, subquadratic exploration schedules exist for temporal graphs whose underlying graph is planar, has bounded treewidth, or is a $2\times n$ grid; and linear  exploration schedules exist for cycles and cycles with one chord.  Alamouti~\cite{taghian2020exploring} showed that a cycle with $k$ chords can be explored in $\Oh(k^2 \cdot k! \cdot (2e)^k \cdot n)$ time steps. Adamson et al.~\cite{DBLP:conf/sand/AdamsonGMZ22} improved this bound to $\Oh(kn)$ time steps. They also improved the bounds on the worst-case exploration time for temporal graphs whose underlying graph is planar or has bounded treewidth.
Erlebach et al.~\cite{DBLP:conf/icalp/ErlebachKLSS19} showed that temporal graphs can be explored in $\Oh(n^{1.75})$ time steps if each snapshot graph admits a spanning-tree of bounded degree or if one is allowed to traverse two edges per step.
In~\cite{DBLP:journals/acta/ErlebachS22}, Erlebach and Spooner showed that always-connected, temporal graphs where in each snapshot graph at most $k$ edges of the underlying graph are \emph{not} present, can always be explored in $\Oh(kn\log{n})$ time steps. The same authors showed in~\cite{DBLP:conf/mfcs/ErlebachS18} that a temporal graph can be explored in $\Oh(\frac{n^2}{\log{n}})$ time steps, if each snapshot graph has bounded degree.
Bodlaender and van der Zanden~\cite{DBLP:journals/ipl/BodlaenderZ19} showed that the \textsc{TEXP} problem, when restricted to always-connected temporal graphs whose underlying graph has pathwidth at most 2, remains \NP-complete.
Erlebach and Spooner~\cite{DBLP:conf/sand/ErlebachS22} studied the \textsc{TEXP} problem from a parameterized perspective. Based on the color coding technique, they gave an \FPT algorithm parameterized by $k$ for the problems \textsc{$k$-arb TEXP} and the non-strict variant \textsc{$k$-arb NS-TEXP}. They also gave an \FPT algorithm parameterized by the lifetime $L$ of the temporal graph for the problems \textsc{TEXP} and \textsc{NS-TEXP}. 
In the respective long version~\cite{ErlebachS22Journal}, Erlebach and Spooner further studied the parameter maximal number of connected components per time step $\gamma$ and showed that \textsc{TEXP} is \NP-hard for $\gamma = 1$, but \textsc{NS-TEXP} is solvable in polynomial time for $\gamma \leq 2$.
The \textsc{NS-TEXP} problem was introduced and studied by Erlebach and Spooner~\cite{DBLP:conf/sirocco/ErlebachS20}. 
Among other things, they showed \NP-completeness of the general problem, as well as $\Oh(n^{1/2-\epsilon})$ and $\Oh(n^{1-\epsilon})$-inapproximability for computing a foremost exploration schedule under the assumption that the number of time steps required to move between any pair of vertices is bounded by 2, respectively 3.

\begin{toappendix}
Parameterized studies of a related problem where performed by Casteigts et al.~\cite{DBLP:journals/algorithmica/CasteigtsHMZ21} who studied the problem of finding temporal paths between a source and target that wait no longer than $\Delta$ consecutive time steps at any intermediate vertex.
\end{toappendix}
Bumpus and Meeks~\cite{bumpus2022edge}
 considered the parameterized complexity of a graph exploration problem that asks no longer to visit all vertices, but to traverse all \emph{edges} of the underlying graph exactly once. They observed that for natural structural parameters of the underlying graph, their considered exploration problem does not admit \FPT algorithms. Similarly, Kunz, Molter, and Zehavi~\cite{DBLP:journals/corr/abs-2301-10503} obtained several hardness results for structural parameters of the underlying graph when studying the parameterized complexity of the problem of finding temporally disjoint paths and walks, a problem introduced by Klobas et al.~\cite{DBLP:journals/aamas/KlobasMMNZ23}. Their obtained \W[1]-hardness in the parameter number of vertices, that holds even for instances where the underlying graph is a star, indicates that simply considering structural parameters of the underlying graph is not sufficient to obtain \FPT algorithms for temporal graph problems.

\section{Preliminaries}
\noindent
\textbf{Notations.}
We denote by $\mathbb{Z}$ the set of numbers, by $\mathbb{N}$ the set of natural numbers, by $\mathbb{N}_{>0}$ the set of positive numbers, and by $\mathbb{Q}$ the set of rational numbers.
Let $n$ be an positive integer, we denote with $[n]$ the set $\{1, 2, \dots, n\}$.
Given a vector $w = (w_1, w_2, \ldots, w_r) \in \mathbb{Q}^r$, we let
$\|w\|_{\infty} = \max_{i \in [r]} |w_i|$ and $\|w\|_1 = \sum_{i \in [r]} |w_i|$.

\medskip
\noindent
\textbf{Graphs.}
We consider a graph $G = (V, E)$ to be a static undirected graph.
Given a graph~$G$, we denote by $V(G)$ the set vertices of $G$, by $E(G)$
the set of edges of $G$, and by $\gamma(G)$ the number of connected components of $G$.
Let $G = (V, E)$ be a graph, given a subset of vertices $X \subseteq V(G)$ and a subset of
edges $E' \subseteq E$,
we define the following operation on $G$:
$G[X] = (X, \{uv \in E \mid u, v \in X\})$, $G - X = G[V \setminus X]$ and $G - E' = (V, E \setminus E')$.
We call a \emph{walk} in a graph $G$ an alternating sequence 
$W=v_0,e_1,v_1,\dots, e_r,v_r$ of vertices and edges, where $v_0,\ldots,v_r\in V(G)$, $e_1,\ldots,e_r\in E(G)$ and $e_i=v_{i-1}v_i$ for $i\in[r]$; note that $W$ may go through the same vertices and edges several times. A walk without repeated vertices and edges is called a \emph{path}.
We say that $v_0$ and $v_r$ are \emph{end-vertices} of $W$ and $W$ is a \emph{$(v_0,v_r)$-walk}. We use $V(W)\subseteq V(G)$ to denote the set of vertices of $G$ visited by $W$ and denote by $E(W)$ the set of edges of $G$ that are in $W$.
Given a walk $W=v_0,e_1,v_1,\dots, e_r,v_r$, for $0 \leq i \leq j \leq r$, we call the
sequence $W' = v_i, e_{i+1}, \dot, e_j, v_j$ a \emph{subwalk} of $W$.

\medskip
\noindent
\textbf{Temporal graphs.}
A \emph{temporal graph} $\mathcal{G}$ over a set of vertices $V$ is a sequence $\mathcal{G} = (G_1, G_2, \dots, G_L)$ of graphs such that for all $t \in [L], V(G_t) = V$. We call $L$ the \emph{lifetime} of $\mathcal{G}$ and for $t \in [L]$, we call $G_t = (V, E_t)$ the \emph{snapshot graph} of $\mathcal{G}$ at \emph{time step} $t$. We might refer to $G_t$ as $\mathcal{G}(t)$. We call $G = (V, E)$ with $E = \cup_{t \in [L]}$ the \emph{underlying graph} of $\mathcal{G}$.
We denote by $V(\mathcal{G})$ the vertex set of the underlying graph $G$ of $\mathcal{G}$
and by $\gamma(\mathcal{G}) = \max_{t\in L} \gamma(G_t)$ the maximum number of connected components
over all snapshot graphs of $\mathcal{G}$. We write $V$ and $\gamma$ if the graph or temporal graph
is clear from the context.
For a  temporal  graph $\GL$, we define the \emph{total number of edge appearances} as  $\m(\G) = \sum_{i=1}^{L} |E(G_i)|$; we use $\m$ to denote this value if $\G$ is clear for the context.

In the following, we will be interested in temporal walks where the agent has infinite speed within a snapshot graph. Those temporal walks are called \emph{non-strict temporal walks} in~\cite{DBLP:conf/sand/ErlebachS22}.
In~\cite{DBLP:conf/sand/ErlebachS22}, a non-strict temporal walk is define as a sequence of connected components. However, it is more convenient for us to consider a \emph{non-strict temporal walk} as a \emph{monotone} walk in the underlying graph. For a walk $W=v_0,e_1,v_1,\dots, e_r,v_r$ in the underlying graph $G$ of $\GL$, we say that $W$ is \emph{monotone} if there are $t_1,\ldots,t_r\in[L]$ with  $1\leq t_1\leq\cdots\leq t_r\leq L$ such that $e_{t_i}\in E(G_{t_i})$ for each $i\in[r]$.
The definition of a non-strict temporal walk immediately implies the following observation.

\begin{observation}\label{obs:equiv}
Given a temporal graph $\G$ and a vertex $x$,  $\G$ has a non-strict temporal walk starting in $x$ that visits the vertices of a set $X$ if and only if the underlying graph $G$ has a monotone $(x,y)$-walk $W$ for some $y\in V(G)$ such that $X=V(W)$.  
\end{observation}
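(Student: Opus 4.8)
The plan is to prove the equivalence by translating, in both directions, between the two descriptions of a non-strict temporal walk: the agent/component view, in which at time step $t$ the agent explores a connected component $C_t$ of $G_t$ with $x \in V(C_1)$, with $V(C_{t-1}) \cap V(C_t) \neq \emptyset$ for $t \ge 2$, and with visited set $X = \bigcup_{t \in [L]} V(C_t)$; and a monotone walk in the underlying graph $G$. For the direction from a component sequence $(C_1, \dots, C_L)$ to a monotone walk, I would first fix, for each $t$, a vertex $s_t \in V(C_{t-1}) \cap V(C_t)$ (with $s_1 := x$) and set $f_t := s_{t+1}$ for $t < L$ and $f_L := s_L$. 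Since $C_t$ is a connected subgraph of $G_t$, there is a walk $W_t$ inside $C_t$ from $s_t$ to $f_t$ that visits every vertex of $C_t$ — for instance, a closed walk based at $s_t$ that traverses every edge of a spanning tree of $C_t$ twice, followed by the tree path from $s_t$ to $f_t$. The concatenation $W := W_1 W_2 \cdots W_L$ is then a walk in $G$ that starts at $x$ and is monotone, because each edge of $W_t$ lies in $E(G_t)$ and the blocks appear in the order $1 \le 2 \le \dots \le L$; moreover $V(W) = \bigcup_t V(W_t) = \bigcup_t V(C_t) = X$, as required.

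For the converse, suppose $W = v_0, e_1, v_1, \dots, e_r, v_r$ is a monotone $(x,y)$-walk with $v_0 = x$ and $V(W) = X$, witnessed by times $1 \le t_1 \le \dots \le t_r \le L$ with $e_i \in E(G_{t_i})$. Let $m_t$ be the number of indices $i$ with $t_i \le t$, so that $0 = m_0 \le m_1 \le \dots \le m_L = r$. For each $t$, the sub-walk $v_{m_{t-1}}, e_{m_{t-1}+1}, \dots, e_{m_t}, v_{m_t}$ uses only edges whose time label equals $t$, hence it is contained in a single connected component $C_t$ of $G_t$. Then $x = v_{m_0} \in V(C_1)$, and for every $t \ge 2$ the vertex $v_{m_{t-1}}$ lies in both $V(C_{t-1})$ and $V(C_t)$, so consecutive components intersect; thus $(C_1, \dots, C_L)$ is a legal itinerary of the agent starting at $x$, and the set of vertices met along $W$ is $\bigcup_t \{v_{m_{t-1}}, \dots, v_{m_t}\} = V(W) = X$.

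I do not expect a genuine obstacle here: \autoref{obs:equiv} is essentially a reformulation, and both implications become short once the bookkeeping above is in place. The one point that deserves a careful sentence is the interaction between ``exploring a whole component in one step'' and ``traversing a walk'': one has to observe that, since each $C_t$ is connected and consecutive components share a vertex, the vertices an agent can encounter are exactly the vertices of some monotone walk in $G$, and conversely that the maximal blocks of equal time label of a monotone walk each live inside a single component of the corresponding snapshot. This is precisely the equivalence that lets us, in the rest of the paper, replace the agent/component formulation of \textsc{NS-TEXP} and \textsc{$k$-arb NS-TEXP} by the search for a monotone walk in $G$ with a prescribed vertex set.
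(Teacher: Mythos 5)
Your proposal is correct: the paper states \autoref{obs:equiv} without proof, as an immediate consequence of the definition of a monotone walk, and your two translations (covering each component by a closed spanning-tree walk anchored at shared vertices, and cutting a monotone walk into maximal equal-time blocks each lying in one snapshot component) are exactly the routine bookkeeping that justification amounts to. No gap; this matches the paper's intent.
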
 
For the remainder of this paper, we are mainly interested in the computational problem of finding monotone walks that visit all vertices of a temporal graph. We might also call such a walk an \emph{exploration schedule} or simply an \emph{exploration}.
\begin{definition}[\textsc{Non-Strict Temporal Exploration (NS-TEXP)}]
	\ \\
	Input: Temporal graph $\mathcal{G} = (G_1, G_2, \dots, G_L)$, vertex $v \in V(\mathcal{G})$.\\	
    Question: Does there exist a monotone walk  in $\mathcal{G}$ that starts in $v$ and visits all vertices in $V(\mathcal{G})$?
\end{definition}

We further consider a more general variant of the \textsc{NS-TEXP} problem were we ask for a monotone walk that visits at least $k$ vertices (instead of $|V|$). We refer to this problem as \textsc{$k$-arb NS-TEXP}.
\begin{definition}[\textsc{$k$-arbitrary Non-Strict Temporal Exploration ($k$-arb NS-TEXP)}]
	Input: Temporal graph $\mathcal{G} = (G_1, G_2, \dots, G_L)$, vertex $v \in V(\mathcal{G})$, integer $k$.\\	
	Question: Does there exist a monotone walk in $\mathcal{G}$ that starts in $v$ and visits at least $k$ vertices?
\end{definition}

We will further generalize the problem by considering the weighted version of \textsc{$k$-arb NS-TEXP}. We refer to this problem as \textsc{Weighted $k$-arb NS-TEXP}.
\begin{definition}[\textsc{Weighted $k$-arbitrary Non-Strict Temporal Exploration \linebreak[5](Weighted $k$-arb NS-TEXP)}]\ \\
    Input: Temporal graph $\mathcal{G} = (G_1, G_2, \dots, G_L)$, positive-valued weight function $w\colon V(\mathcal{G}) \to \mathbb{N}_{>0}$, vertex $v \in V(\mathcal{G})$, integer $k$.\\	
	Question: Does there exist a monotone walk $W$ in $\mathcal{G}$ such that $W$ starts in $v$ and 
    $w(V(W)) = \sum_{u \in V(W)} w(u) \geq k$?
\end{definition}
Note that \textsc{$k$-arb NS-TEXP} is a special case of \WEX{} with $w(x)=1$ for every $x\in V$.

\medskip
\noindent
\textbf{Parameterized complexity and kernelization.}
We refer to the book~\cite{CyganFKLMPPS15} for a detailed introduction to the field, see also the recent book on kernelization~\cite{kernelizationbook19}. Here, we only briefly introduce basic notions.

\begin{toappendix}
A \emph{parameterized problem} is a language $Q\subseteq \Sigma^*\times\mathbb{N}$ where $\Sigma^*$ is the set of strings over a finite alphabet $\Sigma$. Respectively, an input  of $Q$ is a pair $(I,k)$ where $I\subseteq \Sigma^*$ and $k\in\mathbb{N}$; $k$ is the \emph{parameter} of  the problem. 

A parameterized problem $Q$ is \emph{fixed-parameter tractable} (\classFPT) if it can be decided whether $(I,k)\in Q$ in  $f(k)\cdot|I|^{\Oh(1)}$ time for some function $f$ that depends on the parameter $k$ only. Respectively, the parameterized complexity class \classFPT is composed by  fixed-parameter tractable problems.

Parameterized complexity theory also provides tools to rule-out the existence of \classFPT algorithms under plausible complexity-theoretic assumptions. For this,  a hierarchy of parameterized complexity classes $$\classFPT\subseteq \classW{1}\subseteq  \classW{2}\subseteq  \cdots\subseteq \classXP$$ was introduced in ~\cite{DowneyFellows1992b}, and it was conjectured that the inclusions are proper. The basic way to show that it is unlikely that a parameterized problem admit an \classFPT algorithm is to show that it is $\classW{1}$ or $\classW{2}$-hard.

 A \emph{data reduction rule}, or simply, reduction rule, for a parameterized problem~$Q$ is  a function~$\phi \colon \Sigma^{\ast} \times \mathbb{N} \rightarrow \Sigma^{\ast} \times \mathbb{N}$ that maps an instance~$(I, k)$ of~$Q$ to an 
equivalent instance~$(I', k')$ of~$Q$ such that $\phi$ is computable in time polynomial in~$|I|$ and~$k$. We say that two instances of~$Q$ are \emph{equivalent} if the following holds: $(I, k) \in Q$ if and only if ~$(I',k') \in Q$. We refer to  this property of the reduction rule $\phi$, that it translates an instance to an equivalent one,  as to the \emph{safeness}  of the reduction rule.

Informally, \emph{kernelization} is a { {preprocessing algorithm}} that consecutively applies various data reduction rules in order to shrink the instance size as much as possible. A preprocessing algorithm takes as input an instance $(I,k)\in\Sigma^{*}\times\mathbb{N}$ of $Q$, works in polynomial in~$|I|$ and~$k$ time, and returns an equivalent instance $(I',k')$ of $Q$. The quality of a preprocessing algorithm  $\mathcal{A}$ is measured by the size of the output. More precisely,  the {\em{output size}} of a preprocessing algorithm $\mathcal{A}$ is a function $\textrm{size}_{\mathcal{A}}\colon \mathbb{N}\to \mathbb{N}\cup \{\infty\}$ defined as follows:
$$\textrm{size}_{\mathcal{A}}(k) = \sup \{ |I'|+k'\ \colon\ (I',k')=\mathcal{A}(I,k),\ I\in \Sigma^{*}\}.$$
A {\em{kernelization algorithm}}, or simply a {\em{kernel}}, for a parameterized problem $Q$ is a preprocessing algorithm $\mathcal{A}$ that, given an instance $(I,k)$ of $Q$, works in polynomial in~$|I|$ and~$k$ time and returns an equivalent instance $(I',k')$ of $Q$ such that   $\textrm{size}_{\mathcal{A}}(k)\leq g(k)$ for some computable function $g\colon \mathbb{N}\to \mathbb{N}$. It is said that $g(\cdot)$ is the \emph{size} of a kernel.
If  $g(\cdot)$ is a polynomial function, then we say that $Q$ admits a {\emph{polynomial  kernel}}.

\medskip
\noindent
{\bf Cross-composition.}  It is well-known that every \classFPT problem admits a kernel \cite{DowneyF99}. However,  up to some reasonable complexity assumptions, there are \classFPT problems that have no polynomial kernels. 
In particular, we are using the cross-composition technique introduced in~\cite{BodlaenderDFH09} to show that a parameterized problem does not admit a polynomial kernel unless $\classNP\subseteq\classCoNP/{\rm poly}$.

To define the cross-composition, we start with the definition of polynomial equivalence relation.

\begin{definition}[Polynomial equivalence relation] 
An equivalence relation $\cal R$ on the set $\Sigma^*$ is called a {\em{polynomial equivalence relation}} if the following conditions are satisfied:
\begin{enumerate}[(a)]
\item There exists an algorithm that, given strings $x,y\in \Sigma^*$, resolves whether $x$ is equivalent to $y$ in time polynomial in $|x|+|y|$.
\item
For any finite set $S \subseteq \varSigma^*$ the equivalence relation $\cal R$ partitions the elements of $S$ into at most $(\max_{x \in S} |x|)^{\Oh(1)}$ classes.
\end{enumerate}
\end{definition}

The cross-decomposition of a language into a parameterized language is defined as follows. 

\begin{definition}[Cross-composition] \label{def:cross_comp}
Let $L\subseteq \Sigma^*$ be a language and $Q\subseteq \Sigma^*\times \mathbb{N}$ be a parameterized language. We say that $L$ {\em{cross-composes}} into $Q$ if there exists a polynomial equivalence relation $\cal R$ and an algorithm $\cal A$, called a {\em{cross-composition}}, satisfying the following conditions. The algorithm $\cal A$ takes as input a sequence of strings $x_1,x_2,\ldots,x_t\in \Sigma^*$ that are equivalent with respect to $\cal R$, runs in time polynomial in $\sum_{i=1}^t |x_i|$, and outputs one instance $(y,k)\in \Sigma^*\times \mathbb{N}$ such that:
\begin{enumerate}[(a)]
\item $k\leq p(\max_{i=1}^t |x_i|+\log t)$ for some polynomial $p(\cdot)$, and
\item $(y,k)\in Q$ if and only if there exists at least one index $i$ such that $x_i\in L$.
\end{enumerate}
\end{definition}

The machinery for ruling out the existence of polynomial kernels is based on the following theorem of Bodlaender et al. \cite{BodlaenderDFH09}.  
It says that if some parameterized problem admits a cross-composition into it of some   {\NP}-hard language, then the existence of a polynomial kernel for such a problem is very unlikely.

\begin{proposition}[\cite{BodlaenderDFH09}]\label{crosscollapse}
  Let $L \subseteq \varSigma^*$ be an  {\NP}-hard language. 
 If $L$ cross-composes into a parameterized problem $Q$ and $Q$ has a polynomial kernel,  
  then  $\classNP\subseteq\classCoNP/{\rm poly}$. 
\end{proposition}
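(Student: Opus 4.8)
The plan is to recover the standard proof of \cite{BodlaenderDFH09}: under the stated hypotheses I would splice the cross-composition of $L$ into $Q$ together with the assumed polynomial kernel for $Q$ into an \emph{OR-distillation} of $L$ — an algorithm that, given $t$ strings $x_1,\dots,x_t$, runs in time polynomial in $\sum_i|x_i|$ and outputs a single string whose membership in some fixed language encodes $\bigvee_{i}[x_i\in L]$ and whose length is bounded by a polynomial in $\max_i|x_i|$ \emph{only}, crucially independent of $t$. Since $L$ is \classNP-hard, prepending the polynomial-time reduction from \textsc{SAT} to $L$ turns this into an OR-distillation of an \classNP-hard language, and the ``complementary witness lemma'' of Fortnow and Santhanam then forces $\classNP\subseteq\classCoNP/\poly$. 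So the whole task reduces to building the distillation.

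To build it I would first normalize the input. Put $m=\max_i|x_i|$ and delete duplicate strings, which does not change $\bigvee_i[x_i\in L]$; since there are at most $|\Sigma|^{m+1}=2^{\Oh(m)}$ distinct strings of length at most $m$, afterwards $t\le 2^{\Oh(m)}$ and hence $\log t=\Oh(m)$. Next, partition the (deduplicated) strings into the classes of the polynomial equivalence relation $\mathcal R$ that accompanies the cross-composition: by definition this partition is computable in time $\poly(\sum_i|x_i|)$ and has at most $m^{\Oh(1)}$ classes. To each class I would apply the cross-composition algorithm $\mathcal A$, obtaining one instance $(y_j,k_j)$ of $Q$ with $k_j\le p(m+\log t)=m^{\Oh(1)}$ and with $(y_j,k_j)\in Q$ if and only if some $x_i$ in that class lies in $L$; then I would run the polynomial kernel on $(y_j,k_j)$ to get $(y_j',k_j')$ with $|y_j'|+k_j'\le g(k_j)=m^{\Oh(1)}$ and the same yes/no answer. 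Finally I would bundle the polynomially many pairs $(y_j',k_j')$, each of size $m^{\Oh(1)}$, into a single string $z$ of length $m^{\Oh(1)}$, and let $R$ be the language of such bundles in which at least one listed pair lies in $Q$. Then $z\in R$ iff $\bigvee_i[x_i\in L]$, and $|z|$ depends only on $m$, so $z\mapsto R$ is the desired OR-distillation of $L$.

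The one genuinely delicate point is the dependence on $\log t$: a cross-composition only promises a small \emph{parameter}, and even that bound is allowed to grow with $\log t$, whereas a distillation needs the whole output small and $t$-independent. Deduplication is exactly what caps $\log t$ at $\Oh(m)$; the $m^{\Oh(1)}$ bound on the number of $\mathcal R$-classes is what keeps the final bundling polynomial in $m$; and the polynomial kernel is indispensable for converting each per-class $Q$-instance — whose \emph{size} is a priori unbounded, only its parameter being controlled — into one of size polynomial in that parameter. I would not reprove the Fortnow–Santhanam complementary witness lemma itself: that is the hard technical core and I would cite it as a black box, with everything above being the bookkeeping that connects it to cross-composition.
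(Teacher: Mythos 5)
The paper states this proposition without proof, citing it directly from the cross-composition framework of Bodlaender, Downey, Fellows, and Hermelin, so there is no in-paper argument to compare against; your reconstruction is precisely the standard proof from that line of work (deduplicate to force $\log t=\Oh(m)$, split by the polynomial equivalence relation into $m^{\Oh(1)}$ classes, compose and then kernelize each class, bundle into an OR-distillation, and invoke the Fortnow--Santhanam complementary witness lemma after prepending the reduction from \textsc{SAT}). The argument is correct and correctly isolates the two load-bearing points --- that only the parameter, not the instance size, is controlled by the composition (hence the kernel is indispensable) and that the $\log t$ slack must be capped by deduplication.
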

\end{toappendix}

\section{Lower Bounds}
\subsection*{Lower Bounds for \textsc{NS-TEXP} - \NP-hardness of Restricted Cases}
It was stated as an open problem in~\cite{DBLP:conf/sand/ErlebachS22}, whether \textsc{NS-TEXP} is in \FPT{} with the parameter maximum number of components in any snapshot graph, i.e., with parameter $\gamma$.
We answer this question negatively by showing that \textsc{NS-TEXP} is \NP-complete for $\gamma\geq 5$.

\begin{theorem}
	\label{thm:const-c}
	\textsc{NS-TEXP} is \NP-complete for $\gamma\geq 5$.
\end{theorem}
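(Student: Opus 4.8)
Membership in \NP{} is immediate and does not depend on $\gamma$: a certificate is a sequence $C_1,\dots,C_L$ where $C_t$ is a connected component of the snapshot $G_t$; one checks in polynomial time that $v\in V(C_1)$, that $V(C_t)\cap V(C_{t+1})\neq\emptyset$ for all $t<L$, and that $\bigcup_{t\in[L]}V(C_t)=V(\mathcal{G})$. By the definition of a non-strict temporal walk (equivalently, by \autoref{obs:equiv}), such a sequence exists precisely when $(\mathcal{G},v)$ is a yes-instance. It therefore remains to prove \NP-hardness for instances with at most $5$ connected components per snapshot; hardness for every constant $\gamma\ge 5$ then follows at once, since every such instance is also an instance in which more components are allowed.

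For the lower bound the plan is to reduce from \textsc{3-SAT} (it is convenient to start from a restricted version in which every variable occurs a bounded number of times). Given a formula $\phi$ with variables $x_1,\dots,x_n$ and clauses $D_1,\dots,D_m$, I would build a temporal graph $\mathcal{G}_\phi$ whose lifetime is partitioned into $n$ consecutive \emph{blocks}, the $i$-th block spanning a constant number of time steps and forcing a binary choice for $x_i$. Concretely, block $i$ first routes the agent through a tiny ``fork'' component joining the exit junction $J_{i-1}$ of the previous block to two fresh vertices $f_i$ and $g_i$; in the next snapshot the edges of a \emph{true branch} (a path through $f_i$ and through all clause vertices $d_j$ with $x_i\in D_j$) and of a \emph{false branch} (a path through $g_i$ and all $d_j$ with $\bar x_i\in D_j$) appear as two \emph{vertex-disjoint} components, so that the agent --- which occupies exactly one component per snapshot --- must commit to exactly one of them; a final snapshot of the block reconnects the two branch ends to the entry junction $J_i$ of block $i+1$. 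Since the agent must eventually cover \emph{all} vertices and a clause vertex $d_j$ is reachable only while traversing a branch on which it was placed, $\mathcal{G}_\phi$ admits an exploration starting at $v=J_0$ if and only if the literals chosen across the $n$ blocks satisfy every clause, i.e.\ iff $\phi$ is satisfiable. Correctness in both directions is then routine: a satisfying assignment yields the obvious branch-by-branch schedule, and conversely any schedule must, block by block, select a branch, and covering all clause vertices forces the induced assignment to be satisfying.

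The real work --- and the main obstacle --- is to keep the number of connected components of every snapshot at most $5$. The difficulty is that all vertices that are ``idle'' at a given time step (clause and branch vertices of blocks other than the currently firing one, old junctions, and so on) must be consolidated into a constant number of components, and this consolidation must be robust: a clause vertex recurs in the gadgets of every variable occurring in its clause, so at different time steps different, non-contiguous subsets of vertices get ``pulled out'' of the idle part. The plan is to superimpose an \emph{idle backbone} --- a fixed, highly-connected spanning subgraph on the vertex set, present in every snapshot except on the few vertices that are active in the firing block --- engineered (using that each ``firing set'' has bounded size in the bounded-occurrence formula) so that deleting any firing set leaves only a small constant number of pieces; together with the at most two fork/branch components of the active block this yields the bound $\gamma\le 5$, the precise constant being pinned down by a careful accounting of how the fork, branch, and reconnection snapshots of the active block interact with the backbone. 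The two further points that need care are (i) ensuring the agent is always pushed strictly forward along the blocks and can never ``escape'' onto the idle backbone, so that coverage genuinely encodes clause satisfaction, and (ii) guaranteeing that every backbone vertex is itself covered (it is, since the backbone uses only real junction, fork, branch, and clause vertices, each of which the agent visits as a member of some fork, branch, or reconnection component). Finally, $\mathcal{G}_\phi$ is clearly constructible in polynomial time, which completes the reduction.
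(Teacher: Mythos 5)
Your high-level reduction is essentially the paper's: a variable-by-variable binary choice in which, at the step for $x_i$, the clauses containing $x_i$ sit in one component and the clauses containing $\neg x_i$ in another vertex-disjoint one, with all currently idle vertices consolidated into a constant number of components so that $\gamma$ stays bounded. The \NP-membership argument is fine. But the proposal has a genuine gap exactly where you flag ``point (i)'' and then defer it: you never give a mechanism preventing the agent from stepping \emph{into} the consolidated idle part. Since every clause vertex must live somewhere at every idle time step, the idle backbone necessarily contains a component holding (essentially all) clause vertices most of the time, and that component intersects the branch component the agent has just traversed (via the clause vertices that rejoin it). So nothing in your construction stops the agent from following those clause vertices back into the backbone and exploring \emph{all} clause vertices in one move, which breaks the backward direction: a non-satisfiable formula could still yield a full exploration. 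Your assertion that ``a clause vertex $d_j$ is reachable only while traversing a branch on which it was placed'' is therefore false as stated, and the correctness you call ``routine'' is precisely the non-routine part.

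The paper closes this hole with control vertices, and this is the key idea missing from your write-up. Each variable $x_i$ gets a control vertex $\widehat{x_i}$ that is available in the variable component only at the single time step right after the choice for $x_i$; unvisited control vertices live in a component that only shrinks over time (hence is never reachable from the start), and visited ones move to a component that only grows (hence is a sink one can never leave). Consequently, if the agent ever detours into the clause-collection component instead of returning to the variable component, it misses that step's $\widehat{x_i}$ forever. A final control vertex $\widehat{c}$ is released into the variable component only at the very last time step, which forbids the agent from ever entering the growing sink. Together these force the walk to alternate between the variable component and exactly one literal-branch per variable, so that covering all clause vertices really does encode a satisfying assignment. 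Your ``idle backbone'' would need an analogous one-shot-availability device; as described, engineering the backbone so that deleting the firing set leaves few pieces addresses only the component count, not the soundness of the reduction.
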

%
\begin{proof}
	We give a reduction from the satisfiability problem \textsc{SAT} which asks if a given Boolean formula has a satisfying variable assignment. Let $\varphi = \{c_1, c_2, \dots, c_m\}$ be a Boolean formula in conjunctive normal form over the variable set $X = \{x_1, x_2, \dots, x_n\}$. We construct from $\varphi$ a temporal graph $\mathcal{G}$, where each snapshot graph has 4 or 5 connected components, such that $\mathcal{G}$ has a monotone walk 
	 that visits all vertices in $V(\mathcal{G})$ if and only if $\varphi$ is satisfiable.

	The main idea of the construction is the following.
	In $V(\mathcal{G})$, we have a vertex for each clause, two vertices for each variable $x$, corresponding to $x$ and $\neg x$, a control vertex $\widehat{x}$ for each variable~$x$, and one single control vertex $\widehat{c}$ for the set of clauses. 
	The sequence of snapshot graphs alternates between having four and five connected components.
	In the case of four connected components, one component collects all clause vertices, one component collects all variable vertices $x$ and $\neg x$, one component collects all not yet processed control vertices $\widehat{x}$ and $\widehat{c}$, and one component collects all processed control vertices $\widehat{x}$. 
	In the case of five connected components, an additional component collects a negated literal $\neg x$ of a variable $x$ together with all clauses containing $\neg x$. In this step, the clauses containing $x$ are incorporated into the variable component, which still contains $x$. This will allow us to choose a variable assignment with the exploration schedule. 
	In the next time step, and only in this time step, the control vertex $\hat{x}$ is contained in the variable component. For all later time steps, $\hat{x}$ is contained in the component collecting the processed control vertices.
	Thereby, the control vertices ensure that we return to the component containing the variables in each snapshot consisting of four connected components. 
\begin{toappendix}

	We now give a more formal construction.
    For this construction, we are interesting in the connected components of each snapshot graph
    but not on the actual structure of the connected components. For simplicity, we
    define the connected components as set of vertices and assume that each connected
    component forms a clique. For a subset $S \subseteq V$, we denote
    by $K(S)$ the set of all possible edges between vertices in $S$. Thereby, $(S, K(S))$
    is a clique.
	Let $\mathcal{G} = (G_1, G_2, \dots, G_L)$ with $L = 2n +1$ be the temporal graph constructed from $\varphi = \{c_1, c_2, \dots, c_m\}$. We define $V(\mathcal{G}) = \{c_1, c_2, \dots, c_m\} \cup \{x_i, \neg x_i, \widehat{x_i}\mid x_i \in X\} \cup \{\widehat{c}\}$.
    For $E(G_1) = K(C_{1,1}) \cup K(C_{1,2}) \cup K(C_{1,3})$ we set $C_{1,1} = \{\widehat{c}\} \cup \{\widehat{x_i} \mid x_i \in X\}$, $C_{1,2} = \{c_1, c_2, \dots, c_m\}$, and $C_{1,3} = \{x_i, \neg x_i \mid x_i \in X\}$.
	The start vertex is set to $x_1$.

	For the next time step, we define the edges of the snapshot graph as 
	$E(G_2) = K(C_{2,1}) \cup K(C_{2,2}) \cup K(C_{2,3}) \cup K(C_{2,4})$ where
    $C_{2,1} = C_{1,1} = \{\widehat{c}\} \cup \{\widehat{x_i} \mid x_i \in X\}$,
    $C_{2,2} = \{c_1, c_2, \dots, c_m\} \setminus \{c \in \varphi \mid x_1 \in c \vee \neg x_1 \in c\}$,
    $C_{2,3} = (\{x_i, \neg x_i \mid x_i \in X\} \setminus \{\neg x_1\}) \cup \{c \in \varphi \mid x_1 \in c\}$ and
    $C_{2,4} = \{\neg x_1\} \cup \{c \in \varphi \mid \neg x_1 \in c\}$.
    The intuition is that the exploration schedule visits the vertices in $C_{2,3}$ after
    visiting the vertices in $C_{1,3}$ if the variable $x_1$ gets assigned with \texttt{true}
    and otherwise, visits $C_{2,4}$ after $C_{1,3}$ if $x_1$ gets assigned with \texttt{false}.
    Note that no other connected component is reachable from $C_{1,3}$ as it has an empty intersection with $C_{1,3}$.
	
	In the next time step, we force the exploration schedule to return to the third connected component
    by passing the control vertex $\widehat{x_1}$ through this connected component. Therefore, we define
	$E(G_3) = K(C_{3,1}) \cup K(C_{3,2}) \cup K(C_{3,3})$ with $C_{3,1} = (\{\widehat{c}\} \cup \{\widehat{x_i} \mid x_i \in X\}) \setminus \{\widehat{x_1}\}$, $C_{3,2} = \{c_1, c_2, \dots, c_m\}$, and $C_{3,3} = \{x_i, \neg x_i \mid x_i \in X\} \cup \{\widehat{x_1}\}$. 
	
	For the remaining time steps, we alternate between the structure of the second and third time step with an additional connected component that collects the already processed control vertices $\widehat{x_i}$. As this connected component is monotone growing, it acts as a sinc for the temporal walk, i.e., we could not leave this connected component if we ever enter it. 
	Additionally, the first connected component containing the not yet processed control vertices is monotone shrinking, making it non-accessible from the start vertex of the temporal walk. 
	The idea is now that the last control vertex $\widehat{c}$ will only leave the first component into the variable component in the last time step enforcing us to not go into the sinc component of processed control vertices and thereby enforcing the exploration schedule to return to the variable component for each odd time step $t \geq 3$. We formalize this by defining the snapshot graphs $G_j$ for $3 < j < 2n+1$ as follows. 
	
	First consider the case that $j$ is even. We define
    $E(G_j) = K(C_{j,1}) \cup K(C_{j,2}) \cup K(C_{j,3}) \cup K(C_{j,4}) \cup K(C_{j,5})$ with
    $C_{j,1} = \{\widehat{c}\} \cup \{\widehat{x_i} \mid x_i \in X, i \geq \frac{j}{2}\}$,
    $C_{j,2} = \{c_1, c_2, \dots, c_m\} \setminus \{c \in \varphi \mid x_{\frac{j}{2}} \in c \vee \neg x_{\frac{j}{2}} \in c\}$,
    $C_{j,3} = (\{x_i, \neg x_i \mid x_i \in X\} \setminus \{\neg x_{\frac{j}{2}}\}) \cup \{c \in \varphi \mid x_{\frac{j}{2}} \in c\}$,
    $C_{j,4} = \{\neg x_{\frac{j}{2}}\} \cup \{c \in \varphi \mid \neg x_{\frac{j}{2}} \in c\}$, and
    $C_{j,5} = \{\widehat{x_i} \mid x_i \in X, i < \frac{j}{2}\}$.
	
	For the case that $j$ is odd, we define
	$E(G_j) = K(C_{j,1}) \cup K(C_{j,2}) \cup K(C_{j,3}) \cup K(C_{j,4}) \cup K(C_{j,5})$ with 
	$C_{j,1} = \{\widehat{c}\} \cup \{\widehat{x_i} \mid x_i \in X, i > \frac{j}{2}\}$, $C_{j,2} = \{c_1, c_2, \dots, c_m\}$, $C_{j,3} = \{x_i, \neg x_i \mid x_i \in X\} \cup \{\widehat{x_{\frac{j-1}{2}}}\}$, $C_{j,4} = \emptyset$, and $C_{j,5} = \{\widehat{x_i} \mid x_i \in X, i < \frac{j-1}{2}\}$.
	
	Finally, for the last time step $L = 2n+1$, we define 
	$E(G_L) = K(C_{L,1}) \cup K(C_{L,2})\cup K(C_{L,3}) \cup K(C_{L,4}) \cup K(C_{L,5})$ with
	$C_{L,1} = \emptyset$, $C_{L,2} = \{c_1, c_2, \dots, c_m\}$, $C_{L,3} = \{x_i, \neg x_i \mid x_i \in X\} \cup \{\widehat{c}\}$, $C_{j,4} = \emptyset$, and $C_{j,5} = \{\widehat{x_i} \mid x_i \in X\}$.
	
	Let us now analyze how a potential exploration schedule for $\mathcal{G}$ can look like. Recall that for each time step $t$, the first connected component $C_{t,1}$ is monotonously shrinking, i.e., for $t, t' \in [L]$ with $t \leq t'$ it holds that $C_{t, 1} \supseteq C_{t', 1}$. As the start vertex of the exploration schedule is contained in the third connected component, this means that (i) any exploration schedule cannot visit a first connected component as the connected component $C_{t,1}$ never has a non empty intersection with any connected component $C_{t-1, \ell}$ for $1 < t \leq L$, and $\ell \neq 1$.
	
	Further, recall that the fifth connected component $C_{t, 5}$ is monotonously growing, i.e., for $t, t' \in [L]$ with $t \leq t'$ it holds that $C_{t, 1} \subseteq C_{t', 1}$. Therefore, (ii) an exploration schedule cannot leave any fifth connected component. 
	
	Now consider the vertex $\widehat{c}$. For all time steps $t \in [L-1]$, this vertex is contained in the first connected component $C_{t,1}$ and therefore not reachable from the start vertex. The only time step in which $\widehat{c}$ is in another connected component is the last time step $t=2n+1$ in which $\widehat{c}$ is contained in the third connected component $C_{t,3}$ together with the variable vertices. As an exploration schedule need to reach $\widehat{c}$ and $\widehat{c}$ is never contained in a fifth connected component, observation (ii) implies, that we must never enter a fifth connected component. As all elements in $\{\widehat{x_i} \mid x_i \in X\} \cup \{\widehat{x}\}$ do only appear in a first, third or fifth connected component, observation (i) implies that (iii) we need to visit them in a third connected component. 
	
	We further need to visit all elements in $\{c_1, c_2, \dots, c_m\}$. Assume, we visit some of them for the first time in some second connected component in a time step $t$. By the construction of $\mathcal{G}$, the second connected component is only reachable from a third or fourth connected component in an odd time step. Note that the odd time step $t$ is the only time step in which the control vertex $\widehat{x_{\frac{t-1}{2}}}$ is contained in a third connected component. As by assumption the exploration schedule visits the second connected component in time step $t$ it will not visit $\widehat{x_{\frac{t-1}{2}}}$ in time step $t$. But observation (iii) implies that then, $\widehat{x_{\frac{t-1}{2}}}$ cannot be reached during the remaining walk. Hence, any exploration schedule cannot visit any second connected component and the vertices $\{c_1, c_2, \dots, c_m\}$ must be visited in a third or fourth connected component in an even time step.
	
	We already observed that any exploration schedule must be in the third component in any odd time step. From this component, it is only possible to visit some vertex $c_j$ by traversing to a component containing a literal that is contained in $c_j$. In each even time step~$t$, it is possible to visit those clause vertices $c_j$ that contain a literal from the variable $x_{\frac{t}{2}}$. Thereby, we can either visit those containing the positive literal, or those containing the negative literal. An exploration schedule visiting all clause vertices $\{c_1, c_2, \dots, c_m\}$ thereby corresponds to a satisfying variable assignment for $\varphi$. Conversely, a satisfying variable assignment for $\varphi$ gives us an exploration schedule for $\mathcal{G}$ by traversing to the component containing the satisfied literal of the variable $x_i$ in time step $x_{2i}$.	
\end{toappendix}
\end{proof}

We now shift our focus to restricting the graph class of the underlying graph. 
We show that \textsc{NS-TEXP} is \NP-hard restricted to temporal graphs where the underlying graph is a tree consisting of two stars connected with an edge, or in other words, trees of depth two. 
The vertex cover number of such  trees is two. 
As a corollary, we obtain with a simple adaption of the construction that \textsc{NS-TEXP} is \NP-hard even if every edge appears at most once, or is non-present in at most one time step and the underlying graph is a tree.
\begin{theorem}
	\label{thm:treeNP}
	\textsc{NS-TEXP} is \NP-complete for temporal graphs where the underlying graph consists of two stars connected with a bridge or respectively, trees of depth two.
\end{theorem}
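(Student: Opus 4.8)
The plan is to reduce from \textsc{SAT}, keeping the spirit of the construction behind \autoref{thm:const-c} but engineering the snapshots so that the underlying graph is a double star. Membership in \NP{} is immediate, since an exploration can be certified by the sequence $C_1,\dots,C_L$ of connected components visited in the $L$ snapshots (equivalently, by \autoref{obs:equiv}, by a short monotone walk), and this is checkable in polynomial time; so the real work is the hardness reduction.

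Given a CNF formula $\varphi$ with clauses $c_1,\dots,c_m$ over variables $x_1,\dots,x_n$, I would take as underlying graph the double star with centers $a,b$ and bridge $ab$, where the leaves of $a$ are control vertices $u_1,\dots,u_n$ together with one vertex per clause $c_1,\dots,c_m$, the leaves of $b$ are control vertices $w_1,\dots,w_n$, and the start vertex is $a$. I would set $L=4n+1$ and alternate $2n+1$ \emph{free} steps (the odd time steps) with $2n$ \emph{slot} steps $s_1<\dots<s_{2n}$ (the even ones), giving variable $x_i$ the two slots $s_{2i-1},s_{2i}$. At a free step only the bridge $ab$ is present. At the two slots owned by $x_i$ (and nowhere else) the edges $au_i$ and $bw_i$ are present; the edge $ac_j$ is present exactly at the slots $s_{2i-1}$ with $x_i\in c_j$ and at the slots $s_{2i}$ with $\neg x_i\in c_j$, and at no free step. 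So at a free step the agent may change side, while during a slot step it is confined to one side (no bridge) and there collects whatever leaves of that side are present, using same-time-step detours.

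For correctness I would argue both directions. Since both $u_i$ (a leaf of $a$) and $w_i$ (a leaf of $b$) must be visited and the agent cannot be on both sides during a single slot, any exploration is on side $a$ for exactly one of $x_i$'s two slots; reading $x_i$ as \texttt{true} iff that slot is $s_{2i-1}$ yields a truth assignment $\beta$, and with this reading $c_j$ is reachable precisely when the agent is on side $a$ at a slot where $ac_j$ is present, i.e.\ precisely when $\beta$ satisfies a literal of $c_j$; hence visiting all $c_j$ forces $\beta$ to satisfy $\varphi$. Conversely, from a satisfying assignment I would let the agent follow the matching pattern of sides — realizable because a free step separates every two consecutive slots — making during each slot zero-length (single-time-step) detours $a\to u_i\to a$ or $b\to w_i\to b$ and $a\to c_j\to a$ for the clause leaves then available, which visits every vertex. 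Everything is polynomial in $|\varphi|$, so \textsc{NS-TEXP} restricted to double stars is \NP-hard, and being in \NP{} it is \NP-complete; since a double star is a tree of depth two, the restatement follows.

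The delicate point will be calibrating the choice gadget: one has to check that the pair $(u_i,w_i)$, each present exactly at $x_i$'s two slots, forces an \emph{exclusive} choice without over- or under-constraining the schedule — the ``under'' direction is exactly where the bridge being absent at slot steps is used, and the ``over'' direction needs the free steps to be frequent enough to realize every assignment — and that no clause leaf can be collected outside the slots of its literals, which is why all clause edges are absent at free steps and why one must track where the agent sits at step boundaries. Once this bookkeeping is in place the result follows; the related statements for temporal trees in which every edge is missing from at most one snapshot should follow by an analogous but suitably padded construction.
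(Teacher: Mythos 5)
Your construction is correct, but it takes a genuinely different route from the paper's. The paper reduces from \textsc{Monotone 3SAT}: monotonicity is exactly what makes the underlying graph a double star, because each clause vertex attaches to $\top$ if the clause is positive and to $\bot$ if it is negative, variables are encoded purely in the timeline with no control vertices at all, and the assignment is simply read off from which center's component the walk occupies at each even step. You instead reduce from general \textsc{SAT}, hang \emph{all} clause vertices on one center, and manufacture the exclusive per-variable choice with the control-leaf pair $(u_i,w_i)$ whose edges appear only at $x_i$'s two slots; since the bridge is absent during slots and both leaves must be collected, the walk is forced onto side $a$ at exactly one of the two slots, and the slot index encodes the truth value. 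Both arguments are sound (your exclusivity and first-visit arguments go through even if the walk parks in a singleton leaf, since a singleton can only be entered after its vertex is already visited), and both yield the same graph class. What your version buys is independence from the monotone restriction, at the cost of $2n$ extra control vertices and a timeline of length $4n+1$ rather than $2n$; what the paper's version buys is a leaner gadget-free encoding, at the cost of starting from a more specialized \NP-hard problem. One caveat: the paper's follow-up corollaries (e.g.\ \autoref{cor:single}) lean on specific features of its construction, such as each clause edge appearing at most three times, so your closing remark that the padded variants ``should follow analogously'' would need its own verification — in your construction $ab$ appears $2n+1$ times and $ac_j$ once per literal occurrence, so the same subdivision tricks apply but the bookkeeping is not identical.
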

\begin{toappendix}
\begin{proof}
    We give a reduction from \textsc{Monotone 3SAT}~\cite{DBLP:journals/iandc/Gold78,DBLP:journals/dam/Li97a}, which asks if a given Boolean formula in conjunctive normal form with three literals per clause (3CNF) where each clause either contains only positive or only negative literals (monotone) has a satisfying variable assignment. Let $\varphi = \{c_1, c_2, \dots, c_m\}$ be a monotone Boolean formula in 3CNF over the variable set $X = \{x_1, x_2, \dots, x_n\}$. We construct from $\varphi$ a temporal graph $\mathcal{G}$ with an underlying graph that meets the claimed restriction, such that $\mathcal{G}$ has a non-strict temporal walk starting in some vertex $v$ that visits all vertices in $V(\mathcal{G})$ if and only if $\varphi$ is satisfiable.
	
	Let $V(\mathcal{G}) = \{c_1, c_2, \dots, c_m\} \cup \{\top, \bot\}$. The vertices $\top$ and $\bot$ will be the center of the two stars. We connect $\top$ in the underlying graph with each vertex $c_i$ that corresponds to a positive clause, i.e., a clause containing only positive literals. Respectively, we connect $\bot$ in the underlying graph with each vertex $c_j$ corresponding to a negative clause. Finally, we connect $\top$ and $\bot$ with an edge and set $\top$ as the start vertex $v$. 
	
	The variables of $\varphi$ are not represented as vertices but instead encoded in the time steps as follow. For each odd time step in $[2n]$ we only let the edge between $\top$ and $\bot$ be present. For each even time step $t \in [2n]$, we let exactly those edges be present that connect $\top$ with a clause $c_i$ containing the positive literal $x_{t/2}$; and that connect $\bot$ with a clause $c_j$ containing the negative literal $\neg x_{t/2}$.
	Observe that as $\varphi$ is monotone, each clause vertex has degree one and as $\varphi$ is in 3CNF, each edge incident to a clause vertex appears at most three times. Further, the constructed underlying graph consists of two stars with a connecting bridge.
	
	Intuitively, in each odd time step $t$, we can traverse from $\top$ to $\bot$ or vise versa and thereby choose the assignment of the variable $x_{(t+1)/2}$. In the next even time step, we can then visit all clauses containing a literal that is satisfied under the chosen assignment of $x_{(t+1)/2}$. Clearly, if $\varphi$ is satisfiable, the described strategy gives a non-strict temporal walk that visits all vertices. Conversely, let $W$ be a non-strict temporal walk in $\mathcal{G}$ that visits all vertices in $V(\mathcal{G})$. We obtain from $W$ a variable assignment for $\varphi$ as follows. For each even time step $t \in [2n]$, we assign $x_{t/2}$ with true if $W$ is in a component containing $\top$ in time step $t$, we assign $x_{t/2}$ with false if $W$ is in a component containing $\bot$ in time step $t$, and arbitrarily with true or false if $W$ is in any other (singleton) component. Intuitively, the last case corresponds to a situation where the assignment of $x_{t/2}$ is irrelevant for the satisfiabiltiy of $\varphi$. As by assumption, $W$ visits all vertices, it visits all clause vertices. As we read the variable assignment from $W$ according to the visits of the clause vertices, it is clear that the obtained variable assignment satisfies $\varphi$.	
\end{proof}
\end{toappendix}


\begin{corollary}\label{cor:single}
	\textsc{NS-TEXP} is \NP-complete even if every edge of the input temporal graph appears at most once.
\end{corollary}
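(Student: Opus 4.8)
I would obtain the corollary by adapting the reduction from \textsc{Monotone 3SAT} behind \autoref{thm:treeNP}. In that construction only two families of edges repeat: the bridge $\top\bot$, which is present in all $n$ odd steps, and the star edges $\top c_i$ (resp.\ $\bot c_i$), each present at the even step of every variable occurring in $c_i$ and hence at most three times. The plan is to kill both repetitions by subdivision. At odd step $2v-1$ the bridge becomes a path $\top - d_v - \bot$ through a new vertex $d_v$. For every occurrence of a variable $x_v$ in a clause $c_i$ I introduce a new vertex $e_{i,v}$, and at even step $2v$ I replace $\top c_i$ (for a positive clause) by the path $\top - e_{i,v} - c_i$, and $\bot c_i$ (for a negative clause) by $\bot - e_{i,v} - c_i$. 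Every edge of the resulting temporal graph is then determined by its pair of endpoints together with a single time step, so no edge appears twice.

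The price of subdivision is that the new vertices must themselves be visited, and an $e_{i,v}$ is reachable at step $2v$ only on the side of the snapshot matching one truth value of $x_v$; making all of them mandatory there would force every variable to the same value. I avoid this by additionally linking, at odd step $2v-1$, every $e_{i,v}$ for variable $x_v$ (of either polarity) to $d_v$, so that the unique non-trivial component of $G_{2v-1}$ becomes the star centered at $d_v$ with leaves $\top$, $\bot$, and all such $e_{i,v}$, while all clause vertices stay isolated there. Then $d_v$ and the subdivision vertices are visited at the odd steps anyway, exactly as $\top$ and $\bot$ were in the original construction, whereas a clause vertex can still only be reached at the even step of one of its variables and only on the side chosen at that step.

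Correctness then splits as usual. If $\alpha$ satisfies $\varphi$, the walk sits in the star component at every odd step $2v-1$ (collecting $\top$, $\bot$, $d_v$ and all $e_{i,v}$ for $x_v$) and at even step $2v$ moves into the $\top$-side component if $\alpha(x_v)$ is true and into the $\bot$-side component otherwise; both sides contain $\top$ or $\bot$ and hence meet the preceding star component, so the moves are legal, and since $\alpha$ is satisfying every clause vertex is visited at one of these even steps. Conversely, from a monotone walk visiting all vertices: since all edges incident to $d_v$ live only at step $2v-1$, the walk is inside the star component of $G_{2v-1}$ there; and since all edges incident to a clause vertex $c_i$ live only at the even steps of the variables of $c_i$, the walk reaches $c_i$ at one such step $2v$ and is then inside the component of $G_{2v}$ containing $c_i$, which contains $\top$ when $c_i$ is positive and $\bot$ when $c_i$ is negative. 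Setting $\alpha(x_v)$ to be the side the walk uses at step $2v$ to visit clause vertices (arbitrarily if it visits none) yields a well-defined assignment — well-defined because within one snapshot the walk cannot leave a connected component — which satisfies every clause.

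The delicate point is the converse direction: one must rule out that the walk, by waiting at vertices or drifting onto momentarily isolated clause vertices, visits all clauses without inducing a consistent assignment. The two facts that make this go through are that inside a single snapshot the walk stays within a connected component, and that a vertex all of whose incident edges appear in one snapshot forces the walk into that vertex's component at that snapshot; together these pin the trajectory down to the odd-step stars and the even-step side components. (Note that the underlying graph is no longer a tree, since the vertices $d_v$ create $4$-cycles through $\top$ and $\bot$; this is why the tree case is stated separately.)
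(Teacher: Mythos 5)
Your proof is correct and follows the same overall strategy as the paper's: start from the \textsc{Monotone 3SAT} construction of \autoref{thm:treeNP} and subdivide every repeated edge so that each copy lives in a single snapshot. The bridge handling is essentially identical (your $d_v$ plays the role of the paper's new vertices $v_1,v_3,\dots,v_{2n-1}$, one per odd step). Where you genuinely diverge is at the one delicate point, namely how the subdivision vertices $e_{i,v}$ of the clause edges --- which at their ``own'' even step $2v$ sit on only one side of the truth-value split and hence cannot all be collected there --- are guaranteed to be visited. The paper appends one extra snapshot at the end of the lifetime in which all of these vertices form a cycle, so they are swept up in a final cleanup step; you instead attach each $e_{i,v}$ to the odd-step star centered at $d_v$, so they are collected for free during the variable-choice steps that the $d_v$'s already force the walk to attend. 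Both fixes are sound, and both destroy the tree structure of the underlying graph (which the corollary does not require). Your variant keeps the lifetime at $2n$ and yields a slightly more rigid trajectory --- the walk provably alternates between the odd-step star and an even-step side component, which makes the converse direction cleaner --- while the paper's is more modular in that it leaves the original snapshots untouched and isolates the repair in one final time step.
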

\begin{toappendix}
\begin{proof}
	In the construction of \autoref{thm:treeNP} the edges between clause vertices and $\{\top, \bot\}$  appear at most three times. The only edge that appears a linear number of times is the edge $\{\top, \bot\}$. We slightly change the construction by deleting the edge $\{\top, \bot\}$ and introducing $n$ many new vertices $v_1, v_3, \dots, v_{2n-1}$ that are connected with an edge to $\top$ and with an edge to $\bot$. Then, in each odd time step $t\in [2n]$, we only let the edges incident to $v_{t}$ be present. Thereby, those edges appear only once. 
	It can easily be observed, that if there is an exploration schedule for the constructed graph, then there is one that never waits in a clause vertex. 
	Hence, we visit the new vertices connected to $\top$ and $\bot$ in the odd time steps.
	
	It remains to replace the edges connecting $\top$ and $\bot$ with the clause vertices. Those edges appear at most three times. Again, we replace each of those edges by a path of length two for each time step in which the original edge appears. In contrast to the edge between $\bot$ and $\top$, we are not guaranteed to be in the connected component containing an edge incident to a clause vertex in the time step in which it appears. Thereby, we are not guaranteed to visit all of the newly introduced vertices adjacent to clause vertices. 
	To counter this, we extend the sequence of snapshot graph by one time step at the end. We connect the newly introduced vertices which are adjacent to clause vertices in a cycle which edges are only present in this one last time step. Thereby, we ensure that we visit all introduced vertices at the end without being able to visit additional clause vertices.
	
	Note that this modification yields an underlying graph that is no longer a tree.
\end{proof}
\end{toappendix}


\begin{corollary}
	\textsc{NS-TEXP} is \NP-complete even if the underlying graph is a tree and every edge of the input temporal graph is not present in at most one time step.
\end{corollary}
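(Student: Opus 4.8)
The plan is to start from the \textsc{Monotone 3SAT} reduction behind \autoref{thm:treeNP} and \emph{subdivide} its edges: subdividing keeps the underlying graph a tree, and it lets us swap the roles of ``present'' and ``absent''. In that reduction the bridge $\top\bot$ is present exactly at the $n$ odd steps, and each clause edge --- say $\top c$ for a positive clause $c$ on variables $x_a,x_b,x_d$ --- is present exactly at the three steps $2a,2b,2d$, so both are absent at many steps. I would replace the bridge by a path $\top=w_0-w_1-\cdots-w_{n-1}-w_n=\bot$ on $n$ edges, replace each positive clause edge $\top c$ by a path $P_c=\top=y^c_0-y^c_1-\cdots-y^c_{2n-3}=c$ on $2n-3$ edges (assuming, as usual, that the three variables of a clause are distinct), and symmetrically replace each negative clause edge $\bot c$ by a path $Q_c$ rooted at $\bot$. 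The start vertex stays $\top$, the lifetime stays $2n$, and the underlying graph, being a subdivision of the tree of \autoref{thm:treeNP}, is again a tree.

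Next I would pin down the unique ``absent moment'' of each new edge. At an odd step $2i-1$ no bridge-path edge is deleted while one edge of every $P_c$ and every $Q_c$ is deleted; at an even step $2i$ the bridge-path edge $w_{i-1}w_i$ is deleted, every $P_c$ with $x_i\in c$ is left intact, and for every positive clause $c$ with $x_i\notin c$ one edge of $P_c$ is deleted (symmetrically for the $Q_c$). The $2n-3$ steps at which some edge of a fixed $P_c$ gets deleted are exactly its ``bad'' steps $t_1<t_2<\cdots<t_{2n-3}$, and the decisive point is that I delete them \emph{in order from $\top$ towards $c$}: at step $t_k$ the $k$-th edge $y^c_{k-1}y^c_k$ is removed. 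With this schedule every edge of the temporal graph is absent in at most one snapshot, and (since, for instance, every $P_c$ is cut already at step $1$) no snapshot is connected, so the instance is not trivially a yes-instance.

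For correctness I would keep the intended reading of \autoref{thm:treeNP}: at step $2i$ the bridge is cut, so the walk must decide whether it sits in the component of $\top$ (interpreted as setting $x_i$ to true) or of $\bot$ (setting $x_i$ to false), while at an even good step of a positive clause $c$ the whole path $P_c$ --- hence $c$ and all its subdivision vertices --- lies in the component of $\top$. A satisfying assignment then yields an exploration: the walk visits all $w_i$ at step $1$ and at each step $2i$ sits on the side prescribed by the assignment, thereby sweeping every $P_c$ (resp.\ $Q_c$) whose clause is satisfied through $x_i$; as every clause is satisfied, every vertex is reached. Conversely, I would show that a positive clause vertex $c$ can be reached \emph{only} while the walk is in the component of $\top$ at one of the three good steps $2a,2b,2d$ of $c$, and symmetrically a negative clause vertex only via $\bot$; reading off the assignment that sets $x_i$ to true exactly when the walk is in the component of $\top$ at step $2i$, a full exploration then forces a satisfying assignment.

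The main obstacle is proving precisely this last claim --- that subdivision does not hand the walk extra reach by letting it ``creep along'' a clause path during the steps when that path should be detached from $\top$ --- and this is where the monotone deletion order is used. At a bad step $t_k$ the path $P_c$ splits into a prefix $\{\top,y^c_1,\dots,y^c_{k-1}\}$ (still joined to the rest of the tree at $\top$) and a far part $\{y^c_k,\dots,c\}$, and because we delete from $\top$ towards $c$ these far parts only shrink as $k$ grows; hence at step $t_{k-1}$ the component of $\top$ meets $P_c$ only in $\{\top,y^c_1,\dots,y^c_{k-2}\}$, which is disjoint from the far part at step $t_k$. So the walk can enter a far part of $P_c$ only through a good step of $c$, at which $P_c$ is fully reattached to $\top$; formalising this by looking at the first step at which $c$ lies in the walk's component yields the claim. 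Together with the routine facts that the construction has polynomial size and that \textsc{NS-TEXP} is in \NP, this completes the proof, and the ``tree'' and ``absent in at most one step'' conditions are built into the construction.
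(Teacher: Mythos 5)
Your proposal is correct and follows essentially the same route as the paper: both start from the \autoref{thm:treeNP} construction, subdivide each edge into a path, and schedule the single absence of each subdivision edge so that the absent edge sweeps monotonically along the path, which is exactly what prevents the walk from creeping through to a clause vertex outside that clause's good steps. The only difference is cosmetic --- the paper uses uniform paths of length $4n$ with two absent edges per bad step converging symmetrically towards the middle, whereas you use a one-sided sweep on a path whose length equals the number of bad steps; both yield the same invariant that the piece containing the clause vertex can only be entered via a good step, at which point the walk sits in the component of $\top$ (resp.\ $\bot$).
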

\begin{toappendix}
\begin{proof}
	We start from the construction in the proof of \autoref{thm:treeNP}. Recall that the lifetime of the constructed temporal graph is $2n$. Now we replace each edge by a path of length $4n$ (on new vertices).
	We define the appearance of the edges in the paths as follows. Let $\pi$ be a path replacing the edge $e$. Then, for each time step $t$ in which $e$ is present, we set all edges in $\pi$ to be present. If $e$ is not present in time step $t$, then we set the $t$'th and $4n-t$'th edge in $\pi$ to be not present and all other edges in $\pi$ to be present. Thereby, the path $\pi$ cannot be crossed in time step $t$. As the non existing edges are wandering towards the middle of the path with increasing time steps, we avoid glitching through the path by waiting on one side of a currently non-existing edge.
	
	Finally, we need to ensure that all newly introduced vertices can be visited while keeping the properties of the reduction. It is clear that by the introduced modification we cannot visit any additional clause vertex. Further, the vertices of the path between $\bot$ and $\top$ are visited in the first time step. For the vertices on a path to a clause vertex, observe that by the construction, we are forced to visit all clause vertices eventually, and we can visit them in the same time steps as in the original construction. As the clause vertices are of degree one, for each clause vertex, there is only one path leading to it. Hence, all vertices on this path will be visited when we visit the respective clause vertex. This concludes the proof.
\end{proof}

\subsubsection*{Lower Bounds for \textsc{NS-TEXP} - Kernels}
\end{toappendix}

For the combined parameter $\gamma + L$, there exists a trivial \FPT-algorithm for the problem \textsc{$k$-arb NS-TEXP} (and so for \textsc{NS-TEXP}) as this parameter bounds the size of a search tree, where $\gamma$ is the branching factor and $L$ is the depth of the tree. In this tree we consider for each time step all connected components that are reachable from the current connected component, starting with the component containing the start vertex $v$.
\begin{observation}\label{obs:fpt_k_arb}
    \textsc{$k$-arb NS-TEXP} parameterized by $\gamma + L$ can be solved in
    $\Oh(\gamma^L n^{\Oh(1)})$ time.
\end{observation}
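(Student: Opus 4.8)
The plan is to solve \textsc{$k$-arb NS-TEXP} by exhaustively searching a rooted tree whose nodes record, for a prefix of the snapshots, which connected component the agent currently occupies. Recall that a non-strict temporal walk is completely described by a sequence $C_1, C_2, \dots, C_L$ in which $C_i$ is a connected component of $G_i$, $v \in C_1$, and $C_i \cap C_{i-1} \neq \emptyset$ for all $i \in \{2, \dots, L\}$; by \autoref{obs:equiv} the set of vertices visited by the corresponding monotone walk in the underlying graph equals $\bigcup_{i=1}^{L} V(C_i)$. Thus a given instance is a \textsc{yes}-instance precisely when such a sequence with $|\bigcup_{i=1}^{L} V(C_i)| \ge k$ exists, and it suffices to enumerate all such sequences.

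First I would compute, in $n^{\Oh(1)}$ time, the connected components of every snapshot $G_1, \dots, G_L$ and, for every consecutive pair $G_{i-1}, G_i$, the bipartite ``intersection'' relation between their components. I would then build the search tree: its root is the component $C_1$ of $G_1$ containing $v$, and a node at depth $i-1$ holding a component $C_{i-1}$ of $G_{i-1}$ receives one child for each component $C_i$ of $G_i$ with $C_i \cap C_{i-1} \neq \emptyset$. Since each snapshot has at most $\gamma$ connected components, every node has at most $\gamma$ children, and the tree has depth $L$, so it has $\Oh(\gamma^{L})$ nodes in total; the intersection relations precomputed above let each child list be produced in polynomial time.

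Finally, I would perform a depth-first traversal carrying along each root-to-leaf path the running union $U_i = \bigcup_{j \le i} V(C_j)$ of vertices visited so far, updated in $n^{\Oh(1)}$ time per node, and answer \textsc{yes} exactly when some reached node satisfies $|U_i| \ge k$ (one may halt as soon as this occurs, but even without that shortcut the running time is $\Oh(\gamma^{L} n^{\Oh(1)})$). I expect no genuine obstacle: the only point worth stating is that branching over the connected components of the next snapshot — rather than over individual moves of the agent — loses nothing, which is exactly the characterisation of non-strict temporal walks recalled above. Correctness of the search then follows since every candidate exploration corresponds to a root-to-leaf path of the tree and conversely, and the vertex set it visits is the union accumulated along that path.
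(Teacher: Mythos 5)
Your proposal is correct and follows exactly the argument the paper gives: a search tree of depth $L$ with branching factor $\gamma$, enumerating sequences of pairwise-intersecting connected components starting from the component of $v$, while accumulating the union of visited vertices. The paper states this only as a brief sketch; your elaboration fills in the same details without deviating from it.
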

In contrast, using a reduction from \textsc{Hitting Set}, we obtain that \textsc{NS-TEXP} does not have a polynomial kernel in the same parameters unless \NP{} $\subseteq$ \coNP/\poly. In fact, we show a stronger statement: if $\gamma$ is constant, there does not exists a polynomial kernel in $L$ under the same assumption.
\begin{theorem}%
	\label{thm:no-kernel-L}
	Unless \NP $\subseteq$ \coNP/\poly, there does not exist a polynomial kernel for \textsc{NS-TEXP} in the parameter $L$ for $\gamma \geq 6$ being constant.
\end{theorem}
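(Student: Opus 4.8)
The plan is to rule out a polynomial kernel by a cross-composition (in the sense of \autoref{def:cross_comp}) from \textsc{SAT}, which is \NP-hard, into \textsc{NS-TEXP} parameterized by $L$, making sure that every instance produced has at most five connected components per snapshot (so in particular $\gamma\le 6$); by \autoref{crosscollapse} this gives the claim. The workhorse is \autoref{thm:const-c}: it turns an arbitrary CNF formula $\Phi$ on $N$ variables into a temporal graph with a designated start vertex, all of whose snapshots have at most five components and whose lifetime is $2N+1$, and which is a yes-instance of \textsc{NS-TEXP} iff $\Phi$ is satisfiable. So it suffices to OR-compose $t$ input formulas into a single CNF formula $\Phi$ with only $N=O(n+\log t)$ variables, where $n$ bounds the number of variables of each input: applying \autoref{thm:const-c} to $\Phi$ then yields an \textsc{NS-TEXP} instance with $L=2N+1=O(n+\log t)$ and $\gamma\le 5$, which is exactly the parameter bound a cross-composition needs.

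First I would fix the polynomial equivalence relation: put all malformed strings in one class, and declare two well-formed CNF formulas equivalent exactly when they have the same number of variables and the same number of clauses (polynomial-time decidable, and inducing at most $(\max_{x\in S}|x|)^{2}+1$ classes on a finite set $S$); the malformed class composes trivially to a fixed no-instance. Given equivalent formulas $\varphi_1,\dots,\varphi_t$, each on variables $\{x_1,\dots,x_n\}$ with $m$ clauses, I would pad with dummy unsatisfiable formulas of the same format until their number is a power of two $t'=2^{s}$, add fresh index variables $y_1,\dots,y_s$, and set
\[
  \Phi \;=\; \bigwedge_{i=1}^{t'}\ \bigwedge_{c\in\varphi_i}\bigl(\mathrm{NEQ}_i(y)\vee c\bigr),
\]
where $\mathrm{NEQ}_i(y)$ is the single clause $\bigl(\bigvee_{\ell:\,i_\ell=0}y_\ell\bigr)\vee\bigl(\bigvee_{\ell:\,i_\ell=1}\neg y_\ell\bigr)$ asserting $y\neq i$. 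This $\Phi$ has $N=n+s$ variables, at most $t'm$ clauses, and is built in time polynomial in $\sum_i|\varphi_i|$. The OR-property is immediate: in any assignment, read the bit-vector $y$ as an index $i^\star$; every conjunct with $i\neq i^\star$ is satisfied through its $\mathrm{NEQ}_i(y)$ part, while the conjuncts with $i=i^\star$ collapse to $\varphi_{i^\star}$, so $\Phi$ is satisfiable iff some $\varphi_{i^\star}$ is, hence (the dummies being unsatisfiable) iff some original $\varphi_i$ is. I would then feed $\Phi$ into \autoref{thm:const-c}, output the resulting pair $(\mathcal G,v)$ together with the parameter $L=2N+1=2n+2\lceil\log_2 t\rceil+1$, and note that $L\le p\bigl(\max_i|\varphi_i|+\log t\bigr)$ for a suitable polynomial $p$ since $|\varphi_i|\ge n$. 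Applying \autoref{crosscollapse} to this cross-composition of \textsc{SAT} into \textsc{NS-TEXP} parameterized by $L$ (all outputs having $\gamma\le 5\le 6$) yields the theorem. One could equally start from \textsc{Hitting Set}, which is also \NP-hard; the index-variable trick is the same.

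I expect the only delicate point to be checking that the reduction behind \autoref{thm:const-c} applies verbatim to $\Phi$. That reduction is stated for arbitrary CNF, so the wide clauses $\mathrm{NEQ}_i(y)\vee c$ of width $s+|c|$ cause no trouble; what I would verify carefully is the handling of degenerate inputs ($t=1$, giving no index variables and $\Phi=\varphi_1$; formulas with no variables or containing an empty clause) and the fact that the padding formulas can indeed be realized on exactly $n$ variables and $m$ clauses while staying unsatisfiable. Everything else — the equivalence-relation conditions, the polynomial running time of the composition, and the parameter inequality — is routine bookkeeping. As a by-product the construction never exceeds five components per snapshot, so the hypothesis $\gamma\ge 6$ in the statement is not tight and could be relaxed to $\gamma\ge 5$.
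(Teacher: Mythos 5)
Your proof is correct, but it takes a genuinely different route from the paper's. The paper does not use a cross-composition here: it gives a single polynomial-parameter reduction from \textsc{Hitting Set} parameterized by the universe size $n$ (known to admit no polynomial kernel unless $\NP\subseteq\coNP/\poly$), building a bespoke six-component gadget with $k$ phases of length $n+1$, so $L=k(n+1)+1=\Oh(n^2)$; a polynomial kernel in $L$ would then compress \textsc{Hitting Set} in $n$. You instead cross-compose \textsc{SAT} into itself at the formula level via the standard index-variable OR-trick ($N=n+\lceil\log_2 t\rceil$ variables) and then reuse the reduction of \autoref{thm:const-c} as a black box. Both are sound; yours has the advantages of recycling an already-proven construction, producing instances with $\gamma\le 5$ rather than $6$ (so the constant in the statement improves, as you note), and keeping $L$ linear in $n+\log t$, while the paper's argument is self-contained at the temporal-graph level and showcases the \textsc{Hitting Set} machinery. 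The only points you should nail down are the ones you flag yourself: the reduction of \autoref{thm:const-c} is indeed stated for arbitrary CNF, so wide clauses are fine, and for the padding to a power of two it is cleanest to pad with copies of one of the given instances (which trivially preserves the OR) rather than constructing unsatisfiable dummies of a prescribed format.
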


\begin{toappendix}
We show the claim by a reduction from the \textsc{Hitting Set} problem.
\begin{definition}[\textsc{Hitting Set}]
		\ \\
	Input: Universe $\mathcal{U} = \{x_1, x_2, \dots, x_n\}$, collection of sets $\mathcal{S} = \{S_1, S_2, \dots, S_m\} \subseteq \mathcal{U}$, integer $k$.\\	
	Question: Does there exist $H \subseteq \mathcal{U}$ such that $|H|\leq k$ and $H \cap S_i \neq \emptyset$ for all $1\leq i \leq m$?
\end{definition}
It was shown in~\cite[Thm. 5.3]{DBLP:journals/talg/DomLS14} that \textsc{Hitting Set} parameterized by the size of the universe has no polynomial kernel unless \NP $\subseteq$ \coNP /\poly.

\begin{proof}
Let $\mathcal{U} = \{x_1, x_2, \dots, x_n\}$, $\mathcal{S} = \{S_1, S_2, \dots, S_m\} \subseteq \mathcal{U}$, and $k$ be an instance of \textsc{Hitting Set}. We construct a temporal graph $\mathcal{G}$ with vertex set $V(\mathcal{G}) = \mathcal{U} \cup \mathcal{S} \cup \bigcup_{1 \leq i \leq k} \mathcal{U}^i$, where $\mathcal{U}^i$ is an annotated copy of $\mathcal{U}$ where each element $x_j$ is annotated as $x_j^i$.

The lifetime $L$ of $\mathcal{G}$ is divided into $k$ phases where each phase consists of a choice of one element $x \in \mathcal{U}$ of the hitting set and the visit of all vertices corresponding to sets in $\mathcal{S}$ containing the element $x$.
During each of the $k$ phase, a annotated copy $\mathcal{U}^i$ is used as control vertices to enforce some structure of the exploration schedule.
Similar to \autoref{thm:const-c}, we are interesting in the vertices of the connected components and not in theirs structures. Thereby, we define the connected components as set of vertices and assume that each connected component forms a clique. As in \autoref{thm:const-c}, we keep a constant number of components throughout the construction.

Now, we give an intuitive description of the vertices contained in each connected component in the snapshot graph at time step $t$. The first connected component $C_{t,1}$ contains the unused control vertices of $\bigcup_{1 \leq i \leq k}\mathcal{U}^i$ at time step $t$. This component will only shrink through the lifetime of $\mathcal{G}$.
The second connected component $C_{t,2}$ contains all elements from $\mathcal{S}$ which are currently not processed.
The last connected component $C_{t,6}$ contains the elements from $\bigcup_{1 \leq i \leq k}\mathcal{U}^i$ that have already been used and will only increase through the lifetime of $\mathcal{G}$.
The three other connected components are used to select an element $x$ of $\mathcal{U}$ and process the set of $S$ containing $x$ during each of the $k$ phases. They are used as follows.
At the start of a phase, the third connected component $C_{t,3}$ contains all the elements
from $\mathcal{U}$. Then, during a phase, at each time step, an element $x \in \mathcal{U}$
is moved from the third component $C_{t,3}$ to the forth $C_{t,4}$ together with all the set
in $S$ containing $x$. In the next time step, $x$ is move to the fifth connected component
$C_{t+1,5}$ where it will stay until the end of the phase.
We now define the components $C_{t,3}$, $C_{t,4}$, and $C_{t,5}$ more formally.

Each phase contains $(n+1)$ time steps.
Let $t = (i-1) \cdot (n+1)$, for $i \in [k]$ be the first time step of the $i$th phase.
By staying on an element $x$ while it leaves the third component within next $n$ time step, the $i$th choice of an element of $\mathcal{U}$ is taken. Therefore, in time step $t$, $C_{t,3} = \mathcal{U} \cup \{x_n^{i-1} \mid \text{if } i-1>0\}$, i.e., all elements of $\mathcal{U}$ are available and no element is currently processed.
The role of the added element $x_n^{i-1}$ becomes clear later. It can be thought of as an overhang from the previous phase.
The components $C_{t,4}$ and $C_{t,5}$ are empty.
For the time steps $t+j$ with $j \in [n]$, we define 
$C_{t+j,3} = \{x_{\ell} \in \mathcal{U} \mid \ell > j\} \cup \{x_{j-1}^i \mid \text{if }j-1 >0\}$,
$C_{t+j,4} = \{x_j\} \cup \{S \in \mathcal{S} \mid x_j \in S \}$, and
$C_{t+j,5} = \{x_{\ell} \in \mathcal{U} \mid \ell < j\} \cup \{x_{j}^i \mid \text{if }j < n\}$.

Finally, we define for the last time step $t = k \cdot (n+1)$, $C_{t,1} = \emptyset$, $C_{t,2} = \mathcal{S}$, $C_{t,3} = \mathcal{U} \cup \{x_n^k\}$, $C_{t,4} = \emptyset$, $C_{t,5} = \emptyset$, and $C_{t,6} = {\bigcup_{1 \leq i \leq k} \mathcal{U}^i}\setminus \{x_n^k\}$.
By setting any $x \in C_{1,3}$ as the start vertex, we conclude the construction of the
reduction.
Note that $L = k\cdot (n+1) +1$ which is of order $\Oh(n^2)$.

It is clear that if there exists a hitting set $H$ of size $k$, then by fixing an order on the elements of $H$ and following the $i$th element during the $i$th phase of $\mathcal{G}$ describes a non-strict temporal walk through $\mathcal{G}$ that visits all elements in $\mathcal{S}$, as $H$ is a hitting set, and all elements in $\mathcal{U}$, as we start in the component $C_{1,3}$.
By following an element $x$ through the graph, we mean that we always move to the component containing $x$. We further see all elements in $\mathcal{U}^i$ for $1 \leq i \leq k$ due to the following observation. Let $x_i$ be the $i$th element in $H$. Then, in the $i$th phase we follow the vertex $x_i$ through the temporal graph. By this, we see the vertices $x_{\ell}^i$ for $\ell < i$ in the time steps $t+\ell$ in the components $C_{t+\ell,3}$ and the vertices $x_{\ell'}$ for $i \leq \ell'$ in the time steps $t+\ell'+1$ in the components $C_{t+\ell'+1,5}$ except for $x_n^{i}$ which is seen in the third component at the start of the phase $i+1$.

For the other direction, assume $W$ is an exploration schedule for~$\mathcal{G}$. Observe that as $W$ starts in the component $C_{1,3}$, it can never reach a component $C_{t,1}$ for any~$t$, as the first component is only shrinking. Further, if $W$ visits a component $C_{t,6}$, then, for all $t' > t$, it can only visit the components $C_{t', 6}$, as the sixth component is only growing.
As the element $x_n^k$ is processed only in the very last step, it never reaches the sixth component. Hence, we cannot reach $x_n^k$ on any temporal walk that visits a sixth component.
Therefore, we are restricted to visit all vertices through components $C_{t,2} - C_{t,5}$. We show next that any exploration schedule cannot reach a second component $C_{t,2}$. Note that as we can neither visit a component $C_{t,1}$, nor a component $C_{t,6}$, we must `catch' the elements from $\bigcup_{1 \leq i \leq k}\mathcal{U}^i$ as they pass through the components $C_{t,3}$, or $C_{t,5}$. If we miss an element $x_\ell^i$ in a component $C_{t,3}$, then the only option to still visit it is in the component $C_{t+1,5}$. Hence, if we visit any component $C_{t,4}$ (and thereby, miss $C_{t,3}$), we are forced to visit the component $C_{t+1,5}$ in the next step, otherwise we can no longer visit all vertices. As we can only visit any second component $C_{t,2}$ over the component $C_{t-1,4}$ this implies that we cannot reach any second component in an exploration schedule and further cannot visit two fourth components $C_{t,4}, C_{t+1,4}$ in a row without missing an element from $\bigcup_{1 \leq i \leq k}\mathcal{U}^i$. As within one phase, the fifth component is only growing, the fact that we cannot visit two times in a row a fourth component implies that we cannot visit any two forth components within one phase. Hence, we can collect the elements from $\mathcal{U}$ that appear in $W$ in a forth components into a set $H$. By the previous arguments, it follows that $|H| \leq k$, and as we cannot visit any second component, it follows that $H$ has a non-empty intersection with each set $S$ in $\mathcal{S}$, in other words, $H$ is a hitting set for $\mathcal{S}$.
\end{proof}
\end{toappendix}

The size of an instance of \textsc{NS-TEXP} can be bounded by $n^2 \times L$ and there exists \FPT{} algorithms in the parameter $L$ as well as in the parameter $n$~\cite{ErlebachS22Journal}. We already showed that under standard complexity assumption, there does not exists a polynomial kernel for the parameter $L$. 
The next result is obtained by a cross-composition from \textsc{NS-TEXP} into itself
\begin{toappendix}
using \autoref{crosscollapse}.
\end{toappendix} 

\begin{theorem}
	\label{thm:no-kernel-n}
	Unless $\coNP \subseteq \NP/\poly$, \textsc{NS-TEXP} does not admit a polynomial kernel parameterized by $n$.
\end{theorem}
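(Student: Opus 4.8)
The plan is to establish the claim by an \emph{OR-cross-composition} of \textsc{NS-TEXP} into itself and then invoke \autoref{crosscollapse}; since \textsc{NS-TEXP} is \NP-hard (indeed \NP-complete), this yields that a polynomial kernel parameterized by $n$ would imply $\NP\subseteq\coNP/\poly$, which is equivalent to $\coNP\subseteq\NP/\poly$.

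\textbf{Equivalence relation.} First I fix a polynomial equivalence relation $\mathcal R$ on inputs of \textsc{NS-TEXP}: all strings that do not encode a valid instance form one class, and two valid instances are $\mathcal R$-equivalent if they have the same number of vertices $n$, the same lifetime $L$, and --- after fixing an ordering of the vertices --- the same index of the start vertex. Since $n$, $L$ and the start index are all bounded by the encoding length, $\mathcal R$ has $(\max_{x}|x|)^{\Oh(1)}$ classes on every finite set, and equivalence is testable in polynomial time. Degenerate classes (malformed strings, $L=0$, $n\le 1$) are mapped to a fixed trivial no-instance. Hence it suffices to give a cross-composition for a sequence $\mathcal G^1,\dots,\mathcal G^t$ of instances over a common vertex set $V$ with $|V|=n$, a common lifetime $L$, and a common start vertex $v$.

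\textbf{Construction.} I build a single instance $(\mathcal G,v^\ast)$ whose vertex set is $V$ together with $\Oh(\log t)$ extra ``navigator'' vertices and $\Oh(1)$ ``control'' vertices; since each $|x_i|\ge n$, the number of vertices is $n+\Oh(\log t)\le p(\max_i|x_i|+\log t)$ for a suitable polynomial $p$, while the lifetime may be as large as $\Oh(tL)$ (it is not the parameter). The navigator, played before any vertex of $V\setminus\{v\}$ becomes reachable, uses the timing of $\lceil\log t\rceil$ binary decisions to force an exploring walk to \emph{commit} to an index $i\in[t]$ and to enter $V$ exactly at the start of the $i$-th of $t$ consecutive ``blocks'' into which the lifetime is divided; afterwards the snapshots restricted to $V$ replay $G^i_1,\dots,G^i_L$, so that, by \autoref{obs:equiv}, the trace of the walk on $V$ is exactly a non-strict temporal walk of $\mathcal G^i$ starting at $v$. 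To keep this an ``OR'' and not an ``AND'', I reuse the two devices from the proof of \autoref{thm:const-c}: a \emph{monotonically shrinking} source component which, once left, cuts off the navigator and every block $j<i$; and a \emph{monotonically growing} sink component acting as a trap, together with control vertices that can be collected only while the walk sits inside the block it committed to. This gives: $(\mathcal G,v^\ast)$ is a yes-instance of \textsc{NS-TEXP} iff some $\mathcal G^i$ is a yes-instance. For the easy direction, navigate to a satisfiable index $i$, explore $\mathcal G^i$ on $V$, and pick up the (few) navigator and control vertices that lie on the way. For the converse, any exploration schedule must run through the whole navigator, hence commit to a single $i$, and can then reach all of $V$ only through the block of $\mathcal G^i$.

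\textbf{Main obstacle.} The delicate point --- and the technical heart of the proof --- is to design the ``barrier'' between consecutive blocks so that a walk which has committed to block $i$ and failed to explore all of $V$ inside it is irrevocably trapped (it cannot re-enter $V$ through blocks $i+1,\dots,t$), while those later blocks remain fully usable by walks that commit to them, and all of this using only $\Oh(\log t)$ auxiliary vertices rather than $\Theta(t)$. This needs a standardisation step on each $\mathcal G^i$ (funnelling any walk that misses a vertex of $V$ to a fixed trap vertex before the block ends) combined with monotone source/sink components in the style of, but more intricate than, \autoref{thm:const-c}. Once the cross-composition is verified, \autoref{crosscollapse} applied to the \NP-hard language \textsc{NS-TEXP} shows that \textsc{NS-TEXP} parameterized by $n$ admits no polynomial kernel unless $\NP\subseteq\coNP/\poly$, i.e., unless $\coNP\subseteq\NP/\poly$.
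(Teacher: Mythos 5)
Your high-level strategy is the right one and matches the paper's: an OR-cross-composition of \textsc{NS-TEXP} into itself, normalized instances over a common vertex set, a logarithmic-size selection gadget, and monotone source/sink components to kill illegal moves, finished off by \autoref{crosscollapse}. However, there is a genuine gap: the one thing you explicitly defer --- the ``barrier'' that forces a walk to commit to a single block and makes partial exploration across several blocks impossible --- is precisely the content of the proof, and your proposed architecture does not obviously support it. With all $\lceil\log t\rceil$ navigation decisions played \emph{before} $V$ becomes reachable, followed by $t$ sequential blocks and only $\Oh(1)$ control vertices, nothing visibly prevents a schedule from visiting part of $V$ in block $i$, idling, and visiting the rest in block $j>i$; for \textsc{NS-TEXP} (visit \emph{all} vertices) such a schedule would be a yes for the composed instance even when no single $\mathcal G^i$ is a yes-instance, breaking the OR semantics. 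A ``standardisation step funnelling any walk that misses a vertex to a trap'' is not something you can perform on the input instances in polynomial time --- detecting that a walk ``has failed to explore all of $V$'' inside a block is essentially the problem itself.

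The paper avoids this by a different design. The choices are not made up front but are \emph{interleaved} with the blocks via a recursively nested two-way gadget: at each level there is a pair $x_j,y_j$ such that $y_j$ can be reached only by declining to enter one of the two sub-gadgets hanging below $x_j$, so visiting every $y_j$ forces the walk to enter exactly one block overall. Crucially, the construction also adds a full copy $V'$ of $V$ (so about $2n+\Oh(\log t)$ vertices in total, still fine for condition~(a) of \autoref{def:cross_comp}) which serves as a ``return rail'': after exploring a block, the walk crosses $V\to V'\to x_1$ and is back at the selection vertex in time to collect the remaining $y_j$'s. Your budget of $n+\Oh(\log t)$ vertices with no copy of $V$ provides no such return mechanism, so even the completeness direction (a yes-instance $\mathcal G^i$ yields a yes composed instance in which the navigator and control vertices are all visited) is not established. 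To repair the proposal you would need to either reproduce this interleaved nesting with the duplicate vertex set, or supply a concrete alternative barrier; as written, the argument asserts the existence of the gadget rather than constructing it.
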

\begin{toappendix}
\begin{proof}
    We show the claim using \autoref{crosscollapse} via a cross-composition from \textsc{NS-TEXP} into itself. Therefore, we reduce
    $\ell$ instances $(\mathcal{G}_1,v^1), (\mathcal{G}_2, v^2), \dots (\mathcal{G}_\ell, v^\ell)$ of \textsc{NS-TEXP} into one instance
    $(\mathcal{G}, v)$ of \textsc{NS-TEXP} such that $(\mathcal{G}, v)$ is a yes-instance
    if and only if at least one of the instances $(\mathcal{G}_i, v^i)$ for
    $1 \leq i \leq \ell$ is a yes-instance.
    First, we show that we can transform the instances $(\mathcal{G}_i, v^i)$, $1 \leq i \leq \ell$, such
    that every instance has the same number of vertices. Let $n = \max_i |V(\mathcal{G}_i)|$.
    If, for some $i$, $|V(\mathcal{G}_i)| < n $, we add a set $S$ of $n - |V(\mathcal{G}_i)|$
    new vertices to~$\mathcal{G}_i$, and we precede the sequence of snapshot graphs of $\mathcal{G}_i$ with a new snapshot graph containing two connected component, one containing $\{v^i\} \cup S$ and the other one
    containing the rest of the vertices. For all other snapshot graphs of $\mathcal{G}_i$,
    we add $S$ as a connected component.
    We call $\mathcal{G}_i'$ the newly obtained instance. It is easy to see that
    $\mathcal{G}_i'$ is a yes-instance if and only if $\mathcal{G}_i$ is a yes-instance.
    Therefore, we assume in the following that all instances have the same number of vertices, thus, without
    loss of generality, we can rename the vertex sets $V(\mathcal{G}_i)$ and assume that all instances have the same vertex set and the same start vertex $v_1$.
    From now on, let $(\mathcal{G}_1,v_1), (\mathcal{G}_2, v_1), \dots (\mathcal{G}_\ell, v_1)$ be
    $\ell$ instance of \textsc{NS-TEXP} over the vertex set $V = \{v_1, v_2, \ldots, v_n\}$.
	
	The idea of the construction is to sequentially concatenate the $\ell$ instances one after
    the other as they share the same vertex set. Then, we introduce additional vertices that allow us to select and enter exactly one instance such that we can visit all the vertices in $V$ if and only if the selected instance is a yes-instance. Thereby, the additional vertices ensure that we only visit vertices in $V$ in one
    of the instances as we would otherwise fail to visit all of the additional vertices.

    Now, we describe the gadget implementing the selection of the instance in $\mathcal{G}$.
    We first present the idea to force the choice between two instances $(\mathcal{G}_1,v_1)$
    and $(\mathcal{G}_2, v_1)$ using $2$ control vertices and a copy $V'$ of $V$.
    Let $V' = \{v_1,', v_2', \ldots, v_n'\}$
    and $x_1, y_2 \notin V$ be additional vertices.
    Let $G_1^1, G_2^1, \ldots, G_{L_1}^1$ be the sequence of snapshots of $\mathcal{G}_1$
    and $G_1^2, G_2^2, \ldots, G_{L_2}^2$ be the sequence of snapshots of $\mathcal{G}_2$.
    We define $\mathcal{G}$ as follow, $V(\mathcal{G}) = V \cup V' \cup \{x_1, y_1\}$,
    $L = L_1 + L_2 + 8$, and $G_1, G_2, \ldots, G_L$ is the sequence of snapshots of
    $\mathcal{G}$, where
    on the first $L_1+4$ snapshots, we have $E(G_1) = \{\{x_1, v_1\}\}$, $E(G_2) = \{\{x_1, y_1\}\}$,
     $E(G_{i+2}) = E(G_i^1) \text{ for all $i \in [L_1]$}$, $E(G_{L_1 + 3}) = \{\{v_j, v'_j\} \mid j \in [|V|]\}$, and $E(G_{L_1 + 4}) = \{\{v'_j, x_1\}\mid j \in [|V|]\}$.
     Then, for the remaining $L_2+4$ snapshots, we repeat the same sequence with $\mathcal{G}_2$ instead of $\mathcal{G}_1$.
     Therefore we have, $E(G_{L_1 + 5}) = E(G_1) = \{\{x_1, v_1\}\}$, $E(G_{L_1 + 6}) = E(G_2) = \{\{x_1, y_1\}\}$, $E(G_{L_1 + 6 + i}) = E(G_i^2) \text{ for all $i \in [L_2]$}$, $E(G_{L_1 + L_2 + 7}) = E(G_{L_1 + 3}) = \{\{v_j, v'_j\} \mid j \in [|V|]\}$, and $E(G_{L_1 + L_2 + 8}) = E(G_{L_1 + 4}) = \{\{v'_j, x_1\}\mid j \in [|V|]\}$.
    See \autoref{fig:first-level} for a drawing of the gadget.
    To complete the construction of $(\mathcal{G},v)$ as an instance of \textsc{NS-TEXP},
    we set the start vertex to be $v = x_1$.

    Intuitively, an exploration in $(\mathcal{G},x_1)$ works as follow. We start
    in $x_1$. Then, in snapshot~$G_1$, there are two options, we can move on to 
    $v_1$ and visit $V$ by solving the instance $\mathcal{G}_1$ or we stay in $x_1$ and visit
    $y_1$ in $G_2$. In both case, after the snapshots corresponding to $\mathcal{G}_1$
    in $\mathcal{G}$, the we must stay in $x_1$ in the snapshot $G_{L_1 + 4}$.
    For the second half of the sequence, we do the complement of the first half. If we visited $V$ through $\mathcal{G}_1$, then we
    stay in $x_1$ to visit $y_1$, otherwise, the we go to $v_1$ and explore $V$ by solving $\mathcal{G}_2$.

    Let us now formally prove that $(\mathcal{G}, x_1)$ is a yes-instance
    if and only if either $(\mathcal{G}_1, v_1)$ or $(\mathcal{G}_2, v_1)$ is a yes-instance.
    
    We begin with the if direction. Assume $(\mathcal{G}_1, v_1)$ or $(\mathcal{G}_2, v_1)$ is a yes-instance.
    Recall that $V'\cup \{x_1\}$ form a connected component in $G_{L_1 + 4} = G_{L_1 + L_2 + 8}$.
    Let $W'$ be a $(v'_1, v'_n)$-walk on $G_{L_1 + 4} = G_{L_1 + L_2 + 8}$ such that
    $V(W') = V'$.
    Assume $(\mathcal{G}_1, v_1)$ is a yes-instance. By \autoref{obs:equiv},
    there exists a monotone $(v_1, u)$-walk $W_1$ for some $u \in V$ such that $V(W_1) = V$.
    From $W_1$, we build a monotone $(x_1, y_1)$-walk $W$ that visits all vertices in
    $V \cup V' \cup \{x_1, y_1\}$ as follows: $W = x_1, x_1v_1, W_1, uu', u', u'v'_1, W', v'_n x_1, x_1, x_1y_1,y_1$. It
    is straightforward to see that $W$ is a monotone walk and $V(W) = V \cup V' \cup \{x_1, y_1\}$. Therefore,
    $(\mathcal{G}, x_1)$ is a yes-instance.
    Now, assume $(\mathcal{G}_2, v_1)$ is a yes-instance. By \autoref{obs:equiv},
    there exists a monotone $(v_1, u)$-walk $W_2$ for some $u \in V$ such that $V(W_2) = V$.
    From $W_2$, we build a monotone $(x_1, v'_n)$-walk $W$ that visits all vertices in
    $V \cup V' \cup \{x_1, y_1\}$ as follows: $W = x_1, x_1y_1, y_1, y_1x_1, x_1, x_1v_1, W_1, uu', u', u'v'_1, W'$. It is
    straightforward to see that $W$ is a monotone walk and $V(W) = V \cup V' \cup \{x_1, y_1\}$. Therefore,
    $(\mathcal{G}, x_1)$ is a yes-instance.

    We now continue with the only if direction. Assume $(\mathcal{G}, x_1)$ is a yes-instance.
    By \autoref{obs:equiv}, there exists a monotone $(x_1, u)$-walk $W$ for some $u \in V \cup V' \cup \{x_1, y_1\}$
    such that $V(W) = V \cup V' \cup \{x_1, y_1\}$.
    The walk $W$ has to visit $y_1$ either in the snapshot $G_2$ or in the snapshot
    $G_{L_1 + 6}$. Assume that the walk visit $y_1$ in snapshot $G_2$, then it is easy
    to see that the walk has to return to $x_1$ in $G_2$. From there, the walk must go to $v_1$ in $G_{L_1 + 5}$ to be able to reach the vertices in $V$.
    Between time step $L_1 + 7$ and $L_1 + L_2 + 6$, $\mathcal{G}$ behaves like
    $\mathcal{G}_2$ on~$V$. Therefore, the vertices in $V$ are visited by a subwalk $W_2$
    of $W$. By construction of $\mathcal{G}$, $W_2$ is a monotone walk on $\mathcal{G}_2$
    starting at $v_1$. Therefore, $(\mathcal{G}_2, v_1)$ is a yes-instance.
    Symmetrically, if the walk visit $y_1$ in $G_{L_1+6}$, then the walk has to visit
    $V$ in a subwalk $W_1$ between time step $3$ and $L_1+3$. The walk $W_1$
    correspond to a monotone walk of $\mathcal{G}_1$. Therefore $(\mathcal{G}_1, v_1)$
    is a yes-instance.

    In fact, the gadget from \autoref{fig:first-level} can be recursively nested to select among more choices. For this, we see the dotted squares in \autoref{fig:first-level} containing the vertices from $V \cup V'$ as two black boxes with the property, that we can enter the black boxes over the vertex $x_1$ in the snapshots $\mathcal{G}_1$ and $\mathcal{G}_{L_1 +5}$, respectively, traverse the black boxes and exit them again to the vertex $x_1$ in the snapshots $\mathcal{G}_{L_1+4}$ and $\mathcal{G}_{L_1 +L_2 + 8}$, respectively. The crucial observation is that an exploration can exclusively only visit the first black box or the second black box but not both, as otherwise the vertex $y_1$ will not be visited. We can use this property to nest the gadget into itself in order to select over more than two instances.
    \autoref{fig:second-level} shows how to nest the gadget into itself by inserting the gadget as the black box and thereby obtaining a two layered gadget that selects among four different choices. For four instances, the boxes in \autoref{fig:second-level} are realized with the dotted boxes from \autoref{fig:first-level}. For eight instances, the boxes are again realized with the gadget from \autoref{fig:first-level}. It is easy to see how the recursive nesting continues. 
    The recursive nesting introduces $\lceil 2\log(\ell) \rceil$ control vertices for the selection of the instances and $|V|$ control variables as the copy $V'$ on the highest layer.
    By the recursive nesting of this gadget, we keep the property that on each layer, an exploration can only visit exactly one of the black boxes, and hence, in total, an exploration can only visit exactly one of the instances $(\mathcal{G}_i, v_1)$ for
    $1 \leq i \leq \ell$, as otherwise one of the control variables $y_j$ for $1 \leq j \leq \lceil 2\log(\ell) \rceil$, would not be visited. 
    By similar arguments as in the case of two instances, we get that if there exists an exploration for some instance $(\mathcal{G}_i, v^i)$, $1 \leq i \leq \ell$,
    then there exists an exploration for
    $(\mathcal{G}, v)$ and vice versa, if there exists an exploration for $(\mathcal{G}, v)$, then this exploration visits the vertices in $V$ by entering exactly one of the instances $(\mathcal{G}_i, v^i)$, $1 \leq i \leq \ell$.
    Therefore, $(\mathcal{G}, v)$ is
    a yes-instance if and only if there exists some $1 \leq i \leq \ell$ such that
    $(\mathcal{G}_i, v^i)$ is a yes-instance.
\end{proof}
\begin{figure}
    \centering
    \begin{tikzpicture}
        \def\v#1#2{
            (-0.2, #2) node {$v_#1$}
            (0, #2) edge (1.5, #2)
            (3.5, #2) edge (7.5, #2)
            (9.5, #2) edge (12, #2)
        }
        \def\vprime#1#2{
            (-0.2, #2) node {$v_#1'$}
            (0, #2) edge (12, #2)
        }
        \draw[draw=black] (1.5, 2) rectangle (3.5, 3.5);
        \draw[draw=black] (7.5, 2) rectangle (9.5, 3.5);
        \draw[thick, dashed] (0.25,1.8) rectangle (5.75,5.75);
        \draw[dashed] (6.25,1.8) rectangle (11.75,5.75);
        \node at (-0.2, 0.5) {\scriptsize{$t$}};
        \node at (0.5, 0.5) {\scriptsize{$1$}};
        \node at (1, 0.5) {\scriptsize{$2$}};
        \node at (1.5, 0.5) {\scriptsize{$3$}};
        \node at (2.5, 0.5) {$\cdots$};
        \node at (4.25, 0.5) {\scriptsize{$L_1{+}3$}};
        \node at (5, 0.5) {\scriptsize{$L_1{+}4$}};
        \node at (5.75, 0.5) {$\cdots$};

        \node at (-0.2, 1.5) {$x_1$};
        \node at (-0.2, 1) {$y_1$};

        \path[thick,dotted]
            (0,1) edge (12,1)
            (0,1.5) edge (12,1.5)

            \v{1}{2.1}
            \v{2}{2.4}
            (-0.2, 3) node {$\vdots$}
            \v{n}{3.4}

            \vprime{1}{4.0}
            \vprime{2}{4.5}
            (-0.2, 5.1) node {$\vdots$}
            \vprime{n}{5.5}
            ;
        \draw [thick, decorate, decoration = {calligraphic brace,raise=10pt}] (-0.2, 2.0) --  +(0,1.5) node[pos=0.5, left=10pt]{$V$};
        \draw [thick, decorate, decoration = {calligraphic brace,raise=10pt}] (-0.2, 3.9) --  +(0,1.5) node[pos=0.5, left=10pt]{$V'$};

        \node[circle,fill=black,inner sep=0pt,minimum size=6pt] (x1_1) at (0.5,1.5) {};
        \node[circle,fill=black,inner sep=0pt,minimum size=6pt] (x1_2) at (1,1.5) {};
        \node[circle,fill=black,inner sep=0pt,minimum size=6pt] (x1_6) at (5,1.5) {};
        \node[circle,fill=black,inner sep=0pt,minimum size=6pt] (x1_7) at (6.5,1.5) {};
        \node[circle,fill=black,inner sep=0pt,minimum size=6pt] (x1_8) at (7,1.5) {};
        \node[circle,fill=black,inner sep=0pt,minimum size=6pt] (x1_10) at (11,1.5) {};

        \node[circle,fill=black,inner sep=0pt,minimum size=6pt] (y1_2) at (1,1) {};
        \node[circle,fill=black,inner sep=0pt,minimum size=6pt] (y1_8) at (7,1) {};

        \node[circle,fill=black,inner sep=0pt,minimum size=6pt] (v1_1) at (0.5,2.1) {};
        \node[circle,fill=black,inner sep=0pt,minimum size=6pt] (v1_5) at (4.25,2.1) {};
        \node[circle,fill=black,inner sep=0pt,minimum size=6pt] (v1_7) at (6.5,2.1) {};
        \node[circle,fill=black,inner sep=0pt,minimum size=6pt] (v1_9) at (10.25,2.1) {};

        \node[circle,fill=black,inner sep=0pt,minimum size=6pt] (v2_5) at (4.25,2.4) {};
        \node[circle,fill=black,inner sep=0pt,minimum size=6pt] (v2_9) at (10.25,2.4) {};

        \node[circle,fill=black,inner sep=0pt,minimum size=6pt] (vn_5) at (4.25,3.4) {};
        \node[circle,fill=black,inner sep=0pt,minimum size=6pt] (vn_9) at (10.25,3.4) {};

        \node[circle,fill=black,inner sep=0pt,minimum size=6pt] (v1'_5) at (4.25,4) {};
        \node[circle,fill=black,inner sep=0pt,minimum size=6pt] (v1'_6) at (5,4) {};
        \node[circle,fill=black,inner sep=0pt,minimum size=6pt] (v1'_9) at (10.25,4) {};
        \node[circle,fill=black,inner sep=0pt,minimum size=6pt] (v1'_10) at (11,4) {};

        \node[circle,fill=black,inner sep=0pt,minimum size=6pt] (v2'_5) at (4.25,4.5) {};
        \node[circle,fill=black,inner sep=0pt,minimum size=6pt] (v2'_6) at (5,4.5) {};
        \node[circle,fill=black,inner sep=0pt,minimum size=6pt] (v2'_9) at (10.25,4.5) {};
        \node[circle,fill=black,inner sep=0pt,minimum size=6pt] (v2'_10) at (11,4.5) {};

        \node[circle,fill=black,inner sep=0pt,minimum size=6pt] (vn'_5) at (4.25,5.5) {};
        \node[circle,fill=black,inner sep=0pt,minimum size=6pt] (vn'_6) at (5,5.5) {};
        \node[circle,fill=black,inner sep=0pt,minimum size=6pt] (vn'_9) at (10.25,5.5) {};
        \node[circle,fill=black,inner sep=0pt,minimum size=6pt] (vn'_10) at (11,5.5) {};

        \path[thick] (x1_1) edge (v1_1)
            (x1_2) edge (y1_2)
            (v1_5) edge[bend left] (v1'_5)
            (v2_5) edge[bend left] (v2'_5)
            (vn_5) edge[bend left] (vn'_5)
            (v1'_6) edge (x1_6)
            (v2'_6) edge[bend left] (x1_6)
            (vn'_6) edge[bend left] (x1_6)
            (x1_7) edge (v1_7)
            (x1_8) edge (y1_8)
            (v1_9) edge[bend left] (v1'_9)
            (v2_9) edge[bend left] (v2'_9)
            (vn_9) edge[bend left] (vn'_9)
            (v1'_10) edge (x1_10)
            (v2'_10) edge[bend left] (x1_10)
            (vn'_10) edge[bend left] (x1_10)
            ;

        \node at (3,6) {$x_1 = 1$};
        \node at (2.5,2.75) {$\mathcal{G}_1$};
        \node at (9,6) {$x_1 = 0$};
        \node at (8.5,2.75) {$\mathcal{G}_2$};

        \path[dashed] (6,0.25) edge +(0,6);

    \end{tikzpicture}
    \caption{Gadget of \autoref{thm:no-kernel-n}:
        Given two instance of \textsc{NS-EXP}, $(\mathcal{G}_1, v_1)$ and $(\mathcal{G}_2, v_1)$,
        this construction shows how to connect them to create an instance $(\mathcal{G}, x_1)$ such
        that $(\mathcal{G}, x_1)$ is a yes-instance if and only if $(\mathcal{G}_1, v_1)$ or
        $(\mathcal{G}_2, v_1)$ is a yes-instance.
    }%
    \label{fig:first-level}
\end{figure}

\begin{figure}
	\centering
	\scalebox{0.65}{%
	\begin{tikzpicture}
		\draw[draw=black] (1, 2) rectangle (4, 3); 
		\draw[draw=black] (5.5, 2) rectangle (8.5, 3); 
		
		\node at (-0.5, 1) {$x_2$};
		\node at (-0.5, 0) {$y_2$};
		\path[thick, dotted] (-0.3,0) edge (20.3,0)
		(-0.3,1) edge (20.3,1);
		
		\node at (-0.5, -1) {$x_1$};
		\node at (-0.5, -2) {$y_1$};
		\path[thick, dotted] (-0.3,-2) edge (20.3,-2)
		(-0.3,-1) edge (20.3,-1);
		
		\node[circle,fill=black,inner sep=0pt,minimum size=6pt] (1) at (0.8,1) {};
		\node[circle,fill=black,inner sep=0pt,minimum size=6pt] (2) at (1.5,1) {};
		\node[circle,fill=black,inner sep=0pt,minimum size=6pt] (3) at (1.5,0) {};
		
		\node[circle,fill=black,inner sep=0pt,minimum size=6pt] (4) at (4.2,1) {};
		\node[circle,fill=black,inner sep=0pt,minimum size=6pt] (5) at (5.3,1) {};
		\node[circle,fill=black,inner sep=0pt,minimum size=6pt] (6) at (6,1) {};
		\node[circle,fill=black,inner sep=0pt,minimum size=6pt] (7) at (6,0) {};
		
		\node[circle,fill=black,inner sep=0pt,minimum size=6pt] (8) at (8.7,1) {};
		
		\node[circle,fill=black,inner sep=0pt,minimum size=6pt] (21) at (0.2,-1) {};
		\node[circle,fill=black,inner sep=0pt,minimum size=6pt] (22) at (0.2,1) {};
		\node[circle,fill=black,inner sep=0pt,minimum size=6pt] (23) at (0.7,-1) {};
		\node[circle,fill=black,inner sep=0pt,minimum size=6pt] (24) at (0.7,-2) {};
		\node[circle,fill=black,inner sep=0pt,minimum size=6pt] (25) at (9.3,-1) {};
		\node[circle,fill=black,inner sep=0pt,minimum size=6pt] (26) at (9.3,1) {};
		
		\node[circle,fill=black,inner sep=0pt,minimum size=6pt] (27) at (10.7,-1) {};
		\node[circle,fill=black,inner sep=0pt,minimum size=6pt] (28) at (10.7,1) {};
		\node[circle,fill=black,inner sep=0pt,minimum size=6pt] (29) at (11.2,-1) {};
		\node[circle,fill=black,inner sep=0pt,minimum size=6pt] (30) at (11.2,-2) {};
		
		\node[circle,fill=black,inner sep=0pt,minimum size=6pt] (31) at (19.8,-1) {};
		\node[circle,fill=black,inner sep=0pt,minimum size=6pt] (32) at (19.8,1) {};

		\path[thick] (1) edge[bend left] (1,2.5)
		(2) edge (3)
		(4,2.5) edge[bend left] (4)
		(5) edge[bend left] (5.5,2.5)
		(6) edge (7)
		(8.5,2.5) edge[bend left] (8);

		\path[thick] (21) edge[bend left] (22)
		(23) edge (24)
		(26) edge[bend left] (25)
		(27) edge[bend left] (28)
		(29) edge (30)
		(32) edge[bend left] (31);
		
		\node at (2.5,2.5) {$x_1 = 1,\ x_2=1$};
		\node at (7,2.5) {$x_1 = 1,\ x_2=0$};
		
		\path[dashed] 
		(4.75,-0.5) edge (4.75,3.5)		
		(10,-2.5) edge (10, 3.5);
		
		
		\draw[draw=black] (11.5, 2) rectangle (14.5, 3); 
		\draw[draw=black] (16, 2) rectangle (19, 3);

		\node[circle,fill=black,inner sep=0pt,minimum size=6pt] (11) at (11.3,1) {};
		\node[circle,fill=black,inner sep=0pt,minimum size=6pt] (12) at (12,1) {};
		\node[circle,fill=black,inner sep=0pt,minimum size=6pt] (13) at (12,0) {};
		
		\node[circle,fill=black,inner sep=0pt,minimum size=6pt] (14) at (14.7,1) {};
		\node[circle,fill=black,inner sep=0pt,minimum size=6pt] (15) at (15.8,1) {};
		\node[circle,fill=black,inner sep=0pt,minimum size=6pt] (16) at (16.5,1) {};
		\node[circle,fill=black,inner sep=0pt,minimum size=6pt] (17) at (16.5,0) {};
		
		\node[circle,fill=black,inner sep=0pt,minimum size=6pt] (18) at (19.2,1) {};
		
		\path[thick] (11) edge[bend left] (11.5,2.5)
		(12) edge (13)
		(14.5,2.5) edge[bend left] (14)
		(15) edge[bend left] (16,2.5)
		(16) edge (17)
		(19,2.5) edge[bend left] (18);
		
		\node at (13,2.5) {$x_1 = 0,\ x_2=1$};
		\node at (17.5,2.5) {$x_1 = 0,\ x_2=0$};
		
		\path[dashed] 
		(15.25,-0.5) edge (15.25,3.5);
		
		\draw[draw=black,dotted] (0.5, -0.5) rectangle (9, 3.5); 
		\draw[draw=black,dotted] (11, -0.5) rectangle (19.5, 3.5); 
	\end{tikzpicture}}
    \caption{Gadget for \autoref{thm:no-kernel-n}: This shows how to nest the gadget of \autoref{fig:first-level} into a two layered gadget  that selects between four choices.
    }%
	\label{fig:second-level}
\end{figure}
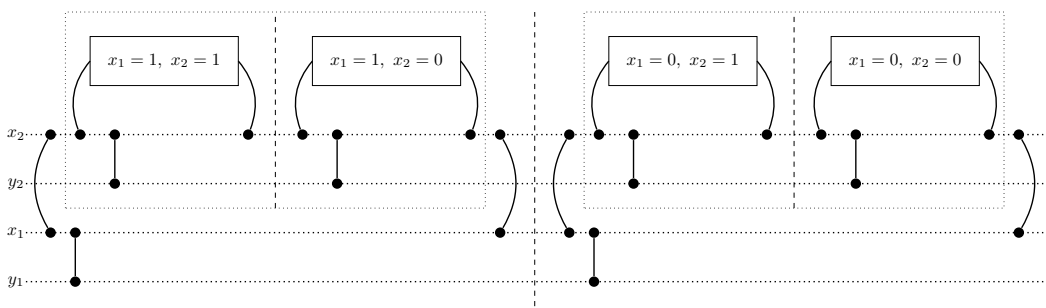
\end{toappendix}

\subsection*{Lower Bounds for \textsc{$k$-arb NS-TEXP}}
In~\cite{ErlebachS22Journal} an \FPT-algorithm based on color coding is given for the problem \textsc{$k$-arb NS-TEXP} with parameter $k$. We show in the following that \textsc{$k$-arb NS-TEXP} parameterized by $L$ is \W[1]-hard solving an open problem from~\cite{ErlebachS22Journal}. This hardness contrasts the \FPT-algorithm in parameter $L$ for the problem \textsc{NS-TEXP}. 
We further show that \textsc{$k$-arb NS-TEXP} does not admit a polynomial kernel in $L+k$ unless $\coNP \subseteq \NP/\poly$. This result is based on a cross-composition from the problem \textsc{$k$-arb NS-TEXP} to itself. 
\begin{theorem}
	\label{thm:no-FPT-L}
	\textsc{$k$-arb NS-TEXP} is \W[1]-hard parameterized by $L$.
\end{theorem}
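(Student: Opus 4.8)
The plan is to give a parameterized reduction from \textsc{Multicolored Clique} — \W[1]-hard parameterized by the number $\kappa$ of colour classes — to \textsc{$k$-arb NS-TEXP}. Given a graph $H$ with colour classes $V_1,\dots,V_\kappa$ (each of size $n$, after padding), I would construct in polynomial time a temporal graph $\mathcal G$ and an integer $k$ whose lifetime $L$ is bounded by a function of $\kappa$ alone, so that $\mathcal G$ admits a monotone walk visiting at least $k$ vertices if and only if $H$ has a multicolored clique. Since the new parameter is then bounded by a computable function of the old one and the reduction runs in polynomial time, this is a valid parameterized reduction and gives \W[1]-hardness in $L$.

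The construction would follow the gadget philosophy of \autoref{thm:const-c}: the lifetime is split into phases, and among the connected components of every snapshot there is a monotonically growing ``sink'' component (once the walk enters it, it can never leave) and a monotonically shrinking ``source'' component (the walk can never reach it from outside), together with a handful of control vertices living in these components that pin down the shape of any near-optimal walk. Concretely there would be $\kappa$ \emph{selection phases}, in the $i$-th of which the walk is funnelled through one of $n$ parallel components $C_i^{v}$ ($v\in V_i$), thereby ``picking'' a vertex $w_i$; a \emph{verification phase} for every pair $\{i,j\}$ which the walk can traverse while collecting a large block of bonus vertices precisely when its picked $w_i$ and $w_j$ are adjacent in $H$; and $k$ set to the total number of vertices a perfectly consistent run would visit, so that reaching $k$ forces the walk to complete all selections and pass all $\binom{\kappa}{2}$ verifications. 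As in \autoref{thm:const-c}, a control vertex that becomes reachable only in the final snapshot, together with one control vertex per colour, are used to forbid the walk from slipping into a sink or skipping a selection. This makes $L = \Oh(\kappa^2)$.

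I would then argue correctness in the two directions. For soundness, from a multicolored clique $\{w_1,\dots,w_\kappa\}$ I exhibit the explicit monotone walk that goes through $C_i^{w_i}$ in selection phase $i$ and through the ``$(w_i,w_j)$'' component in each verification phase, and check that it visits exactly $k$ vertices. For completeness, using the sink/source/control-vertex invariants — exactly the reasoning style of the proof of \autoref{thm:const-c} — I argue that any walk reaching $k$ vertices must in each selection phase commit to a single vertex of the corresponding colour and must pass all $\binom{\kappa}{2}$ verification phases, which is possible only if the committed vertices are pairwise adjacent, hence form a multicolored clique.

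The main obstacle is the completeness direction, and specifically enforcing \emph{global consistency} of the $\kappa$ choices: the vertex of $V_i$ seen by the verification phase for $\{i,j\}$ must equal the one seen by the phase for $\{i,j'\}$, for every $j\ne j'$, even though a monotone walk carries essentially no memory beyond its current component. This has to be engineered into the temporal structure, which is exactly the role played by the $\widehat{x_i}$-type control vertices in \autoref{thm:const-c}: each verification phase must be threaded with control vertices that are pickable only if the walk re-enters the component consistent with its earlier selection, so that any inconsistent detour permanently strands such a control vertex in a sink and kills the run. Orchestrating these interlocks for all pairs at once, while keeping the lifetime a function of $\kappa$ alone, is the real content of the proof.
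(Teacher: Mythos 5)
Your high-level framework is right -- a parameterized reduction from a multicolored \W[1]-hard problem in which the lifetime is bounded by a function of the parameter -- but the proposal does not actually contain a proof: you explicitly defer the one step that carries all the difficulty. Your plan needs $\binom{\kappa}{2}$ pairwise verification phases, each of which must ``re-read'' the vertex selected in phase $i$, and you correctly observe that a monotone walk carries no memory beyond its current component, so consistency between the phase for $\{i,j\}$ and the phase for $\{i,j'\}$ must be enforced by the temporal structure itself. You then state that orchestrating these interlocks ``is the real content of the proof'' without supplying it. As written, nothing prevents a walk from committing to $w_i$ in the $\{i,j\}$-verification and to a different $w_i'$ in the $\{i,j'\}$-verification; the control-vertex machinery of \autoref{thm:const-c} does not transfer directly, because there each variable is read exactly once, whereas your gadget requires each selection to be read $\kappa-1$ times. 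This is a genuine gap, not a routine detail.

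The paper's proof avoids the problem entirely by choosing a source problem and an encoding for which no pairwise verification is needed. It reduces from \textsc{Multicolored Independent Set} on $d$-regular graphs and first subdivides every edge, so that each vertex $v$ of the original graph has exactly $d$ neighbors in the set $V_E$ of subdivision vertices, and a transversal $H$ (one vertex per color) is independent if and only if $\lvert\bigcup_{v\in H}N_{G'}(v)\rvert = \kappa d$. The temporal graph then has only $2\kappa$ time steps: in even steps the walk may switch to any vertex of the next color class, and in the following odd step it collects the $d$ subdivision vertices of the single vertex it is currently following. Each selection is thus ``read'' exactly once, and global consistency is certified by a single counting argument -- the threshold $k' = \lvert\bigcup_i V_i\rvert + \kappa d$ is reachable precisely when the $\kappa$ selected neighborhoods are pairwise disjoint. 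If you want to salvage your approach, the lesson is to replace the $\binom{\kappa}{2}$ explicit adjacency checks by one global cardinality condition of this kind; otherwise you must design the interlocking control vertices you allude to, which is substantially harder than the rest of the construction.
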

We show the claim by a reduction from \textsc{Multicolored Independent Set}.
\begin{definition}[\textsc{Multicolored Independent Set}]
	\ \\
	Input: $k$-partite graph $G = (V_1, V_2, \dots, V_k, E)$.\\
	Question: Does there exist a set of vertices $S = \{v_1, v_2, \dots, v_k\}$ such that $v_i \in V_i$ for $1\leq i \leq k$ and $S$ is an independent set in $G$?
\end{definition}
It was shown in~\cite{CyganFKLMPPS15} that \textsc{Multicolored Independent Set} is \W[1]-hard parameterized by~$k$, even on $d$-regular graphs (graphs where each vertex has degree $d$) and $d$ is fixed.
\begin{proof}
	Let $G=(V_1, V_2, \dots, V_k, E)$ be a $d$-regular graph and an instance of \textsc{Multicolored Independent Set}. 
	First, we subdivide each edge in $E$ by introducing a new set of vertices~$V_E$. Let $G'=(V_1, V_2, \dots, V_k, V_E, E')$ be the graph obtained after subdividing each edge. Observe that in $G'$ each vertex in $\bigcup_{1 \leq i \leq k}V_i$ only has neighbors in $V_E$. Further, as $G$ is $d$-regular, for any set $H=\{v_1, v_2, \dots v_k \mid v_i \in V_i, 1\leq i\leq k\}$ it holds that $|\bigcup_{v \in H}N_{G'}(v)| = kd$ if and only if $H$ is an independent set in $G$. To see this, observe that if two vertices $u, v$ in $H$ are adjacent in $G$, then they share a neighbor in $G'$. As each set $N_{G'}(v)$ is of size $d$ for $v \in \bigcup_{1 \leq i \leq k}V_i$ the equation $|\bigcup_{v_\in H}N_{G'}(v)| = kd$ with $|H| = k$ only holds if the sets $N_{G'}(v)$ are disjunctive.
	We further note that each vertex $v\in V_E$ is of degree two and, as $G$ is a $k$-partite graph, the two neighbors of $v$ are in different sets $V_i \neq V_j$. 
	
	We will now build a non-strict temporal graph $\mathcal{G}$ from $G'$ based on the following idea. In time step $t = (i-1)\cdot 2$ we can choose a vertex from $V_i$ and visit all its neighbors in $V_E$ in the next step. If the $k$ vertices that we select this way form an independent set, then we are able to visit $kd$ vertices from $V_E$, otherwise we will see a vertex from $V_E$ twice. 
	
	We now describe the construction more formally.
    Similar to \autoref{thm:const-c}, we are interesting in the vertices of the connected components and not in theirs structures. Thereby, we define the connected components as sets of vertices and assume that each connected component forms a clique.
	For $t=0$, we let $C_{0,1} = V_1$ be a connected component and each other vertex is isolated. Further, we set $v \in C_{0,1}$ as the start vertex of the sought temporal walk.
	Now consider $t=1$. Let $V_1 = \{v^1_1, v^1_2, \dots, v^1_{\ell_1}\}$. Then, for $1 \leq j \leq \ell_1$, we let $C_{1,j} = \{v^1_j\} \cup N_{G'}(v^1_j)$ be a connected component. Each other vertex is isolated.
	Now, let $t=(i-1)\cdot 2$ for $2 \leq i \leq k$. 
	Then, we set $C_{t,1} = V_i \cup V_{i-1}$ as a connected component and each other vertex is isolated.
	For the time step $t+1$, let $V_{i} = \{v^i_1, v^i_2, \dots, v^i_{\ell_i}\}$. Then, for $1 \leq j \leq \ell_i$, we set $C_{1,j} = \{v^i_j\} \cup N_{G'}(v^i_j)$ as a connected component and each other vertex is isolated.
	
	Observe that $L = 2k$. It remains to set the parameter $k'$ being the number of vertices to be visit on a temporal walk in $\mathcal{G}$.
    We set $k' = |\bigcup_{1 \leq i \leq k}V_i| + kd$.
	
	Now assume, there exists a multicolored independent set $H$ for $G$. For, $1\leq i \leq k$, let $v_i \in H$ be the vertex contained in $V_i$. Then, following the vertex $v_i$ in the time steps $2(i-1)$ and $2(i-1)+1$ allows us to visit all vertices in $N_{G'}(H)$ which are $kd$ many by the assumption that $H$ is an independent set. As the vertices in $V_i$ and $V_{i+1}$ share a component in the time step $2i$ we are able to change following vertex $v_i$ to following vertex $v_{i+1}$ in time step $2i$ and further visit all vertices in $\bigcup_{1 \leq i \leq k}V_i$.
	
	For the other direction, assume there exists a monotone walk $W$ in $\mathcal{G}$ that visits at least $k' = |\bigcup_{1 \leq i \leq k}V_i| + dk$ many vertices.
	Recall that each vertex in $V_E$ has at most two neighbors and no two neighbors are contained in the same set $V_i$.
	Hence, in each odd time step, we can visit at most $d$ new vertices from $V_E$. As in each even time step, the vertices in $V_E$ are contained in singleton components, we can only visit the vertices in $V_E$ in odd time steps. 
	We claim that (*) in each even time step $t$, $W$ visits the first component. This way, $W$ visits all vertices in $\bigcup_{1 \leq i \leq k}V_i$ and needs to visit only $kd$ vertices from $V_E$. 
	Assume (*) is not true. Then, $W$ is in a singleton component in time step $t$, and, by construction, stays in a singleton component in time step $t+1$. But then, $W$ does not visit any vertex from $V_E$ in time step $t+1$ and as we can only visit at most $d$ new vertices from $V_E$ in an odd time step (and none in an even time step), it is no longer possible to visit $dk$ vertices from $V_E$ during the walk.
	Now, collecting the vertices from $\bigcup_{1 \leq i \leq k}V_i$ that $W$ visits in odd time steps gives us a multicolored independent set for $G$.
\end{proof}

\begin{theorem}%
	\label{thm:no-kernel-k-L}
	Unless $\coNP \subseteq \NP/\poly$ \textsc{$k$-arb NS-TEXP} does not admit a polynomial kernel parameterized by $k+L$.
\end{theorem}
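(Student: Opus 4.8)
The plan is to invoke \autoref{crosscollapse} through an OR-cross-composition from \textsc{$k$-arb NS-TEXP} into itself. As polynomial equivalence relation $\mathcal{R}$ I would take the one that places two well-formed instances in the same class exactly when they agree on the number of vertices $n$, the lifetime $L$, and the target $k$ (all malformed strings forming one extra class); since $n$, $L$, $k$ are bounded by the input length, this produces at most $(\max_i|x_i|)^{3}+1$ classes. Given instances $(\mathcal{G}_1,v^1,k),\dots,(\mathcal{G}_t,v^t,k)$ of a single class, I would first dispose of the trivial cases $k\le 1$ (output a fixed yes- or no-instance), rename the vertex sets $V^i=V(\mathcal{G}_i)$ to be pairwise disjoint, and pad the family with NO-instances of the same parameters (e.g.\ $n$ vertices that stay isolated for all $L$ time steps, which is a NO-instance because $k\ge 2$) until $t=2^{d}$ with $d=\lceil\log t\rceil\ge 1$.

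\textbf{Construction.} The requirement that the new parameter be polynomial in $\max_i|x_i|+\log t$ forces me to run all instances \emph{in parallel over the same $L$ time steps} and to prepend only $d$ selection steps. I would introduce a complete binary tree of fresh vertices $u_\sigma$, $\sigma\in\{0,1\}^{\le d}$, rooted at $u_\epsilon$, fix a bijection $\tau\mapsto i(\tau)$ between its $t$ leaves and $\{1,\dots,t\}$, and set the start vertex of the composed instance $\mathcal{G}$ to $v=u_\epsilon$. For $j\in\{1,\dots,d\}$, snapshot $j$ of $\mathcal{G}$ is the disjoint union of the cliques on $\{u_\rho,u_{\rho 0},u_{\rho 1}\}$ over all $\rho\in\{0,1\}^{j-1}$, with every other vertex isolated; hence a monotone walk from $u_\epsilon$ can descend at most one level per step and visits at most $2d+1$ tree vertices in the first $d$ steps. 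For $j\in\{1,\dots,L\}$, snapshot $d+j$ of $\mathcal{G}$ contains snapshot $j$ of every $\mathcal{G}_i$ (on its private copy $V^i$), all tree vertices are isolated, and in snapshot $d+1$ I additionally add, for every leaf $u_\tau$, an edge from $u_\tau$ to $v^{i(\tau)}$. Finally I set $k'=k+2d+1$. Then $L'=L+d\le L+\lceil\log t\rceil$ and $k'\le k+2\lceil\log t\rceil+1$ are both polynomial in $\max_i|x_i|+\log t$, and $\mathcal{G}$ is computable in time polynomial in $\sum_i|x_i|$.

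\textbf{Correctness.} I would argue both directions with \autoref{obs:equiv}. If some $(\mathcal{G}_i,v^i,k)$ is a yes-instance, then in $\mathcal{G}$ one descends the tree along the bits of the $\tau$ with $i(\tau)=i$ (visiting $2d+1$ tree vertices), crosses the edge $u_\tau v^{i}$ in step $d+1$, and then follows a $\mathcal{G}_i$-walk from $v^i$ visiting $\ge k$ vertices of $V^i$, so altogether $\ge k'$ vertices are visited. Conversely, any monotone walk of $\mathcal{G}$ visits at most $2d+1$ tree vertices in steps $1,\dots,d$; at step $d+1$ either it never enters an instance (having then visited at most $2d+1<k'$ vertices, as $k\ge 2$) or it enters, through some leaf $u_\tau$, exactly one instance $\mathcal{G}_{i(\tau)}$ and is afterwards permanently confined to $V^{i(\tau)}$, because the sets $V^i$ are pairwise disjoint and the tree vertices are isolated from step $d+1$ on (internal nodes) or from step $d+2$ on (leaves). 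Thus it visits at most $2d+1+|X|$ vertices, where $X\subseteq V^{i(\tau)}$ is visited by a monotone $\mathcal{G}_{i(\tau)}$-walk from $v^{i(\tau)}$, and reaching $k'$ forces $|X|\ge k$, i.e.\ $(\mathcal{G}_{i(\tau)},v^{i(\tau)},k)$ is a yes-instance. Since \textsc{$k$-arb NS-TEXP} is \NP-hard — it contains \textsc{NS-TEXP} via $k=|V(\mathcal{G})|$, and \textsc{NS-TEXP} is \NP-complete (\autoref{thm:const-c}) — \autoref{crosscollapse} yields the claim.

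\textbf{Main obstacle.} The delicate point is meeting the parameter bound. The obvious construction — concatenating the $t$ instances one after another in time, in the style of \autoref{thm:no-kernel-n} — pushes the lifetime to $\Theta(tL)$, which is not polynomial in $\max_i|x_i|+\log t$. Insisting on $L'=O(L+\log t)$ is exactly what makes the parallel layout necessary, and then the selection gadget must also be traversable by only $O(\log t)$ vertices so that $k'$ stays small. The care therefore goes into the component bookkeeping of the $d$ selection steps, ensuring that every walk is forced to commit to a single leaf — hence to a single input instance — before it is able to collect many vertices.
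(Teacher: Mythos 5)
Your proposal is correct and follows essentially the same route as the paper: an OR-cross-composition of \textsc{$k$-arb NS-TEXP} into itself that runs the padded instances in parallel over a common lifetime and prepends a balanced binary selection tree of depth $\log t$, so that any monotone walk must commit to a single instance while collecting only $O(\log t)$ selection vertices. Your bookkeeping is in fact slightly more careful than the paper's (your count of $2d+1$ visited tree vertices is the consistent one, whereas the paper states $2\log(\ell)-1$; either way the constant is the same for every leaf, which is all the argument needs), and your explicit equivalence relation and NP-hardness remark for the source problem fill in details the paper leaves implicit.
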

\begin{toappendix}
\begin{proof}
    We show the result using \autoref{crosscollapse} via a cross-composition from \textsc{$k$-arb NS-TEXP} into itself.
	Therefore, we reduce $\ell$ instances $(\mathcal{G}_1,v^1, k_1), (\mathcal{G}_2, v^2, k_2), \dots, (\mathcal{G}_\ell, v^\ell, k_\ell)$ of \textsc{$k$-arb NS-TEXP} into one instance $(\mathcal{G}, v, k)$ of \textsc{$k$-arb NS-TEXP} such that $(\mathcal{G}, v, k)$ is a yes instance if and only if at least one of the instances $(\mathcal{G}_i, v^i, k_i)$ for $1 \leq i \leq \ell$ is a yes-instance.  Here, $\mathcal{G}_i$ is a temporal graph, $v^i$ is the start vertex, and $k_i$ is the number of vertices to visit in a temporal walk in $\mathcal{G}_i$.
	
	Without loss of generality, we can assume that $\ell$ is of the form $2^x$ and
    all the vertex set $V(\mathcal{G}_i)$ are pairwise disjoint. 
	Further, we can assume $k_1 = k_2 = \dots = k_\ell$. Otherwise, let $k' = \max(k_1, \dots, k_\ell)$. Then, for the $i$th instance, we can add $k' - k_i$ new vertices adjacent to $v_i$ in all snapshots. Thereby we ensure to not change the structure of the instance and reach $k' - k_i$ more vertices. Hence, we can set $k_i := k'$.
	We can further assume that the lifetimes $L_1, L_2, \dots, L_\ell$ of the $\ell$ many temporal graphs are identical as otherwise we can pad graphs with shorter lifetimes by duplicating the last snapshot for every missing time steps. Hence, from now on, we refer to the lifetime $L_1 = L_2 = \dots = L_\ell$ as $L'$ and to the parameter $k_1 = k_2=  \dots = k_\ell$ as~$k'$.
	
	We now construct from $(\mathcal{G}_1,v^1, k_1), (\mathcal{G}_2, v^2, k_2), \dots (\mathcal{G}_\ell, v^\ell, k_\ell)$
	an instance $(\mathcal{G},v, k)$ with $k = k'+2\log(\ell)-1$ and $L = L'+\log(\ell)$.

    We define $V(\mathcal{G}) =\{v_1, v_2, \ldots, v_{2\ell-1}\} \uplus \biguplus_{1\leq i \leq \ell} V(\mathcal{G}_i)$. We then construct the last $L$ snapshot of $\mathcal{G}$ as the union
    of the snapshots in ${(\mathcal{G}_i)}_{1 \leq i \leq \ell}$. Formally, for
    $1 \leq t \leq L$,
    we let $E(\mathcal{G}(t + \log(\ell))) = \biguplus_{1 \leq i \leq \ell} E(\mathcal{G}_i(t))$.
    To define the first $\log(\ell)$ time steps, we use the $2\ell-1$ new vertices $v_1, v_2, \dots v_{2\ell-1}$ to $V(\mathcal{G})$. We arrange the new vertices in a balanced binary tree such that the root of the tree is adjacent to its two children in time step $t=0$ and in each time step $t$ for $1 \leq t \leq \ell-1$, the nodes of the tree in level $t$ are adjacent with their two children.
	In the time step $t=\ell$ the $\ell$ leafs $l_1, l_2, \dots, l_\ell$ of the tree are adjacent to the respective start vertices of the $\ell$ many instances. More precisely, for the time step $t= \ell$, the vertex $l_i$ is adjacent to the start vertex $v^i$ of the $i$th instance.
	For the remaining cases not specified above, each of the newly introduced vertices $v_1, v_2, \dots, v_{2\ell-1}$ is an isolated vertex.
	
	Intuitively, the binary tree added in front of the union of the $\ell$ many temporal graphs allows us to select one of those graphs. During this selection, we visit $2\log(\ell)-1$ many of the newly introduced vertices until we reach an instance $\mathcal{G}_i$ in time step $\ell$. From time step $\ell$ onward, a temporal walk in $\mathcal{G}$ equals a temporal walk in some $\mathcal{G}_i$ dependent on the selection $i$ made by the walk in the binary tree at the beginning. Thereby, it is clear that we can visit $k_j+2\log(\ell)-1$ many vertices on a temporal walk in $\mathcal{G}$ if and only if we can visit $k_j$ many vertices on a temporal graph in some graph $\mathcal{G}_j$.	
\end{proof}
\end{toappendix}


\section{Polynomial Kernel for Bounded Number of Edge Appearances}\label{sec:kern}
In the field of static graphs, trees are nice structures that often lead to efficient algorithms.
Building on this fact, structural parameters, like treewidth, defining a distance to trees were
successfully introduced and used to design efficient algorithms. Unfortunately, we have seen
in \autoref{thm:treeNP} that even when the underlying graph is a tree with vertex
cover number $2$ the problems we consider remain \NP-hard.
Therefore, in this section, we introduce a new structural parameter $p(\mathcal{G})$
which, intuitively, characterises how far $\mathcal{G}$
is from being a tree in which each edge appears exactly once.
Using this new parameter, we investigate the complexity of \textsc{$k$-arb NS-TEXP}.

We restate the definition of $p(\mathcal{G})$. Recall that for a temporal
graph $\GL$, we write $\m = \sum_{i=1}^{L} |E(G_i)|$. Observe that if $\m<n-1$, then the underlying graph $G$ of $\mathcal{G}$ is disconnected. Let $X$ be the set of vertices of the connected component of $G$ containing the source vertex $v$. Then, \textsc{$k$-arb NS-TEXP} for $\G$ is equivalent to \textsc{$k$-arb NS-TEXP} for the temporal graph $\G[X]=(G_1[X],\ldots,G_L[X])$. Hence, we can assume that the underlying graph is connected and $\m\geq n-1$. This allows us to consider the above-guarantee parameter $p(\mathcal{G}) = \mathfrak{m} - n+1$. We show that   \textsc{$k$-arb NS-TEXP} admits a polynomial kernel in $p(\mathcal{G})$ when the underlying graph is connected. In fact, we give a kernel for a more general variant of the problem with vertex weights called \WEX.

The starting point of our kernelization algorithm is a polynomial time algorithm for
\WEX{} on temporal graph in which each edge appears exactly once and the underlying
graph is a tree (\autoref{lem:poly-trees}).
In addition to this algorithm, the crucial observation that leads to a kernel is that the underlying
graph of the input temporal graph has a feedback edge set of bounded size (\autoref{lem:fes}).
By combining this observation and the fact that there is a bounded number of edges
repetitions, we show that the input temporal graph $\mathcal{G}$ has some specific
structure. More specifically, we show that there exists a core set $X \subseteq V(\mathcal{G})$
of vertices of bounded size such that the underlying graph of the remaining graph $\mathcal{G} - X$ is a forest $\mathcal{F}$ with the
following property. In $\mathcal{G} - X$, each edge appears exactly once and each tree
of $\mathcal{F}$ is connected to $X$ by at most two edges (see \autoref{fig:struct} and \autoref{lem:struct}).
For each tree $T$ in $\mathcal{F}$, depending on its structure and its interaction with
$X$, we are able to describe all possible ways an exploration can visit some of the vertices of $T$.
Using the polynomial time algorithm for trees in which edges appear only once, we can then
compute the maximum weight contributed by the vertices in $T$ for each of those cases
and design a gadget of constant size that simulates the original tree $T$. Thereby, the gadget keeps the information of how many vertices, respectively vertex weights, were reachable in $T$ and is the reason why we need to work with the weighted version of \textsc{$k$-arb-NS-TEXP}.
This gives us several reduction rules (\autoref{rule:trees}-\ref{rule:contr}).
To conclude the kernel, we show that after applying these reduction rules exhaustively,
the obtained temporal graph has linear size in $p$ (\autoref{cl:size}).
Lastly, by using an algorithm proposed by Frank and Tardos (\autoref{prop:FT}), we
can bound the size of the obtained weights on the vertices by $\mathcal{O}(p^4)$.

Throughout this section we assume that considered temporal graphs have connected underlying graphs. Let $\GL$ be a temporal graph and $G$ its underlying graph.
For $e\in E(G)$, we denote by $A(e)=\{t\colon e\in E(G_t)\text{ for }1\leq t\leq L\}$ and set $m(e)=|A(e)|$. Note that $\m(\G)=\sum_{e\in E(G)}m(e)$.

By \autoref{obs:equiv}, an instance $(\G,w,v,k)$ of \WEX{} is a yes-instance if and only if the underlying graph $G$ has a monotone $(v,x)$-walk $W$ for some $x\in V(G)$ such that $w(V(W))\geq k$. Slightly abusing notation, we write 
$w(W)$ instead of $w(V(W))$ for the total weight of vertices visited by a walk.

We start by showing that the underlying graph has a feedback edge set of bounded size.
Recall that a set of edges $S\subseteq E(G)$ is a \emph{feedback edge set} of a graph $G$, if $G-S$ is a forest.
Let $\GL$ be a temporal graph and let  $p = \mathfrak{m} - n+1$. We denote by $R$ the set of edges of the underlying graph $G$ appearing at least twice in $\G$, that is, $e\in R$ if there are distinct $i,j\in[L]$ such that $e\in E(G_i)$ and $e\in E(G_j)$.
We refer to the edges of $R$ as \emph{red}. We set $B=E(G)\setminus R$ and call the edges of $B$ \emph{blue}.
Clearly, each blue edge occurs in $G_i$ for a unique $i\in[L]$.
We can make the following observation.

\begin{lemma}\label{lem:fes}
Let $\G$ be a temporal graph and let  $p = \mathfrak{m} - n+1$. Then, the underlying graph $G$ of $\G$ has a feedback edge set $S$ of size at most $p$ such that every red edge is in $S$.
\end{lemma}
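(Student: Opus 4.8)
The plan is to build $S$ explicitly as the complement, inside $E(G)$, of a spanning forest of the blue subgraph. First I would let $F$ be a spanning forest of $G - R$, that is, of the graph $(V(G), B)$ consisting only of the blue edges, and set $S = E(G) \setminus E(F)$. Then $G - S = F$ is a forest, so $S$ is a feedback edge set of $G$; and since $E(F) \subseteq B$ is disjoint from $R$, every red edge lies in $S$. So the only remaining task is to check that $|S| \leq p$.

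For the size bound I would argue as follows. Let $c$ be the number of connected components of $G - R$. A spanning forest of an $n$-vertex graph with $c$ components has exactly $n - c$ edges, so $|E(F)| = n - c$ and hence $|S| = |E(G)| - n + c$. Now two elementary facts finish it. First, since $G$ is connected, deleting the $|R|$ red edges increases the number of components by at most $|R|$, so $c \leq |R| + 1$. Second, each red edge appears at least twice and each blue edge exactly once, so $\mathfrak{m} = \sum_{e \in E(G)} m(e) \geq 2|R| + |B| = |R| + |E(G)|$, i.e., $|E(G)| \leq \mathfrak{m} - |R|$. Putting these together, $|S| = |E(G)| - n + c \leq (\mathfrak{m} - |R|) - n + (|R| + 1) = \mathfrak{m} - n + 1 = p$, as desired.

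I do not expect a real obstacle here; the only thing that needs care is choosing the right object to complement — a spanning forest of the \emph{blue} subgraph rather than an arbitrary spanning tree of $G$ — so that the inclusion $R \subseteq S$ is automatic. The one subtlety worth flagging is that $G - R$ may be disconnected, which is precisely why the component-count term $c$ appears in $|S|$; the point of the first inequality is that this term is itself bounded by $|R| + 1$, and that extra $|R|$ is exactly absorbed by the slack $\mathfrak{m} - |E(G)| \geq |R|$ coming from repeated edge appearances.
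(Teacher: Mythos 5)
Your proof is correct and follows essentially the same approach as the paper: the paper also builds $S$ as (the complement of) a spanning structure plus the red edges, and both arguments hinge on the same slack $\mathfrak{m}-|E(G)|\geq |R|$ coming from each red edge appearing at least twice. The only cosmetic difference is that the paper takes an arbitrary feedback edge set $S'$ of $G$ of size $p-q$ and sets $S=S'\cup R$, whereas you choose the spanning forest inside the blue subgraph from the start so that $R\subseteq S$ holds automatically; the counting is the same.
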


\begin{proof}
Let $R$ be the set of red edges of $G$. Consider $q=\sum_{e\in R}m(e)-|R|\geq |R|$. 
Observe that $G$ has exactly $n-1+p-q$ edges. Therefore, $G$ has a feedback edge set $S'$ of size $p-q$. Then, $S=S'\cup R$ is a feedback edge set of $G$ of size at most $p$ such that $R\subseteq S$. 
\end{proof}
 
\noindent \autoref{lem:fes} implies that $G$ has a special structure that is used in our algorithm (see \autoref{fig:struct}).

\begin{figure}[ht]
\centering
\scalebox{0.7}{%
\input{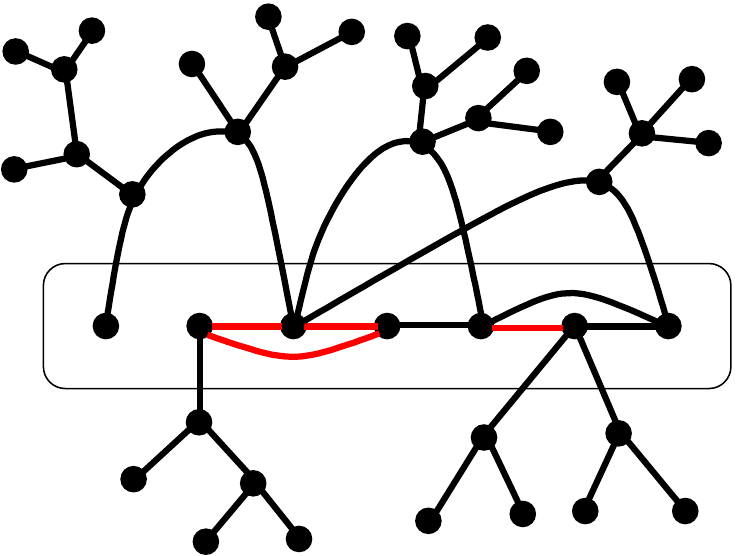_t}}
\caption{Structure of the underlying graph $G$ of a temporal graph $\mathcal{G}$ when the
    parameter $p = \mathfrak{m} - n+1$ is bounded. Here $|X| \leq 4p$.
    (\autoref{lem:struct})
}%
\label{fig:struct}
\end{figure}

\begin{lemma}\label{lem:struct}
    Let $\G$ be a temporal graph with the underlying graph $G$ and let  $p = \mathfrak{m} - n+1$.  Let also $v\in V(G)$. Then, there is a set of vertices $X\subseteq V(G)$ of size at most $4p$ such that (i)~$v\in X$, (ii)~every red edge is in $G[X]$, (iii) $G-E(G[X])$ is a forest and (iv)~each connected component $F$ of $G-X$ is a tree with the properties that each vertex $x\in X$ has at most one neighbor in $F$ and 
\begin{enumerate}[(a)]
    \item\label{lem:struct:a} either $X$ has a unique vertex $x$ that has a neighbor in $F$,
    \item\label{lem:struct:b} or $X$ contain exactly two vertices $x$ and $y$ having neighbors in $F$. 
\end{enumerate}
Furthermore, if $q$ is the number of connected components $F$ of $G-X$ satisfying (\ref{lem:struct:b}), then $q\leq 4p-1$ and $G[X]$ has at most $5p-q-1$ edges.
\end{lemma}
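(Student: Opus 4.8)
Assume $p\ge 1$ (if $p=0$ then $G$ is a tree in which every edge appears once, and \autoref{lem:poly-trees} handles this case). By \autoref{lem:fes}, fix a feedback edge set $S$ of $G$ with $|S|\le p$ containing every red edge, and let $T:=G-S$; this is a spanning forest of $G$ and all of its edges are blue. Let $W$ be the set of endpoints of the edges of $S$, so $2\le |W|\le 2p$, and set $Z:=W\cup\{v\}$, so $2\le |Z|\le 2p+1$. Since $G$ is connected and the edges of $G$ outside $T$ are exactly those of $S$ (whose endpoints all lie in $W$), every component of $T$ meets $Z$, and $G-X=T-X$ for every $X\supseteq W$.

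The construction of $X$ adds to $Z$ the branch vertices of the Steiner forest of $Z$ in $T$. Let $\mathrm{St}$ be the union over the components $C$ of $T$ of the unique inclusion-minimal subtree of $C$ spanning $Z\cap C$; minimality forces every leaf of $\mathrm{St}$ to lie in $Z$. Let $\mathcal{B}$ be the set of vertices of degree at least $3$ in $\mathrm{St}$ that are not in $Z$; since in any tree the number of vertices of degree $\ge 3$ is at most (number of leaves) $-2$, summing over components gives $|\mathcal{B}|\le |Z|-2$. Put $X:=Z\cup\mathcal{B}\subseteq V(\mathrm{St})$. Then $v\in X$ and $|X|\le |Z|+|\mathcal{B}|\le 2|Z|-2\le 4p$, giving (i) and the size bound. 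Property (ii) holds because every red edge is in $S$ and hence has both endpoints in $W\subseteq X$; property (iii) holds because $S\subseteq E(G[X])$, so $G-E(G[X])\subseteq G-S=T$ is a forest.

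For (iv), let $F$ be a component of $G-X=T-X$; it is a tree. No $x\in X$ can have two neighbours in $F$: such neighbours together with the path joining them in $F$ would close a cycle, and all edges involved are edges of $T$ (an edge meeting $F$ is blue and not in $S$). To count the vertices of $X$ adjacent to $F$, I would distinguish whether $F$ meets $V(\mathrm{St})$. If it does not, then $F$ is precisely one of the pendant subtrees hanging off $\mathrm{St}$; such a subtree is joined to $T$ by a single edge, whose other endpoint lies in $V(\mathrm{St})$ and, being adjacent to $F$ but outside $F$, lies in $X$ — this is case (a). If $F$ does meet $V(\mathrm{St})$, then $F\cap V(\mathrm{St})$ is connected, acyclic, and of maximum degree $\le 2$ in $\mathrm{St}$ (no end of it can be a leaf of $\mathrm{St}$, as those are terminals in $Z$), so it is a path $P$; moreover $F$ consists of $P$ together with the pendant subtrees attached along $P$. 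The two vertices of $\mathrm{St}$ immediately beyond the ends of $P$ are then terminals or branch vertices — hence in $X$ — are distinct (otherwise we again get a cycle in $T$), and are the only vertices of $X$ adjacent to $F$; this is case (b).

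For the counting in the ``Furthermore'', I would pass to the \emph{reduced} forest obtained from $\mathrm{St}$ by suppressing every degree-$2$ vertex not in $Z$. Its vertex set is exactly $X$, it is a forest with at least one component, and so it has at most $|X|-1\le 4p-1$ edges. Each of its edges arises either from a single edge of $T$ with both endpoints in $X$ (equivalently an edge of $E(G[X])\cap E(T)$) or from a path with at least one interior vertex in $V(\mathrm{St})\setminus X$; by the analysis of (iv) the latter edges correspond bijectively to the components of $G-X$ of type (b). Hence $q+|E(G[X])\cap E(T)|$ equals the number of edges of the reduced forest, so $q\le 4p-1$; and since $E(G)=S\uplus E(T)$ with $E(G[X])\cap S=S$ (every endpoint of an edge of $S$ lies in $W\subseteq X$) of size at most $p$, we get $|E(G[X])|=|S|+|E(G[X])\cap E(T)|\le p+(4p-1-q)=5p-q-1$. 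The delicate point is the case analysis for (iv): establishing the clean ``one or two attachment vertices'' dichotomy is exactly what dictates enlarging $Z$ by the Steiner branch vertices and forces a careful description of $\mathrm{St}-X$ as maximal paths plus pendant subtrees, and aligning precisely this description with the edges of the reduced forest is then what yields the tight bounds on $q$ and on $|E(G[X])|$.
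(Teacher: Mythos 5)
Your proof is correct and follows essentially the same route as the paper: a feedback edge set containing the red edges (\autoref{lem:fes}), $X$ taken as the source plus the endpoints of that set plus the degree-$\ge 3$ vertices of the resulting Steiner forest (the paper obtains the same object by iteratively pruning leaves), and the final counts via contracting the type-(b) components onto $X$ to get a forest with at most $|X|-1\le 4p-1$ edges. Your write-up is in fact more explicit than the paper's in verifying property (iv) and in separating $E(G[X])$ into $S$ and tree edges for the $5p-q-1$ bound, but the underlying argument is the same.
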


\begin{proof}
By \autoref{lem:fes}, $G$ has a feedback edge set $S$ of size at most $p$ containing all red edges. We  define $Y=\{v\}\cup\{x\in V(G)\mid x\text{ is an endpoint of an edge of }S\}$. Note that $|Y|\leq 2p+1$.
Consider the graph $G'$ obtained from $G$ by the iterative deletion of vertices of degree one that are not in $Y$. Because $S$ is a feedback edge set, we obtain that $H=G'- S$ is a forest such that every vertex of degree at most one is in $Y$. It is a folklore knowledge that any tree with $\ell\geq 2$ vertices of degree one (leaves) has at most $\ell-2$ vertices of degree at least three. This implies that the set $Z$ of vertices of $H$ with degree at least three has size at most $2p-1$. We define $X=Y\cup Z$. By the construction, $|X|\leq 4p$ and (i)--(iv) are fulfilled.  

Let $q$ be the number of connected components $F$ of $G-X$ satisfying (\ref{lem:struct:b}).
Consider the multigraph $G''$ obtained from $G[X]$ by the following operation: for each connected component $F$ of $G-X$ satisfying (b), where $x$ and $y$ are the vertices of $X$ having neighbors in $F$, add the edge $xy$ to $G[X]$ and make it a multi-edge if $xy\in E(G[X])$.
Because of (iii), $G'' - E(G[X])$ is a tree on $|X|$ vertices, therefore, we have that $q\leq |X|-1\leq 4p-1$.
Because $G$ has a feedback edge set of size at most $p$, $G''$ has the same property. Therefore, $G''$ has at most $|X|-1+p\leq 5p-1$ edges. Hence, $G[X]$ has at most $5p-q-1$ edges. 
\end{proof}

By \autoref{cor:single}, \textsc{NS-TEXP} is \NP-complete even if every edge of the temporal graph appears only once. We show that \WEX can be solved in polynomial time if each edge appears exactly once and the underlying graph is a tree. If an edge $e$ appears exactly once, then we use $t(e)$ to denote the unique element of $A(e)$. 

\begin{lemma}\label{lem:poly-trees}
There is an algorithm running in $\Oh(nL)$ time that, given a temporal graph $\mathcal{T}=(F_1,\ldots,F_L)$ such that its underlying graph $T$ is a tree with  $m(e)=1$ for each $e\in E(T)$, a weight function $w\colon V(T)\rightarrow \mathbb{Z}_{>0}$ and two vertices $x,y\in V(T)$, either finds the maximum weight of a monotone $(x,y)$-walk $W$ in $T$ 
or reports that such a walk does not exist.
\end{lemma}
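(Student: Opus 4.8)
The plan is to reduce the problem to a clean structural description of monotone $(x,y)$-walks in a tree whose edges carry a single time-stamp $t(e)$ (the unique element of $A(e)$), and then evaluate that description by a linear scan. Let $P=u_0,g_1,u_1,\ldots,g_k,u_k$ be the unique $x$--$y$ path in $T$, with $x=u_0$ and $y=u_k$. First I would settle when a monotone $(x,y)$-walk exists: I claim it exists if and only if $t(g_1)\le t(g_2)\le\cdots\le t(g_k)$. Sufficiency is immediate, since one may simply traverse $P$. For necessity, fix a monotone $(x,y)$-walk $W$ and two consecutive edges $g_j,g_{j+1}$: since $u_j$ lies on the side of $T-g_j$ not containing $x$, every crossing of $g_{j+1}$ by $W$ is preceded by a crossing of $g_j$, so the first crossing of $g_{j+1}$ comes after the first crossing of $g_j$; as all crossings of a fixed edge $e$ use time $t(e)$ and $W$ is monotone, $t(g_j)\le t(g_{j+1})$.

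Assume now $P$ is time-monotone. Set $\alpha_0=0$, $\alpha_i=t(g_i)$ for $i\ge 1$, $\beta_k=L+1$, $\beta_i=t(g_{i+1})$ for $i\le k-1$, and let $T_i$ be the connected component of $u_i$ in $T-E(P)$, so that $V(T_0),\ldots,V(T_k)$ partition $V(T)$. The heart of the proof is the claim that the maximum weight of a monotone $(x,y)$-walk equals $\sum_{i=0}^{k}\mu_i$, where $\mu_i$ is the total weight of the set $D_i$ of vertices $v\in V(T_i)$ such that the $u_i$-to-$v$ path inside $T_i$ uses only edges of one common time $\tau$ with $\alpha_i\le\tau\le\beta_i$ (in particular $u_i\in D_i$). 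For the upper bound, take a monotone $(x,y)$-walk $W$ and a visited $v\in V(T_i)\setminus\{u_i\}$: both $x$ and $y$ lie on the $u_i$-side of every edge $h$ of the $u_i$-to-$v$ path, so $W$ crosses each such $h$ an even (hence $\ge 2$) number of times; a first-crossing/last-crossing argument along this path, together with the facts that the first crossing of $h$ follows the first crossing of $g_i$ and that some crossing of $g_{i+1}$ follows the last crossing of $h$ (with the obvious simplifications when $i\in\{0,k\}$, where the corresponding side of the window is unbounded), forces all edges of that path to carry a single time lying in $[\alpha_i,\beta_i]$; thus $V(W)\cap V(T_i)\subseteq D_i$, and since the $T_i$ are disjoint, $w(W)\le\sum_i\mu_i$. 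For the lower bound, I would exhibit the walk that follows $P$ and, upon first reaching each $u_i$ in order, performs — in increasing order of $\tau$ — the out-and-back detours realising the monochromatic subtrees of $T_i$ that constitute $D_i$, before leaving $u_i$ along $g_{i+1}$; every detour at $u_i$ uses times in $[\alpha_i,\beta_i]$, so the concatenation is monotone and visits exactly $\bigcup_i D_i$.

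The algorithm then is: compute $t(e)$ for all edges in one sweep over the snapshots; find $P$ by a search from $x$; reject if $P$ is not time-monotone; otherwise, for each $i=0,\ldots,k$ compute $\mu_i$ by exploring $T_i$ outward from $u_i$, where the search may step from $u_i$ to a neighbour $w$ along an edge of time $\tau$ only when $\alpha_i\le\tau\le\beta_i$, and may step from an already-reached vertex $w\neq u_i$ to a deeper neighbour only along an edge whose time equals the time $\tau$ by which the current branch was entered; summing $w(\cdot)$ over all vertices reached yields $\mu_i$. Positivity of $w$ is exactly what guarantees it is never disadvantageous to collect a reachable vertex. Because the $T_i$ are pairwise edge-disjoint, all these explorations together take $O(n)$ time, so the whole procedure — dominated by reading the $L$ snapshots plus $O(n)$ computation — runs in $O(nL)$ time and outputs $\sum_{i=0}^k\mu_i$, or reports infeasibility.

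The main obstacle is the structural claim of the second step, and within it the \emph{monochromatic-path} characterisation of which detour vertices are reachable from an interior path vertex; the existence criterion, the construction of the extremal walk, and the running-time analysis are routine once this characterisation is established.
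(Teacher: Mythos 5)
Your proposal is correct and follows essentially the same route as the paper: both check that the unique $x$--$y$ path is time-monotone, and both characterize the optimal vertex set as, for each path vertex $v_i$, the union over $t\in[t(e_i),t(e_{i+1})]$ of the vertices reachable from $v_i$ by timestamp-$t$ edges (your ``monochromatic $u_i$-to-$v$ path within the window'' condition is exactly membership in the connected component of $v_i$ in the snapshot $F_t$, which is how the paper phrases it). The existence criterion, the out-and-back detour construction for the lower bound, and the $\Oh(nL)$ accounting all match the paper's argument.
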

\begin{toappendix}
\begin{proof}
Let  $\mathcal{T}=(F_1,\ldots,F_L)$ be a temporal graph such that its underlying graph $T$ is a tree with  $m(e)=1$ for each $e\in E(T)$. Let $x,y\in V(T)$ and $w\colon V(T)\rightarrow \mathbb{Z}_{>0}$ be a weight function. 

If $y=x$, that is, $W$ is a closed walk, then let $U=\bigcup_{t\in[L]}V(C_t)$, where each $C_t$ is the connected component of $F_t$ containing $x$. Then, because $T$ is a tree and each edge appears once in $\mathcal{T}$, a monotone $(x,x)$-walk $W$ in $T$ of maximum weight 
visits exactly the vertices of $U$ and the weight of $W$ is $w(U)$.
Since the connected components of each $F_t$ can be found in $\Oh(n)$ time, we can construct $U$ and compute its weight in $\Oh(nL)$ time. 

From now on we assume $y\neq x$. Consider the unique $(x,y)$-path $P=v_0,e_1,v_1,\dots, e_r,v_r$ in $T$ with $x=v_0$ and $y=v_r$. For any monotone $(x,y)$-walk $W$, $e_1,\ldots,e_r\in E(W)$ and $t(e_1)\leq\dots\leq t(e_r)$.  Moreover, 
$T$ has a monotone $(x,y)$-walk if and only if $t(e_1)\leq\dots\leq t(e_r)$. 
Suppose that $T$ has a monotone $(x,y)$-walk.

For $i\in[r-1]$, we say that $v_i$ is a \emph{concatenation point} if $t(e_{i})<t(e_{i+1})$. Because $T$ is a tree and each edge appears in $\mathcal{T}$ once, we have that 
$W$ is a monotone $(x,y)$-walk such that $V(W)$ is inclusion-maximal if and only if
$V(W)=\bigcup_{t\in[L]}V(C_t)$ and $E(W)=\bigcup_{t\in[L]}E(C_t)$, where each $C_t$ is a connected component of $F_t$ such that
\begin{enumerate}[(i)]
    \item\label{proof:first} either $t=t(e_i)$ for some $i\in[r]$ and  $C_t$ is a connected component of $F_t$ containing $e_i$,
\item or $t<t(e_1)$ and $C_t$ is the connected component of $F_t$ containing $v_0$,
 \item or $t(e_{i})< t<t(e_{i+1})$ for some $i\in[r-1]$ and $C_t$ is the connected component of $F_t$ containing $v_i$,
 \item\label{proof:last} or $t(e_r)<t$  and $C_t$ is the connected component of $F_t$ containing $v_r$.
\end{enumerate} 

Using this observation, we find the maximum weight of a $(x,y)$-walk. First, in $\Oh(n)$ time we find the unique $(x,y)$-path $P=v_0,e_1,v_1,\ldots, e_r,v_r$ in $T$  and check whether  $t(e_1)\leq\cdots\leq t(e_r)$. If the latter condition is not fulfilled, we report that a monotone $(x,y)$-walk  does not exist. Otherwise, we find the connected components $C_t$ of $F_t$ for $t\in[L]$ satisfying (\ref{proof:first})--(\ref{proof:last}). Then, we construct $U=\bigcup_{t\in[L]}^{L}V(C_t)$ and return $w(U)$. Because the connected components of each $F_t$ can be found in $\Oh(n)$ time, the overall running time is $\Oh(nL)$. 
\end{proof}
\end{toappendix}

To reduce the vertex weights in our kernelization algorithm, we use the approach proposed by Etscheid et al.~\cite{DBLP:journals/jcss/EtscheidKMR17} that is based on the result of Frank and Tardos~\cite{DBLP:journals/combinatorica/FrankT87}.

\begin{proposition}[\cite{DBLP:journals/combinatorica/FrankT87}]\label{prop:FT}
 There is an algorithm that, given a vector $\mathsf{w}\in \mathbb{Q}^r$ and an integer~$N$, in polynomial time finds a vector $\overline{\mathsf{w}}\in \mathbb{Z}^r$ with $\|\overline{\mathsf{w}} \|_{\infty}\leq 2^{4r^3}N^{r(r+2)}$ such that 
 $\mathsf{sign}(\mathsf{w}\cdot b)=\mathsf{sign}(\overline{\mathsf{w}}\cdot b)$ for all vectors $b\in\mathbb{Z}^r$ with $\|b\|_1\leq N-1$.
\end{proposition}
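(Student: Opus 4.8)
The plan is to reduce the statement to the algorithmic version of \emph{simultaneous Diophantine approximation} that follows from lattice basis reduction (the Lenstra--Lenstra--Lov\'asz algorithm): given $\alpha\in\mathbb{Q}^{s}$ and a rational $0<\epsilon<1$, one can compute in polynomial time an integer $q\geq 1$ with $q\leq 2^{s(s+1)/4}\epsilon^{-s}$ and an integer vector $\mathsf{p}\in\mathbb{Z}^{s}$ with $\|q\alpha-\mathsf{p}\|_{\infty}\leq\epsilon$. Taking this as a black box, I would prove the proposition by induction on $r$; the base cases $\mathsf{w}=0$ and $r=1$ are handled by returning $\overline{\mathsf{w}}=0$, respectively $\overline{\mathsf{w}}=(\mathsf{sign}(w_1))$, which trivially satisfy the bound.

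For the inductive step I first normalize: permuting the coordinates and negating $\mathsf{w}$ only permutes and sign-flips the numbers $\mathsf{w}\cdot b$ (so the property is preserved once these operations are undone on the output), and multiplying $\mathsf{w}$ by a positive rational does not change any $\mathsf{sign}(\mathsf{w}\cdot b)$; hence we may assume $w_1=1\geq|w_i|$ for all $i$. Apply the Diophantine routine to $(w_2,\dots,w_r)$ with $\epsilon=1/N$ to obtain a polynomially bounded integer $q\geq 1$ and integers $p_2,\dots,p_r$ with $|qw_i-p_i|\leq 1/N$; set $p_1=q$ and $\mathsf{p}=(p_1,\dots,p_r)\in\mathbb{Z}^r$, so $\|q\mathsf{w}-\mathsf{p}\|_{\infty}\leq 1/N$ and $\mathsf{p}\neq 0$. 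The crucial point is that for every $b\in\mathbb{Z}^r$ with $\|b\|_1\leq N-1$ we have $|(q\mathsf{w}-\mathsf{p})\cdot b|\leq (N-1)/N<1$, and since $\mathsf{p}\cdot b$ is an integer and $q>0$ this forces $\mathsf{sign}(\mathsf{w}\cdot b)=\mathsf{sign}(\mathsf{p}\cdot b)$ whenever $\mathsf{p}\cdot b\neq 0$. To settle the remaining case, note that if $\mathsf{p}\cdot b=0$ then $b_1=-(p_2b_2+\dots+p_rb_r)/q$, and substituting this into $q(\mathsf{w}\cdot b)=(q\mathsf{w}-\mathsf{p})\cdot b$ shows $\mathsf{sign}(\mathsf{w}\cdot b)=\mathsf{sign}\!\big(\mathsf{w}^{(1)}\cdot b'\big)$, where $b'=(b_2,\dots,b_r)$, $\mathsf{w}'=q\mathsf{w}-\mathsf{p}$, and $\mathsf{w}^{(1)}=(w'_2,\dots,w'_r)-(w'_1/q)(p_2,\dots,p_r)\in\mathbb{Q}^{r-1}$. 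Since $\|b'\|_1\leq\|b\|_1\leq N-1$, I recurse on $(\mathsf{w}^{(1)},N)$ in dimension $r-1$, obtaining $\overline{\mathsf{w}}^{(1)}\in\mathbb{Z}^{r-1}$, and output $\overline{\mathsf{w}}=M\mathsf{p}+(0,\overline{\mathsf{w}}^{(1)})$ with $M=\|\overline{\mathsf{w}}^{(1)}\|_{\infty}(N-1)+1$. Then $\overline{\mathsf{w}}\cdot b=M(\mathsf{p}\cdot b)+\overline{\mathsf{w}}^{(1)}\cdot b'$: when $\mathsf{p}\cdot b\neq 0$ the first term dominates (as $|\overline{\mathsf{w}}^{(1)}\cdot b'|\leq\|\overline{\mathsf{w}}^{(1)}\|_{\infty}(N-1)<M$), giving $\mathsf{sign}(\overline{\mathsf{w}}\cdot b)=\mathsf{sign}(\mathsf{p}\cdot b)=\mathsf{sign}(\mathsf{w}\cdot b)$; and when $\mathsf{p}\cdot b=0$ we get $\overline{\mathsf{w}}\cdot b=\overline{\mathsf{w}}^{(1)}\cdot b'$, whose sign is $\mathsf{sign}(\mathsf{w}^{(1)}\cdot b')=\mathsf{sign}(\mathsf{w}\cdot b)$ by the recursion.

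It remains to do the bookkeeping. The algorithm is polynomial: there are at most $r$ recursion levels, each calling the polynomial-time LLL-based routine once and performing arithmetic on integers of bit-length $\mathrm{poly}(r,\log N)$ (the $q$'s and $\mathsf{p}$'s being bounded by $2^{\Oh(r^2)}N^{\Oh(r)}$). For the norm, one level multiplies the previous bound by at most $M\cdot\|\mathsf{p}\|_{\infty}$, and since $\|\mathsf{p}\|_{\infty}\leq q+1\leq 2^{(r-1)r/4}N^{r-1}+1$ and $M\leq\|\overline{\mathsf{w}}^{(1)}\|_{\infty}N$, this factor is $2^{\Oh(r^2)}N^{\Oh(r)}$; multiplying over the $r$ levels gives a bound of the form $2^{\Oh(r^3)}N^{\Oh(r^2)}$, and tracking the constants yields $\|\overline{\mathsf{w}}\|_{\infty}\leq 2^{4r^3}N^{r(r+2)}$. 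I expect the main obstacle to be the sublattice step: verifying that $\mathsf{w}^{(1)}$ faithfully records the signs of $\mathsf{w}\cdot b$ on $\{b:\mathsf{p}\cdot b=0\}$ and that the recursion may reuse the same parameter $N$, and then choosing $\epsilon$, $M$, and the normalization so that the per-level norm blow-up, compounded over $r$ levels, fits under the stated exponents.
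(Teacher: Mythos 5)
The paper does not actually prove this proposition: it is quoted verbatim from Frank and Tardos~\cite{DBLP:journals/combinatorica/FrankT87} and used as a black box in the final weight-reduction rule, so there is no in-paper argument to compare against. What you have written is, in substance, a reconstruction of the original Frank--Tardos proof: LLL-based simultaneous Diophantine approximation produces an integer vector $\mathsf{p}$ (with $p_1=q$) such that $|(q\mathsf{w}-\mathsf{p})\cdot b|\leq (N-1)/N<1$ for all $\|b\|_1\leq N-1$, which pins down $\mathsf{sign}(\mathsf{w}\cdot b)$ whenever $\mathsf{p}\cdot b\neq 0$; the degenerate directions $\mathsf{p}\cdot b=0$ are delegated to a recursive call in dimension $r-1$ with the same $N$ (legitimate, since $\|b'\|_1\leq\|b\|_1$), and the two answers are recombined with a multiplier $M$ chosen so that the $\mathsf{p}$-part dominates. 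These steps all check out; in particular, your normalisation $w_1=1$ forces $w_1'=qw_1-p_1=0$, so the correction term $(w_1'/q)(p_2,\dots,p_r)$ in your definition of $\mathsf{w}^{(1)}$ vanishes and the sublattice reduction is simply $q(\mathsf{w}\cdot b)=(w_2',\dots,w_r')\cdot b'$. The only piece you leave unverified is the constant-tracking, but the slack is generous: each level multiplies the norm by at most roughly $N\|\mathsf{p}\|_{\infty}+1$ with $\|\mathsf{p}\|_{\infty}\leq 2^{r(r-1)/4}N^{r-1}+1$, and the product over the $r$ levels is of order $2^{r^3/12}N^{r(r+1)/2}$ plus lower-order terms, comfortably below $2^{4r^3}N^{r(r+2)}$. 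So your proposal is a sound (if sketched) proof of the cited result, whereas the paper simply imports it.
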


In fact, the algorithm of Frank and Tardos is strongly polynomial.

Now we are ready to prove the main result of the section. We remind that a pair $(P,Q)$ of subsets of $V(G)$ is called a \emph{separation} of $G$ if $P\cup Q=V(G)$ and there is no edge $xy$ with $x\in P\setminus Q$ and $y\in Q\setminus P$. The set $P\cap Q$ is a \emph{separator} and $|P\cap Q|$ is called the \emph{order} of a separation $(P,Q)$. 
If a separation has order one, then the single vertex of $P\cap Q$ is called a \emph{cut-vertex} and we say that this vertex is a separator. Notice that it may happen that $P\subseteq Q$ or $Q\subseteq P$ and we slightly abuse the standard notation, as it may happen that $P\cap Q$ does not separate $G$.
We also remind that an edge cut of $G$ is a partition $(P,Q)$ of $V(G)$ and the edge cut-set is the set of all edges $xy$ for $x\in P$ and $y\in Q$.

\begin{theorem}
	\label{thm:kernel-p}
	\WEX parameterized by $p = \mathfrak{m}-n+1$ admits a kernel of size $\Oh(p^4)$ for connected underlying graphs such that for the output instance $(\G=(G_1,\ldots,G_{L}),w,v,k)$, $\G$ has $\Oh(p)$ vertices and edges, and $L\in\Oh(p)$.
\end{theorem}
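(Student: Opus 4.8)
The plan is to assemble the kernelization algorithm from the structural results already established: \autoref{lem:struct} gives us the core set $X$ of size at most $4p$ such that $G - E(G[X])$ is a forest, each tree of $G-X$ attaches to $X$ via at most two vertices, there are at most $4p-1$ "two-attachment" trees, and $G[X]$ has at most $5p-q-1$ edges. \autoref{lem:poly-trees} gives a polynomial-time subroutine for \WEX on a tree in which each edge appears once, for any prescribed pair of endpoints $x,y$. The whole design is: classify the trees $F$ of $G-X$, replace each by a constant-size gadget computed via \autoref{lem:poly-trees}, bound the resulting number of gadgets and of leftover trees, and finally shrink the weights via \autoref{prop:FT}.

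\textbf{Reduction rules for trees.} First I would handle a tree $F$ attached to $X$ at a single vertex $x$ (case (\ref{lem:struct:a})). Since every edge of $F$ appears exactly once and $F\cup\{x\}$ is a tree, any monotone walk that enters $F$ from $x$ and returns to $x$ (or ends in $F$) is governed by \autoref{lem:poly-trees}: I compute, for the pair $(x,x)$, the maximum weight $w_{\mathrm{in\text{-}out}}$ of a closed monotone walk visiting $F$-vertices, and for pairs $(x,y)$ over $y\in V(F)$ the maximum weight $w_{\mathrm{in}}$ of a walk that ends in $F$. Because the walk can visit $F$ at most once (the single bridge from $x$ to $F$ has a unique appearance time, so once you leave $F$ you cannot come back), these two numbers fully summarize $F$'s contribution. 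I then replace $F$ by a constant-size gadget: e.g. a vertex $a$ adjacent to $x$ (edge present only at the relevant time) carrying weight $w_{\mathrm{in\text{-}out}}-[\text{weight already counted}]$, plus a pendant encoding the extra gain $w_{\mathrm{in}}-w_{\mathrm{in\text{-}out}}$ of ending inside. One must be slightly careful about off-by-one weight bookkeeping (the vertex $x$ itself belongs to $X$, not to $F$), and about the degenerate cases where no monotone walk into $F$ exists. For a tree attached at two vertices $x,y$ (case (\ref{lem:struct:b})), the possible behaviours are richer — the walk may enter at $x$ and leave at $y$, enter and leave at $x$, enter and leave at $y$, or enter at one and stop inside — but each is again a bounded menu of $(\cdot,\cdot)$-queries to \autoref{lem:poly-trees}, so $F$ is replaced by a constant-size gadget (a short path between $x$ and $y$ with a couple of pendants, with edge-appearance times copied from the boundary edges of $F$ and weights set to the computed maxima). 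These are \autoref{rule:trees}--\ref{rule:contr}; safeness of each is the claim that the gadget's reachable-weight profile, as a function of entry/exit vertices and time, matches $F$'s.

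\textbf{Size bound and weight reduction.} After exhaustive application, the underlying graph consists of $X$ (with $|X|\le 4p$ and at most $5p-q-1$ edges inside), at most $q\le 4p-1$ two-attachment gadgets (each of constant size), and the one-attachment gadgets. The latter could still be numerous — many pendant stars hanging off a single vertex of $X$ — so I need one more rule: consolidate all one-attachment gadgets at a fixed vertex $x\in X$. Here the key combinatorial fact is that such gadgets are ordered by the unique appearance time of their bridge to $x$, and a monotone walk visiting $x$ repeatedly can pick up consecutive ones only when no edge of $G[X]$ appears in between; since $G[X]$ has $\Oh(p)$ edge-appearances, only $\Oh(p)$ "slots" matter, and within a slot we keep just the best gadget (merging weights). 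That gives $\Oh(p)$ gadgets per vertex of $X$, hence $\Oh(p^2)$ total — which I would then further reduce, or simply accept, noting \autoref{cl:size} claims a \emph{linear} bound, so the slot argument must in fact be sharpened to $\Oh(p)$ gadgets overall by charging slots globally to the $\Oh(p)$ edge-appearances in $G[X]$ rather than per vertex. This yields a temporal graph with $\Oh(p)$ vertices and edges and $L\in\Oh(p)$ (compress time so that only the $\Oh(p)$ "active" snapshots remain, relabelling appearance times). Finally, the vertex weights may be huge; applying \autoref{prop:FT} with $r=\Oh(p)$ and $N$ polynomial in $p$ (we only need to preserve the sign of $w(V(W))-k$ over the $\Oh(p)$-many candidate walk supports) replaces $w$ by integer weights of magnitude $2^{\Oh(p^3)}$, i.e. encodable in $\Oh(p^3)$ bits each, $\Oh(p^4)$ bits total. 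Combined with $\Oh(p)$ vertices/edges, the kernel has size $\Oh(p^4)$.

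\textbf{The main obstacle} is proving safeness of the tree-replacement rules with the correct bookkeeping: one must argue that the finite "profile" $(\text{entry vertex},\text{exit vertex},\text{time window})\mapsto \text{max weight reachable in }F$ is all that a global monotone walk can ever exploit from $F$ (this uses crucially that $F$'s edges each appear once and that $F$ attaches to $X$ at $\le 2$ vertices with $\le 1$ edge each, so the walk interacts with $F$ in at most a bounded number of "visits"), and that the constant-size gadget realizes exactly this profile without creating spurious extra-weight walks. The consolidation rule for one-attachment gadgets, and the global charging argument that brings the gadget count down to $\Oh(p)$ rather than $\Oh(p^2)$, is the second delicate point; everything else — \autoref{lem:struct}, \autoref{lem:poly-trees}, \autoref{prop:FT}, and the final time-compression — is plug-in.
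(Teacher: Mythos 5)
Your proposal follows essentially the same route as the paper: decompose via \autoref{lem:struct}, replace the trees of $G-X$ by constant-size gadgets whose weights are computed with \autoref{lem:poly-trees}, consolidate one-attachment trees by time windows charged globally to the $\Oh(p)$ edge appearances inside $G[X]$, drop edgeless snapshots, and compress weights with \autoref{prop:FT}. The bookkeeping you flag as the main obstacle --- gadget weights that stay exact under multiple, possibly overlapping excursions into a replaced tree (e.g.\ the inclusion--exclusion weight $w_2-w_1-w_3$ placed on the middle backbone vertex) --- is precisely what the paper's Reduction Rules~\ref{rule:trees} and~\ref{rule:contr} and their safeness proofs supply, so the plan is sound.
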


\begin{proof}
Let  $(\G=(G_1,\ldots,G_{L}),w,v,k)$ be an instance of \WEX and let $p = \mathfrak{m}-n+1$. Let also $G$ be the underlying graph of $\G$ and let $R$ be the set of red edges. We describe our kernelization algorithm as a series of reduction rules that are applied exhaustively whenever is is possible to apply  a rule. The rules modify $\GL$ and $w$. Whenever we say that a rule deletes a vertex $x$, this means that $x$ is deleted from $G_1,\ldots,G_L$ and $G$ together with the incident edges. Similarly, whenever we create a vertex, this vertex is added to every graph $G_i$ for $i\in[L]$ and $G$. When we either delete or add an edge, we specify $G_i$ where this operation is performed and also assume that the corresponding operation is done for $G$.

To describe the first rule, we need some auxiliary notation.
Let $(P,Q)$ be a separation of $G$ of order one with a cut-vertex $x$. We say that $(P,Q)$ is \emph{important} if 
\begin{enumerate}[(i)]
    \item\label{kernel-p:item:1} $G[P]$ is a tree with at least two vertices,
    \item\label{kernel-p:item:2} $v\in Q$ and $R\subseteq E(G[Q])$, and
    \item\label{kernel-p:item:3} $P$ is an inclusion-maximal set satisfying (\ref{kernel-p:item:1}) and (\ref{kernel-p:item:2}).
\end{enumerate}

Suppose that  $(P,Q)$ is an important separation of $G$ with a cut-vertex $x$. Let $a,b\in[L]$. Then, there is a separation $(P_1,P_2)$ of $H=G[P]$ with $x$ being the cut-vertex such that for any $y\in N_H(x)$, $y\in P_1$ if and only if $a\leq t(xy)\leq b$ and, moreover, $(P_1,P_2)$ is unique. We use $\T_{[a,b]}$ to denote the temporal graph $(G_1[P_1],\ldots,G_L[P_1])$ and $T_{[a,b]}=G[P_1]$. 
Observe that $\T_{[a,b]}$ may be a single-vertex temporal graph containing only $x$.
We write  $\T_{(a,b)}$ ($\T_{[a,b)}$ and $\T_{(a,b]}$, respectively) for $\T_{[a+1,b-1]}$ ($\T_{[a,b-1]}$ and $\T_{[a+1,b]}$, respectively) and we use the corresponding notation for the underlying trees. 
We also write $\T_a$ and $T_a$ instead of $\T_{[a,a]}$ and $T_{[a,a]}$.
Given an important separation $(P,Q)$ of $G$ with a cut-vertex $x$, we denote $A(Q)=\bigcup_{y\in N_{G[Q]}(x)}A(xy)$. We say that $i,j\in A(Q)$ are \emph{consecutive}  if $i<j$ and there is no $t\in A(Q)$ such $i< t<j$.

Observe that as cut-vertices and blocks of a graph can be found in linear time by standard algorithmic tools (see, e.g.,~\cite{DBLP:books/mg/CormenLRS01}), all important separations can be listed in linear time.

Given a cut vertex $x$ such that one side of the important separation $T$ is a tree, then
an optimal exploration can either enter and exist $T$ by $x$ or end
in some vertex of $T$. In both cases, using the algorithm from \autoref{lem:poly-trees} we
can compute the optimal exploration of $T$ and replace $T - x$ by $2$ vertices simulating
those two options giving us the first reduction rule.
\begin{redrule}\label{rule:trees}
    If there is an important separation $(P,Q)$ of $G$ with a cut-vertex $x$ such that either there is $i\in A(Q)$ such that $\T=\T_i$ has at least four vertices, or there are consecutive $i,j\in A(Q)$ such that $\T=\T_{(i,j)}$ has at least four vertices, or $\T=\T_{[1,i)}$ has at least four vertices for $i=\min A(Q)$, or $\T=\T_{(j,L]}$ has at least four vertices for $j=\max A(Q)$, then do the following for $\T$ and the underlying tree $T$ (see \autoref{fig:trees}):
\begin{itemize}
\item Call the algorithm from \autoref{lem:poly-trees} and compute the maximum weight $w_1$ of a monotone $(x,x)$-walk in $T$ and the maximum weight $w_2$ of a monotone $(x,y)$-walk, where maximum is taken over all $y\in V(T)$. Set  $t=\min_{u\in N_T(x)}t(xu)$.
 \item Delete the vertices of $\T$ in $\G$ except $x$, create a new vertex $y$, make it adjacent to $x$ in $G_t$  and set $w(y)=w_1-w(x)$.
 \item If $w_2>w_1$, then create a new vertex $z$, make it adjacent to $y$ in $G_L$ and set $w(z)=w_2-w_1$. 
\end{itemize}
\end{redrule}

\begin{figure}[ht]
\centering
\scalebox{0.7}{%
\input{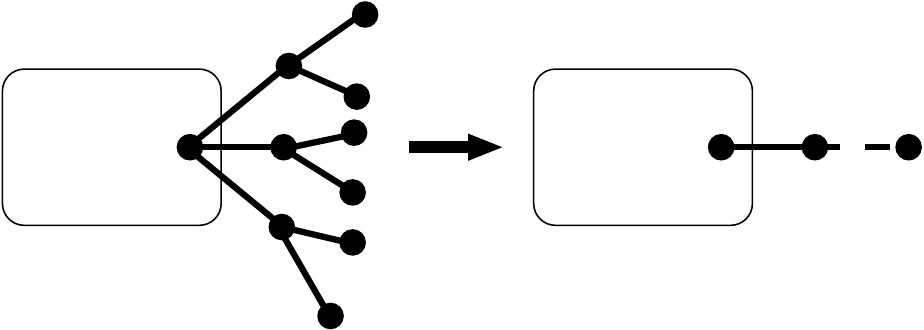_t}}
\caption{The application of \autoref{rule:trees}.}%
\label{fig:trees}
\end{figure}

\begin{toappendix}
\begin{claim}\label{cl:trees}
    \autoref{rule:trees} is safe.
\end{claim}
\begin{claimproof}[Proof of \autoref{cl:trees}]
The rule is applied for four possible choices of $\T$. The safeness proofs for these cases are almost identical. Therefore, we provide it only for the case $\T=\T_{(i,j)}$ for consecutive $i,j\in A(Q)$ and then briefly discuss the differences in the proofs for other cases. 

Denote by $(\G',w',v,k)$ the instance of \WEX obtained by applying the rule and let $G'$ be the underlying graph of $\G'$. 
Observe that $T$ has an edge $xu$ with $t(xu)=t$. This implies that $w_1\geq w(x)+w(u)$ and, therefore, $w'(y)=w_1-w(x)>0$.  Because  $z$ is constructed only if $w_2>w_1$, we have that $w'(z)>0$. Thus, the constructed instance is feasible. Notice also that if $z$ is constructed, then $L\geq j>t$ and $t(xy)<t(yz)$ in~$\G'$. 

Suppose that $(\G,w,v,k)$ is a yes-instance of \WEX{}. By \autoref{obs:equiv}, this means that there is a monotone $(v,u)$-walk $W$ for some $u\in V(G)$ such that $w(W)\geq k$. If $W$ does not enter to any vertex of $T-x$, then $W$ is a monotone walk in $G'$ and $(\G',w',v,k)$ is a yes-instance. Assume that this is not the case and $W$ visits some vertices of $T-x$. We have that $i$ and $j$ are consecutive in $A(Q)$. Thus, $i<j$  and for every $h\in A(Q)$, either $h\leq i$ or $h\geq j$. 
Because $x$ is a cut-vertex and $i<t\leq t'<j$ for  $t=\min_{u\in N_T(x)}t(xu)$ and every $t'\in \{t(xu)\colon u\in N_T(x)\}$, 
the vertices of $V(W) \cap V(T)\setminus\{x\}$ are visited by a subwalk $W'$ of $W$ and $W'$ is a walk in $T$. Furthermore, if $u\notin V(T)\setminus \{x\}$, $W'$ is an $(x,x)$-walk, and if $u\in V(T)\setminus \{x\}$, then $W'$ is an $(x,u)$-walk.

Suppose that $W'$ is an $(x,x)$-walk. Then, $w(W)\leq w_1$ by the definition of $w_1$ and the total weight of the vertices of $T-x$ visited by $W'$ is at most $w'(y)\leq w_1-w(x)$. We construct the walk $\widehat{W}$ in $G'$ by replacing $W'$ by $x,xy,y,yx,x$. Then, $w'(\widehat{W})\geq w(W)\geq k$, that is, $(\G',w',v,k)$ is a yes-instance. 

Assume now that $W'$ is an $(x,u)$-walk for $u\in V(T)\setminus \{x\}$. If $w(W')\leq w_1$, then we replace $W'$ by $x,xy,y$ and have that for the obtained walk $\widehat{W}$, $w'(\widehat{W})\geq w(W)\geq k$.
Suppose that $w(W')> w_1$. Then, $w_2\geq w(W')> w_1$ and $z\in V(G')$. Since $w'(z)=w_2-w_1$, we can construct $\widehat{W}$ from $W$ by replacing  $W'$ by $x,xy,y,yz,z$ and obtain that $w'(\widehat{W})\geq w(W)\geq k$. This concludes the proof that  $(\G',w',v,k)$ is a yes-instance. 

\medskip
For the opposite direction, the arguments are very similar. Suppose that $(\G',w',v,k)$ is a yes-instance of \WEX and denote by $T'$ the tree created by the rule from $T$. Then, there is a monotone $(v,u)$-walk $\widehat{W}$ for some $u\in V(G')$ such that $w(\widehat{W})\geq k$. If $\widehat{W}$ does not enter $y$, then $\widehat{W}$ is a monotone walk in $G$ and $(\G,w,v,k)$ is a yes-instance. Suppose that $\widehat{W}$ enters $y$. Because $i<t<j$ and $x$ is a cut-vertex of $G'$, we can assume that $y$ is visited by a subwalk $\widehat{W}'$ of $\widehat{W}$ which is a walk in $T'$. 
If $u\notin V(T')\setminus \{x\}$, we can further assume that $\widehat{W}'$ is an $(x,x)$-walk and $\widehat{W}'=x,xy,y,yx,x$, because $t(yz)>t(xy)$ if $z$ exists. 
We construct the walk $W$ in $G$ by replacing $\widehat{W}'$ by a monotone $(x,x)$-walk $W'$ of weight $w_1$ in $T$ which exists by the choice of $w_1$. Notice that for each edge $e\in E(W')$, $i<t(e)<j$. Therefore, the replacement gives a monotone walk in $G$. We obtain that $w(W)=w'(\widehat{W})\geq k$. 
If $u\in V(T')\setminus\{x\}$, then we assume that  either $u=y$ and $\widehat{W}'=x,xy,y$ if $z$ does not exist or $u=z$ and $\widehat{W}'=x,xy,y,yz,z$ if $z\in V(T')$.  
In both cases, $w'(\widehat{W}')=w_2$. We construct $W$ by replacing $\widehat{W}'$ by a monotone $(x,u)$-walk $W'$ of weight $w_2$ in $T$ for some $u\in V(T)$ which exists by the definition of $w_2$. 
Then, we have that $W$ is a monotone walk in $G$ and  $w(W)=w'(\widehat{W})\geq k$. We conclude that $(\G,w,v,k)$ is a yes-instance. This concludes the proof for  the case $\T=\T_{(i,j)}$ for consecutive $i,j\in A(Q)$.

If $\T=\T_i$ for $i\in A(Q)$, then to modify the proof, we observe if the rule constructs $z$, then $i<L$. Hence, $t(xy)<t(yz)$ in $\G'$ if the rule constructs $z$. Observe also that for any $u\in N_T(x)$, $t(xu)=i$. This guarantees that we can apply the exchange arguments for the walks from above. In particular, we can assume without loss of generality that if a monotone walk $W$ enters $T-x$, then the vertices of $V(W) \cap V(T)\setminus \{x\}$ are visited by a subwalk of $W$.
For $\T=\T_{[1,i)}$, where $i=\min A(Q)$, we only observe that any monotone walk in $G$ can enter $T-x$ only if $x=v$, that is, $x$ is the source vertex. Finally, for 
 $\T=\T_{(j,L]}$, where $j=\max A(Q)$, we note that if $t=L$, then any monotone $(x,u)$-walk in $T$ can use only edges of $G_L$. Hence, the rule does not create $z$ in this case. Otherwise, if $z$ is created, then $t(xy)<t(yz)$ and we can apply the same arguments as for $\T=\T_{(i,j)}$. This concludes the proof.   
\end{claimproof}
\end{toappendix}

To state the next rules, we define edge cuts that are important for our algorithm. We say that an edge cut $(P,Q)$ of $G$ is \emph{important} if
\begin{enumerate}[(i)]
    \item\label{edge_cut:1} the edge-cut set is of size two and consists of edges with distinct endpoints, 
    \item\label{edge_cut:2} $G[P]$ is a tree, and
    \item\label{edge_cut:3} $v\in Q$ and $R\subseteq E(G[Q])$.
\end{enumerate}
Given an important edge cut $(P,Q)$ with an edge cut-set $\{x_1y_1,x_2y_2\}$, where $x_1,x_2\in P$, there is the unique $(x_1,x_2)$-path $S$ in the tree $G[P]$. We say that the path $y_1,y_1x_1,S,x_2y_2,y_2$ is the \emph{backbone} of the edge cut. We also denote $\T_{P,Q}=(G_1[(P\cup\{y_1,y_2\})]-y_1y_2,\ldots,G_L[P\cup\{y_1,y_2\}]-y_1y_2 )$.  Note that the underlying graph $T$ of $\T_{P,Q}$ is a tree and the unique $(y_1,y_2)$-path in $T$ is the backbone.  

Observe that all important edge cuts can be found in polynomial time.
For example, we can consider all pairs of edges without common endpoints and check whether the pair is the edge cut-set of an important edge cut.
We now construct three reduction rules that reduce the length of backbones.
Our next rule deletes irrelevant edges.

\begin{figure}[ht]
\centering
\scalebox{0.7}{
\input{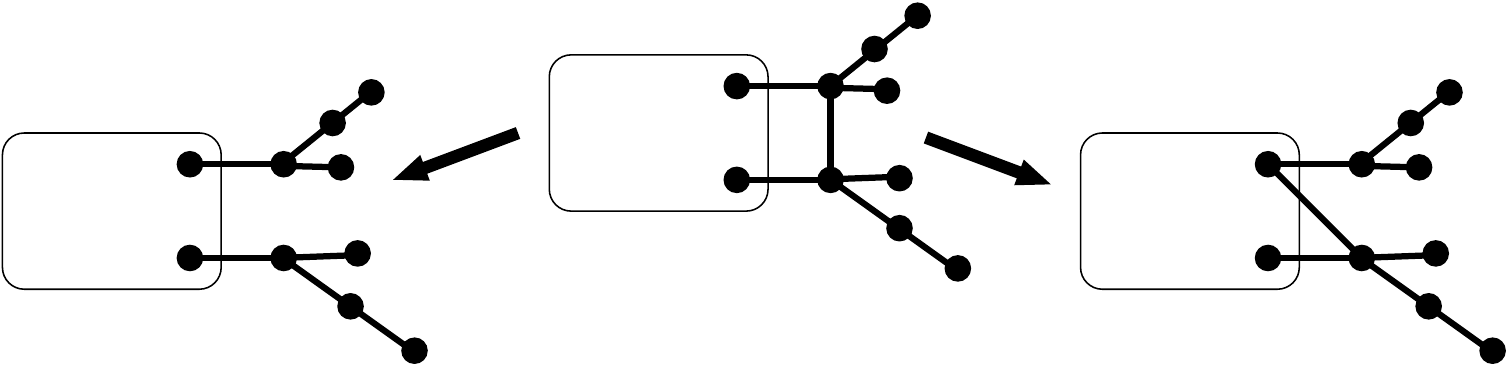_t}}
\caption{The application of \autoref{rule:del} and \autoref{rule:cons}.}
\label{fig:del}
\end{figure}

\begin{redrule}\label{rule:del}
If there is an important edge cut $(P,Q)$ of $G$ with a backbone $v_0,e_1,v_1,e_2,v_2,e_3,v_3$ such that $t(e_1)>t(e_2)$ and $t(e_3)>t(e_2)$,  then modify $\G$ and $G$ by deleting $e_2$ from $G_{t(e_2)}$ (see \autoref{fig:del}). 
 Furthermore, if the obtained underlying graph $G$ is disconnected, then delete the vertices of the (unique) connected component that does not contain the source vertex $v$.
\end{redrule}

The safeness of the rule immediately follows from the observation that no monotone $(v,u)$-walk can use $e_2$, because any monotone  walk can enter $P$ only via $e_1$ or $e_3$.

\medskip
The next rule aims at consecutive edges in backbones that occur in the same $G_t$.


\begin{redrule}\label{rule:cons}
If there is an  an important edge cut $(P,Q)$ of $G$ with a backbone $v_0,e_1,v_1,e_2,v_2,e_3,v_3$ such that $t(e_1)=t(e_2)$, then modify $\G$ and $G$ by deleting $e_1$ from $G_{t(e_1)}$ and adding $v_0v_2$ in $G_{t(e_1)}$ (see \autoref{fig:del}).
\end{redrule}

\begin{toappendix}
\begin{claim}\label{cl:cons}
\autoref{rule:cons} is safe.
\end{claim}
\begin{claimproof}[Proof of \autoref{cl:cons}]
Denote by $(\G',w',v,k)$ the instance of \WEX obtained by applying the rule and let $G'$ be the underlying graph of $\G'$. Suppose that  that $(\G,w,v,k)$ is a yes-instance of \WEX. By \autoref{obs:equiv}, this means that there is a monotone $(v,u)$-walk $W$ for some $u\in V(G)$ such that $w(W)\geq k$. We construct the walk $W'$ by replacing each occurrence  of the edge $e_1$ by  $v_0v_2,v_2,e_2$ or $e_2,v_2,v_2v_0$, respectively, depending on the direction on which $e_1$ is crossed by the walk. 
This way we obtain a walk in $G'$ and, because $t(e_1)=t(e_2)$, this is a monotone $(v,u)$-walk. By the construction,  $w'(W')\geq w(W)\geq k$ and we have that $(\G',w',v,k)$ is a yes-instance. For the opposite direction, the arguments are almost the same. The difference is that we replace each occurrence of $v_0v_2$ in a walk in $G'$ by either $e_1,v_1,e_2$ or, symmetrically, by $e_2,v_1,e_1$. This concludes the proof.
\end{claimproof}
\end{toappendix}

\begin{figure}[ht]
\centering
\scalebox{0.7}{%
\input{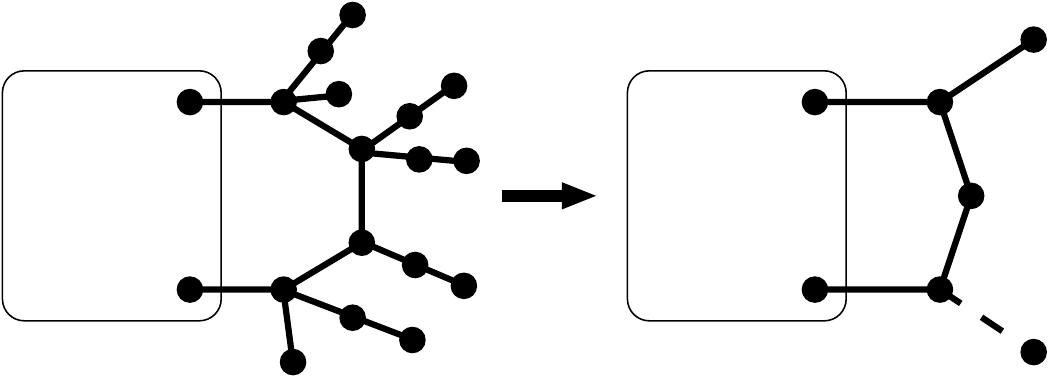_t}}
\caption{The application of \autoref{rule:contr}.}%
\label{fig:contr}
\end{figure}

It remains to shorten monotone backbones.
\begin{redrule}\label{rule:contr}
If there is an important edge cut $(P,Q)$ of $G$ with a backbone $v_0,e_1,\ldots,e_5,v_5$ such that $t(e_1)<\cdots<t(e_5)$, then modify $\G$ and $G$ by doing the following for $\T=\T_{P,Q}$ with its underlying tree $T$ (see \autoref{fig:contr}):
\begin{itemize}
\item Call the algorithm from \autoref{lem:poly-trees} and compute the maximum weight $w_1$ of a monotone $(v_0,v_0)$-walk in $T$,  the maximum weight $w_2$ of a monotone $(v_0,v_5)$-walk and the maximum weight $w_3$ of a monotone $(v_5,v_5)$-walk.
\item Delete the vertices of $V(T)\setminus \{v_0,v_5\}$, create new vertices $u_1,u_2,u_3$, make $u_1$ adjacent to $v_0$ in $G_{t(e_1)}$, make $u_2$ adjacent to $u_1$ and $u_3$ in $G_{t(e_2)}$ and $G_{t(e_4)}$, respectively, and make $u_3$ adjacent to $v_5$ in $G_{t(e_5)}$.
\item  Set $w(u_1)=w_1-w(v_0)$, $w(u_3)=w_3-w(v_5)$ and $w(u_2)=w_2-w_1-w_3$.    
\item Call the algorithm from \autoref{lem:poly-trees} and compute the maximum weight $w_0$ of a monotone $(v_0,u)$-walk in $T-v_5$, where the maximum is taken over all $u\in V(T)\setminus\{v_5\}$. 
Create a vertex $y_1$, make $y_1$ adjacent to $u_1$ in $G_{t(e_3)}$ and set $w(y_1)=w_0-w(v_0)-w(u_1)-w(u_2)$.
\item Call the algorithm from \autoref{lem:poly-trees} and compute the maximum weight $w_5$ of a monotone $(v_5,u)$-walk in $T$, where the maximum is taken over all $u\in V(T)$. If $w_5>w_3$, then create a vertex $y_3$, make $y_3$ adjacent to $u_3$ in $G_{L}$ and set $w(y_3)=w_5-w(v_5)-w(u_3)$.
\end{itemize}
\end{redrule}

\begin{toappendix}
\begin{claim}\label{cl:contr}
\autoref{rule:contr} is safe.
\end{claim}
\begin{claimproof}[Proof of \autoref{cl:contr}]
Denote by $(\G',w',v,k)$ the instance of \WEX obtained by applying the rule and let $G'$ be the underlying graph of $\G'$. 

First, we argue that the instance  $(\G',w',v,k)$ is feasible in the sense that the weights are positive. Notice that by the construction, $w_1\geq w(v_0)+w(v_1)$. Therefore, $w'(u_1)>0$. Similarly, it holds that $w'(u_3)>0$. Notice that given a monotone $(v_0,v_0)$-walk $W_1$, a monotone $(v_0,v_5)$-walk $W_2$ and a monotone $(v_5,v_5)$-walk $W_3$ in $T$, we can concatenate them and obtain the $(v_0,v_5)$-walk $W$.
The walk $W$ is monotone, $V(W_1)\subseteq V(W)$ and $V(W_3)\subseteq V(W)$. Because $t(e_1)<t(e_2)<t(e_4)<t(e_5)$, it holds that $v_2,v_3\in V(W_2)\setminus (V(W_1)\cup V(W_3))$. This implies that $w_2>w_1+w_3$ and $w(u_2)$ is positive. Notice that  $w_0\geq w_2-w(v_5)\geq w(v_0)+w(u_1)+w(u_2)+w(u_3)$. Hence, 
 $w(y_1)=w_0-w(v_0)-w(u_1)-w(u_2)\geq w(u_3)>0$. 
Finally, we create $y_3$ only if $w_5>w_3$. Then, 
$w(y_3)=w_5-w(v_5)-w(u_3)=w_5-w(v_5)-(w_3-w(v_5))=w_5-w_3>0$. Observe also that if $t(e_5)=L$, then $w_5=w_3$, because any monotone $(v_5,u)$-walk can use only edges of $G_L$. This means that if $y_3$ is constructed by the rule, then $t(e_5)<L$.

Suppose that $(\G,w,v,k)$ is a yes-instance of \WEX. By \autoref{obs:equiv}, this means that there is a monotone $(v,u)$-walk $W$ for some $u\in V(G)$ such that $w(W)\geq k$. 
If $W$ does not use any edge of $T$, then $W$ is a monotone walk in $G'$ and $(\G',w',v,k)$ is a yes-instance. Assume that this is not the case and $W$ visits some vertices of $T-\{v_0,v_5\}$. We consider three cases depending on the behavior of $W$.

\subparagraph{Case~1.} There is an edge of the backbone that is not used by $W$. 

If $e_5\notin E(W)$, then, because $v\in Q$ and  $W$ can enter $T-\{v_0,v_5\}$ only through $e_1$, we can assume without loss of generality that the vertices of $V(W) \cap V(T) \setminus \{v_0,v_5\}$ are visited by a subwalk $W'$ of $W$ that is a monotone $(v_0,z)$-walk in $T$ for some $z\in V(T)$. If $z=v_0$, then we construct the walk $\widehat{W}$ in $G'$ by replacing $W'$ by $v_0,v_0u_1,u_1,v_0u_1,v_0$. Then, because $w(W')\leq w_1$ by definition of $w_1$ and $w'(u_1)=w_1-w'(v_0)$, we have that $w'(\widehat{W})\geq w(W)\geq k$. If $z\neq v_0$, then $z=u$. We construct $\widehat{W}$ by replacing $W'$ by $v_0,v_0u_1,u_1, u_1u_2,u_2,u_1u_2,u_1,u_1y_1,y_1$.
Because $e_5$ is not used by $W'$,
$w(W')\leq w_0$ by definition of $w_0$. Since $w(y_1)=w_0-w(v_0)-w(u_1)-w(u_2)$,  we obtain that $w'(\widehat{W})\geq w(W)\geq k$. In both cases,  we conclude that $(\G',w',v,k)$ is a yes-instance.

Suppose that $e_1\notin E(W)$. Now we can assume that the vertices of $V(W) \cap V(T)\setminus\{v_0,v_5\}$ are visited by a subwalk $W'$ of $W$ that is a monotone $(v_5,z)$-walk in $T$ for some $z\in V(T)$. If $z=v_5$, then we construct the walk $\widehat{W}$ in $G'$ by replacing $W'$ by $v_5,v_5u_3,u_3,v_5u_3,v_5$. Then, because $w(W')\leq w_3$ and $w'(u_3)=w_3-w'(v_5)$, we have that $w'(\widehat{W})\geq w(W)\geq k$. If $z\neq v_5$, then $z=u$. We construct $\widehat{W}$ by replacing $W'$ by $v_5,v_5u_3,u_3$ if $w_5=w_3$ and by $v_5,v_5u_3,u_3,u_3y_y,y_3$ if $w_5>w_3$. 
Because $e_1$ is not used by $W'$,
$w(W')\leq w_5$ and  we obtain that $w'(\widehat{W})\geq w(W)\geq k$. In both cases,  we have that  $(\G',w',v,k)$ is a yes-instance.

Assume now that $e_1,e_5\in E(W)$. Then, we can assume that the vertices of $V(W) \cap V(T)\setminus\{v_0,v_5\}$ are visited by exactly two disjoint subwalks $W_1$ and $W_2$ of $W$, where $W_1$ is a $(v_0,v_0)$-subwalk of $W$ and $W_2$ is a $(v_5,z)$-subwalk of $W$ for some $z\in V(T)$. Then, we replace $W_1$ in exactly the same way as for $e_5\notin E(W)$ and $z=v_0$. Similarly, $W_2$ is replaces in exactly the same way as for $e_1\notin E(W)$. The subwalks used for the replacements are disjoint and we conclude that $(\G',w',v,k)$ is a yes-instance.

\subparagraph{Case~2.} The edges of  the backbone are in $W$ and $u\notin V(T)\setminus\{v_0,v_5\}$.  
Since $t(e_1)<\dots<t(e_5)$, we can assume that  the vertices of $V(W) \cap V(T)\setminus\{v_0,v_5\}$ are visited by a subwalk $W'$ of $W$ that is a monotone $(v_0,v_5)$-walk in $T$. Then, we construct 
the walk $\widehat{W}$ in $G'$ by replacing $W'$ by $v_0,v_0u_1,u_1,u_1u_2,u_2,u_2u_3,u_3,u_3v_5,v_5$. Then, because $w(W')\leq w_2$ and by the definitions of the weights of $u_1,u_2,u_3$,  we have that $w'(\widehat{W})\geq w(W)\geq k$ implying that  $(\G',w',v,k)$ is a yes-instance.

\subparagraph{Case~3.} The edges of  the backbone are in $W$ and $u\in V(T)\setminus\{v_0,v_5\}$.  
Because all the edges of the backbone are in $W$ and $t(e_1)<\cdots<t(e_5)$, we can assume that the vertices of $V(W) \cap V(T)\setminus\{v_0,v_5\}$ are visited by two subwalks $W_1$ and $W_2$ of $W$ such that $W_1$ is a monotone $(v_0,v_5)$-walk in $T$ and $W_2$ is a monotone $(v_5,u)$-walk.  Let $W_3$ be a monotone $(v_5,v_5)$-walk in $T$ of maximum weight $w_3$. 
We can assume without loss of generality that $W_3$ is a subwalk of both $W_1$ and $W_2$; otherwise, we can include $W_3$ in these walks in the end and the beginning, respectively. 
Notice that for every $e\in E(W_3)$, $t(e)=t(e_5)$. Then, because $t(e_1)<\cdots<t(e_5)$ and $W_1,W_2$ are monotone, we have that 
$E(W_1)\cap E(W_2)=E(W_3)$. We construct 
the walk $\widehat{W}$ in $G'$ by replacing $W_1$ by $v_0,v_0u_1,u_1,u_1u_2,u_2,u_2u_3,u_3,u_3v_5,v_5$. If $w_5=w_3$, we replace $W_2$ by $v_5,u_3v_5,u_3$. Otherwise, if $w_5>w_3$, $y_3\in V(G')$ and we replace $W_2$ by $v_5,u_3v_5,u_3,u_3y_3,y_3$. Because $E(W_1)\cap E(W_2)=E(W_3)$, $w(V(W_1)\cup V(W_2))=w(W_1)+w(W_2)-w_3$. By the definitions of the weights of $u_1,u_2,u_3$ and $y_3$, we obtain that $w'(\widehat{W})\geq w(W)\geq k$. Hence, $(\G',w',v,k)$ is a yes-instance. This concludes the proof for the forward direction.

\medskip
For the opposite direction, the proof uses similar arguments as we reverse the exchange arguments. Denote by $T'$ the tree constructed from $T$ by the rule.
Let $W_1$ be  a monotone $(v_0,v_0)$-walk of weight $w_1$ in $T$, $W_2$ be a monotone $(v_0,v_5)$-walk of weight $w_2$ and $W_3$ be a monotone $(v_5,v_5)$-walk of weight $w_3$. Let also $W_1'$ be a monotone $(v_0,z)$-walk in $T-v_5$ of weight $w_0$ and let $W_3'$ be a monotone $(v_5,z)$-walk in $T$ of weight $w_5$ for some $z\in V(T)$. Observe that because $t(v_0u_1)<t(u_1u_2)<t(u_2u_3)<t(u_3v_5)$, $V(W_1)\cap V(W_3)=\emptyset$ and $V(W_1)\cap V(W_3')=\emptyset$. Note also that  $V(W_1),V(W_3)\subseteq V(W_2)$, because otherwise we can concatenate $W_1$ and $W_3$ with $W_2$ and increase the weight of $W_2$. 

Suppose that   $(\G',w',v,k)$ is a yes-instance of \WEX{}. Then, there is a monotone $(v,u)$-walk $\widehat{W}$ in $G'$ with $w'(\widehat{W})\geq k$. If $\widehat{W}$ does not enter $T'-\{v_0,v_5\}$, then $\widehat{W}$ is a monotone walk in $G'$ and $(\G,w,v,k)$ is a yes-instance. Assume that $\widehat{W}$ visits some vertices of $T'-\{v_0,v_5\}$. We consider three cases that are analogical to Cases~1--3 above.

\subparagraph{Case~1.} There is an edge of the backbone of $T'$ that is not used by $W$. 

If $u_3v_5\notin E(\widehat{W})$, then we can assume without loss of generality that the vertices of $V(\widehat{W}) \cap V(T')\setminus\{v_0,v_5\}$ are visited by a subwalk $\widehat{W}'$ of $\widehat{W}$ that is a monotone $(v_0,z)$-walk in $T'$ for some $z\in V(T')$. If $z=v_0$, then we construct the walk $W$ in $G$ by replacing $\widehat{W}'=v_0,v_0u_1,u_1,v_0u_1,v_0$ by $W_1$. If $z\neq v_0$, then $z=u$ and we replace $\widehat{W}'$ by $W_1'$. In both cases, we obtain that
$w(W)\geq w'(\widehat{W})\geq  k$. Thus, $(\G,w,v,k)$ is a yes-instance. 

If $v_0u_1\notin E(\widehat{W})$, then we can assume that the vertices of $V(\widehat{W}) \cap V(T')\setminus\{v_0,v_5\}$ are visited by a subwalk $\widehat{W}'$ of $\widehat{W}$ that is a monotone $(v_5,z)$-walk in $T$ for some $z\in V(T')$. If $z=v_5$, then  the walk $W$ in $G$ is constructed by replacing $\widehat{W}'=v_5,v_5u_3,u_3,v_5u_3,v_5$ by $W_3$. If $z\neq v_5$, then $z=u$ and we replace $\widehat{W}'$ by $W_3'$. We have that
$w(W)\geq w'(\widehat{W})\geq  k$ and  $(\G,w,v,k)$ is a yes-instance. 

Assume that $v_0u_1,u_3v_5\in E(\widehat{W})$. Then, we can assume that the vertices of $V(\widehat{W}) \cap V(T')\setminus\{v_0,v_5\}$ are visited by exactly two disjoint subwalks $\widehat{W}_1$ and $\widehat{W}_2$ of $\widehat{W}$, where $\widehat{W}_1$ is a $(v_0,v_0)$-subwalk of $\widehat{W}$ and $\widehat{W}_2$ is a $(v_5,z)$-subwalk for some $z\in V(T')$. Then, we replace $\widehat{W}_1$ by $W_1$ in exactly the same way as for $u_3v_5\notin E(\widehat{W})$ and $z=v_0$. Similarly, $\widehat{W}_2$ is replaces by either $W_3$ or $W_3'$ in exactly the same way as for $v_0u_1\notin E(\widehat{W})$. The subwalks used for the replacements are disjoint and we conclude that $(\G,w,v,k)$ is a yes-instance.

\subparagraph{Case~2.} The edges of  the backbone of $T'$ are in $\widehat{W}$ and $u\notin V(T')\setminus\{v_0,v_5\}$.
In this case, we can assume that the vertices of $V(\widehat{W}) \cap V(T')\setminus\{v_0,v_5\}$ are visited by the subwalk $\widehat{W}'=v_0,v_0u_1,u_1,u_1u_2,u_2,u_2u_3,u_3,u_3v_5,v_5$ and 
we construct $W$ by replacing $\widehat{W}'$ by $W_2$. Then, $w(W)\geq w'(\widehat{W})\geq  k$ and  $(\G,w,v,k)$ is a yes-instance.

\subparagraph{Case~3.} The edges of  the backbone of $T'$ are in $W$ and $u\in V(T')\setminus\{v_0,v_5\}$. Now we can assume that the vertices of $V(\widehat{W}) \cap V(T')\setminus\{v_0,v_5\}$ are visited by two subwalks 
$\widehat{W}_1=v_0,v_0u_1,u_1,u_1u_2,u_2,u_2u_3,u_3,u_3v_5,v_5$ and $\widehat{W}_2$ of $\widehat{W}$ such that either $y_3\notin V(G')$ and $\widehat{W}_2=v_5,u_3v_5,u_3$ or $y_3\in V(G')$ and $\widehat{W}_2=v_5,u_3v_5,u_3,u_3y_3,y_3$. We replace $\widehat{W}_1$ by $W_2$ and $\widehat{W}_2$ by $W'_3$. Again, $w(W)\geq w'(\widehat{W})\geq  k$ and  $(\G,w,v,k)$ is a yes-instance. This concludes the proof for the backward direction.
\end{claimproof}
\end{toappendix}

\noindent We observe that by the definitions of Rules~\ref{rule:trees}--\ref{rule:contr}, we immediately obtain the following claim.

\begin{claim}\label{cl:p}
Rules~\ref{rule:trees}--\ref{rule:contr} do not increase the parameter   $p = \mathfrak{m} - n+1$. 
\end{claim}

We show that in the instance of \WEX obtained by the exhaustive application of Rules~\ref{rule:trees}--\ref{rule:contr}, the graph has bounded size.

\begin{claim}\label{cl:size}
    If Rules~\ref{rule:trees}--\ref{rule:contr} are not applicable for $(\G,w,v,k)$, then $|V(\mathcal{G})|\leq 324p $ and $|E(\mathcal{G})|\leq 326p$.
\end{claim}

\begin{toappendix}
\begin{claimproof}[Proof of \autoref{cl:size}]
Suppose that neither of Rules~\ref{rule:trees}--\ref{rule:contr} can be applied for $(\G,w,v,k)$. We use \autoref{lem:struct}. By this lemma, there is a set of vertices $X\subseteq V(G)$ of size at most $4p$ such that (i)~$v\in X$, (ii)~every red edge is in $G[X]$, (iii) $G-E(G[X])$ is a forest and (iv)~each connected component $F$ of $G-X$ is a tree with the properties that each vertex $x\in X$ has at most one neighbor in $F$ and 
\begin{enumerate}[(a)]
\item either $X$ has a unique vertex $x$ that has a neighbor in $F$,
\item or $X$ contain exactly two vertices $x$ and $y$ having neighbors in $F$. 
\end{enumerate}
We assume that $X$ is an inclusion-minimal set with these properties. We say that a connected component $F$ of $G-X$ satisfying (\ref{lem:struct:a}) or (\ref{lem:struct:b}) is an \emph{(a)-type} or \emph{(b)-type} component, respectively. 
Also by the lemma,  if $q$ is the number of (b)-type components, then $q\leq 4p-1$ and $G[X]$ has at most $5p-q-1$ edges.

Let $x\in X$ and consider the (a)-type components $F_1,\ldots,F_r$ that have $x$ as a neighbors for some $r\geq 1$. Let $P_x=\{x\}\cup \bigcup_{i\in [r]}V(F_i)$ and $Q_x=(V(G)\setminus P)\cup\{x\}$. Because $X$ is an inclusion-minimal set satisfying  (i)--(iv), $(P_x,Q_x)$ is an important separation and $x$ is its cut-vertex. Because \autoref{rule:trees} cannot be applied, we have that 
$|P_x\setminus \{x\}|\leq 4|A(Q_x)|+2$ and $|E(G[P_x])|\leq 4|A(Q_x)|+2$.
In $Q_x$, $x$ is either adjacent to a vertex in $X$ or to a (b)-type components. Taking into account that the edges connecting $X$ with the (b)-type components appear exactly in one time step, we have that 
\begin{equation*}
\sum_{x\in X}|A(Q_x)|\leq 2\sum_{e\in E(G[X])}m(e)+2q\leq 2p+2|E(G[X])|+2q\leq 12p-2.
\end{equation*}
Thus, the total number of vertices in the (a)-type components $F$ of $G-X$ is at most $56p-8$ and these vertices are incident to at most $56p - 8$ edges. 

Consider a (b)-type component $F$ of $G-X$. Let $x$ and $y$ be the neighbors of $F$ in $X$ and denote by $x'$ and $y'$ the neighbors of $x$ and $y$ in $F$. Let $S$ 
be the $(x',y')$-path in the tree~$F$. We claim that $S$ has at most 6 vertices.
\todo[inline]{The worse case seems to be two increasing path to the same vertex. Maybe by spitting this vertex, we could reduce this size}

The proof is by contradiction. Assume that $S$ has at least 7 vertices. Observe that $(P,Q)$, where $P=V(F)$ and $Q=V(G)\setminus P$, is an important  edge cut. Let $v_0,e_1,\ldots,e_\ell,v_\ell$ be its backbone. Note that 
$\ell\geq 8$. Because \autoref{rule:cons} in not applicable, $t(e_i)\neq t(e_{i+1})$ for every $i\in[\ell-1]$. Since \autoref{rule:del} cannot be applied, for every $i\in[\ell-2]$,
ether $t(e_i)<t(e_{i+1})<t(e_{i+2})$, or $t(e_i)>t(e_{i+1})>t(e_{i+2})$, or $t(e_i)<t(e_{i+1})$ and $t(e_{i+1})>t(e_{i+2})$. Observe the last possibility $t(e_i)<t(e_{i+1})$ and  $t(e_{i+1})>t(e_{i+2})$ can occur only for a unique $i\in[\ell-2]$. Because $\ell\geq 8$, ether $t(e_1)<\dots<t(e_5)$ or $t(e_{\ell-4})>\dots >t(e_\ell)$. We can assume by symmetry that  $t(e_1)<\dots<t(e_5)$. Note that $\{e_1,e_5\}$ is the cut-set of an important edge cut 
$(P',Q')$, where $P'$ is the set of vertices of the connected component of $F-e_5$ containing $v_1=x'$ and $Q'=V(G)\setminus P'$. Then, $v_0,e_1,\dots,e_5,v_5$ is the corresponding backbone.
However, we are able to apply \autoref{rule:contr} for $(P',Q')$. This contradicts our assumption that the reduction rules are not applicable and proves that $S$ has at most 6 vertices.

Let $z\in V(S)$ and consider a connected component $H$ of $F-E(S)$ containing $z$. If $H$ has at least one edge, then  $(P,Q)$, where $P=V(H) $ and $Q=(V(G)\setminus V(H))\cup\{z\}$ is an important separation.  
Recall that each edge of $S$ and the edges $xx'$ and $yy'$ appear only once in the temporal graph. Thus, $|A(Q)|\leq 2$. 
Because \autoref{rule:trees} cannot be applied, we obtain that $|V(H)|\leq 4|A(Q)|+3\leq 11$ and $|E(H)|\leq 10$. Because $S$ has at most 6 vertices, we conclude that $F$ has at most $66$ vertices and the vertices of $F$ are adjacent to at most $67$ edges of $G$. 

Because the total number of (b)-type components is  $q\leq 4p-1$, the total number of vertices in these components is at most $264p-66$ and the vertices of these components are incident to at most
$67q$
edges of $G$.

Summarizing and using that $q\leq 4p-1$, $|X| \leq 4p$ and $|E(G)| \leq 5p - q - 1$, we obtain that 
\begin{equation*}
    |V(\mathcal{G}| = |V(G)|\leq |X|+(56p-8)+(264p-66)\leq 324p
\end{equation*}
and
\begin{equation*}
    |E(\mathcal{G})| \leq |E(G)| + p\leq |E(G[X])|+(56p-8)+67q + p\leq 326p.
\end{equation*}
This concludes the proof.
 \end{claimproof}
\end{toappendix}

 The bound on the number of edges of $G$ allows to reduce the value of $L$. 

 \begin{redrule}\label{rule:level}
     If there is $t\in[L]$ such that $G_t$ has no edge, then delete $G_t$ from $\mathcal{G}$.
 \end{redrule}

 The rule is trivially safe, because edgeless graphs are irrelevant for the exploration of $\G$. 
 After applying the rule $L\leq 326p$ as $|E(\mathcal{G})|\leq 326p$.

 \medskip
Our last aim is to reduce the weights. For this, we use the algorithm from \autoref{prop:FT}. Let $n=|V(G)|$ and let $V(G)=\{v_1,\ldots,v_n\}$. We define $r=n+1$ and consider the vector $\mathsf{w}={(w_0,w_1,\ldots,w_n)}^\intercal\in\mathbb{Z}^r$, where $w_0=k$ and $w_i=w(v_i)$ for $i\in[n]$.

\begin{redrule}\label{rule:weights}
Apply the algorithm from \autoref{prop:FT} for $\mathsf{w}$ and $N=r+1$ and find the vector $\overline{\mathsf{w}}=(\overline{w}_0,\ldots,\overline{w}_n)$. Set
$k:=\overline{w}_0$ and set $w(v_i):=\overline{w}_i$ for $i\in[n]$.
\end{redrule}

To see that the rule is safe, let $k'=\overline{\mathsf{w}}_0$ and let $w'(v_i)=\overline{w}_i$ for $i\in[n]$.
Note that by the choice of $N$, for each vector $b\in{\{-1,0,1\}}^r$, we have that 
$\mathsf{sign}(\mathsf{w}\cdot b)=\mathsf{sign}(\overline{\mathsf{w}}\cdot b)$. This implies that the new weights $w'(x)$ and $k'$ are positive integers and for every $U\subseteq V(G)$, $w(U)\geq k$ if and only if $w'(U)\geq k'$.

\medskip
By \autoref{prop:FT}, we obtain that $k\leq 2^{4n^3}{(n+2)}^{(n+1)(n+2)}$ and the same upper bound holds for $w(x)$ for every $x\in V(G)$. Because $|V(G)|=\Oh(p)$, we have that we need $\Oh(p^3)$ bits to encode $k$ and the weight of each vertex. Then, the total bit-length of the encoding of the weights and $k$ is $\Oh(p^4)$.  Taking into account that $|V(G)|=\Oh(p)$, $|E(G)|=\Oh(p)$ and $L=\Oh(p)$, we conclude that we obtained a kernel of size $\Oh(p^4)$. 

Because important separations and important edge cuts can be found in polynomial time and the algorithm from \autoref{lem:poly-trees} is polynomial, we have that Rules~\ref{rule:trees}--\ref{rule:contr} can be exhaustively applied in polynomial time. It is trivial that \autoref{rule:level} can be applied in polynomial time. Because the algorithm from \autoref{prop:FT} is polynomial, \autoref{rule:weights} requires polynomial time. Therefore, the overall running time of our kernelization algorithm is polynomial. This concludes the proof.
\end{proof}

In~\cite{ErlebachS22Journal}, Erlebach and Spooner proved that \textsc{NS-TEXP} can be solved in $\Oh(2^nLn^3)$ time. 
This fact together with \autoref{thm:kernel-p} implies the following corollary. 

\begin{corollary}\label{cor:fpt}
    \WEX{} parameterized by $p = \mathfrak{m}-n+1$ can be solved in $2^{\Oh(p)}{(nL)}^{\Oh(1)}$ time on temporal graphs with connected underlying graphs.
\end{corollary}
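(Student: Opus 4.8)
The plan is to combine the kernelization of \autoref{thm:kernel-p} with a brute‑force exact algorithm run on the (small) kernel instance. First I would feed the input instance $(\G,w,v,k)$ of \WEX{} to the kernelization algorithm of \autoref{thm:kernel-p}. By that theorem it runs in time polynomial in the input size, i.e. $(nL)^{\Oh(1)}$ (absorbing the bit‑length of the original weights), and outputs an equivalent instance $(\G'=(G_1',\ldots,G_{L'}'),w',v',k')$ whose underlying graph is still connected and satisfies $|V(\G')|=\Oh(p)$, $|E(\G')|=\Oh(p)$, $L'=\Oh(p)$, and in which every vertex weight and $k'$ are encoded in $\Oh(p^3)$ bits (so that the whole kernel has size $\Oh(p^4)$). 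By \autoref{cl:p} the parameter does not increase, so the kernel's number of vertices is genuinely $\Oh(p)$.

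It then remains to decide $(\G',w',v',k')$ in time $2^{\Oh(p)}p^{\Oh(1)}$. By \autoref{obs:equiv} this amounts to deciding whether the underlying graph $G'$ has a monotone walk from $v'$ of weight at least $k'$; equivalently, whether there is a sequence $C_1,\ldots,C_{L'}$ where $C_t$ is a connected component of $G_t'$, $v'\in C_1$, $C_t\cap C_{t+1}\neq\emptyset$ for all $t$, and $w'\bigl(\bigcup_{t}V(C_t)\bigr)\geq k'$. I would solve this by dynamic programming over states $(t,C,U)$, where $t\in[L']$, $C$ is a connected component of $G_t'$, and $U\subseteq V(\G')$; declare $(t,C,U)$ \emph{reachable} if some admissible prefix $C_1,\ldots,C_t=C$ has $\bigcup_{i\leq t}V(C_i)=U$. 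The base cases are the states $(1,C,V(C))$ for the component $C$ containing $v'$; from a reachable $(t,C,U)$ one reaches $(t+1,C',U\cup V(C'))$ for every component $C'$ of $G_{t+1}'$ with $C\cap C'\neq\emptyset$; and the instance is a \emph{yes}‑instance iff some reachable state $(L',C,U)$ has $w'(U)\geq k'$ (observe that always extending a walk to length $L'$ is without loss of generality, since picking a further intersecting component can only add vertices). Since $G_t'$ has at most $|V(\G')|=\Oh(p)$ components, there are $\Oh\bigl(L'\cdot|V(\G')|\cdot 2^{|V(\G')|}\bigr)=2^{\Oh(p)}p^{\Oh(1)}$ states, each processed with $\Oh(p)$ transitions and one weight comparison on $\Oh(p^3)$‑bit integers, i.e. in $p^{\Oh(1)}$ time; the total is $2^{\Oh(p)}p^{\Oh(1)}$. (This is exactly the weighted, $k$‑arb variant of the $\Oh(2^nLn^3)$ subset‑dynamic‑programming algorithm of Erlebach and Spooner for \textsc{NS-TEXP}; the only change is that the table stores the precise set of visited vertices and we test weight $\geq k'$ at the end instead of demanding that all vertices be visited.)

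Adding the two phases gives a running time of $(nL)^{\Oh(1)}+2^{\Oh(p)}p^{\Oh(1)}=2^{\Oh(p)}(nL)^{\Oh(1)}$, as claimed.

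\textbf{Main obstacle.} The only point requiring care is the adaptation of the Erlebach–Spooner exploration algorithm from the "visit all vertices" objective to the weighted "reach weight $\geq k$" objective on the kernel; keeping the exact visited set $U$ in the dynamic‑programming state (rather than a cheaper summary) makes this routine, at the cost of the $2^{|V(\G')|}$ factor, which is affordable precisely because the kernel has $\Oh(p)$ vertices. A secondary, purely bookkeeping point is that the weights in the kernel may be as large as $2^{\Oh(p^3)}$, so one must note that arithmetic on them is still $p^{\Oh(1)}$‑time and hence does not affect the claimed bound.
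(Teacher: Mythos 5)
Your proposal is correct and follows essentially the same route as the paper: apply the kernelization of \autoref{thm:kernel-p} and then spend $2^{\Oh(p)}$ time enumerating the possible sets $U$ of visited vertices in the $\Oh(p)$-vertex kernel. The only (cosmetic) difference is that the paper guesses $U$ outright and verifies each guess by running the Erlebach--Spooner \textsc{NS-TEXP} algorithm on $\G[U]$, whereas you fold the subset enumeration into a single dynamic program over states $(t,C,U)$; both yield the claimed $2^{\Oh(p)}{(nL)}^{\Oh(1)}$ bound.
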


\begin{proof}
Given an instance $(\GL,w,v,k)$ of \WEX, we first apply the kernelization algorithm from \autoref{thm:kernel-p}. Then, we guess the set of vertices $U$ that 
is visited by an exploration walk. Because the number of vertices is $\Oh(p)$, we have $2^{\Oh(p)}$ choices of $U$. For each $U$, we verify whether $w(U)\geq k$ and whether $(\G[U]=(G_1[U],\ldots,G_L[U]),v)$ is a yes-instance of \textsc{NS-TEXP} using the algorithm of  Erlebach and Spooner~\cite{ErlebachS22Journal}. The overall running time of the algorithm is $2^{\Oh(p)}{(nL)}^{\Oh(1)}$. 
\end{proof}
\begin{toappendix}
We remark that using simplified versions of the reduction rules from the kernelization algorithm for \WEX, we can obtain a kernel of linear size for \textsc{NS-TEXP} for the parameterization by $p = \mathfrak{m}-n+1$.

We conclude this section by observing that \WEX{} admits an easy kernel for the parameterization by the number of non-appearances of edges. 

\begin{observation}\label{obs:non-app}	
    \WEX{} parameterized by $q=|E(G)|L-\mathfrak{m}$ admits a kernel of size $\Oh(q^4)$ for connected underlying graphs such that for the output instance $(\G=(G_1,\ldots,G_{L}),w,v,k)$, $\G$ has at most $q+1$ vertices,
at most $(q-1)L$ edges, and $L\leq q$.
\end{observation}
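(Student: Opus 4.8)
The plan is to obtain the kernel by a short sequence of reduction rules, reusing on a much smaller scale the ideas behind \autoref{thm:kernel-p}. Write $q=|E(G)|L-\m=\sum_{e\in E(G)}(L-m(e))$, so that $q$ counts, summed over all time steps, the non-appearances of the edges of the underlying graph $G$. Call an edge $e$ \emph{permanent} if $m(e)=L$ and \emph{temporary} otherwise; since each temporary edge contributes at least $1$ to $q$, there are at most $q$ temporary edges. The first and only substantial reduction rule contracts the structure created by the permanent edges: letting $H$ be the spanning subgraph of $G$ formed by the permanent edges, replace each connected component $C$ of $H$ by a single vertex $c_C$ of weight $w(c_C)=\sum_{u\in V(C)}w(u)$, redirect the temporary edges accordingly, discard temporary edges that become loops, and, iterating the rule, merge parallel temporary edges by taking the union of their appearance sets. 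This is safe by \autoref{obs:equiv}: every permanent edge is present in \emph{every} snapshot, so each $C$ is connected at every time step, and hence a monotone walk reaching some vertex of $C$ may be assumed to visit all of $C$, which is exactly what visiting $c_C$ does; conversely a walk in the contracted instance lifts to one of the same weight (relocating the source to the contracted component of $v$). Merging two temporary edges does not increase $q$, and each contraction strictly decreases the number of vertices, so the process terminates.

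After exhaustive application we have a temporal graph $\G'$ with simple snapshots whose underlying graph $G'$ is connected, has no permanent edge, and has at most $q+1$ vertices, since deleting the at most $q$ temporary edges of $G$ leaves at most $q+1$ components of $H$. Two trivial rules now bound $L$ and the number of edge appearances. First, if some snapshot of $\G'$ equals $G'$, then, $G'$ being connected, the agent visits all of $V(G')$ at that step, so the instance is equivalent to the question whether $w(V(G'))\ge k$, and we output a constant-size equivalent instance. Second, delete every edgeless snapshot, which is irrelevant for exploration. Once neither rule applies, every remaining snapshot misses at least one edge of $G'$, so $L\le\sum_{e\in E(G')}(L-m'(e))$, and this quantity is at most $q$ since none of the rules increases it. Combining $|E(G')|\le q$, $L\le q$, and the identity $\sum_{e}m'(e)=|E(G')|\,L-\sum_e(L-m'(e))$ yields at most $(q-1)L$ edge appearances (one uses $|E(G')|\le q$ together with $L\le\sum_e(L-m'(e))$), which gives the claimed bounds $|V(\G')|\le q+1$ and $L\le q$.

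Finally, the weights are reduced exactly as in \autoref{thm:kernel-p}: apply the algorithm of \autoref{prop:FT} to the vector $(k,w(v_1),\dots,w(v_{n}))$ with $n=|V(G')|=\Oh(q)$ and a suitable $N=\Oh(n)$, obtaining positive integers $k$ and $w(v_i)$ of binary length $\Oh(n^3)=\Oh(q^3)$ that preserve, for every $U\subseteq V(G')$, whether $w(U)\ge k$, and hence preserve the instance. With $\Oh(q)$ vertices, $\Oh(q^2)$ edge appearances and $L=\Oh(q)$, the total bit-size is $\Oh(q^4)$; since connected components, parallel-edge merging, and the (strongly polynomial) Frank--Tardos algorithm all run in polynomial time, this is a polynomial kernelization.

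I do not expect a genuine obstacle here — this is the ``easy'' kernel of the paper. The only mildly delicate point is the bookkeeping of the contraction rule: correctly summing the weights, handling loops and parallel temporary edges without changing the answer or increasing $q$, and relocating the source vertex; all of this is routine.
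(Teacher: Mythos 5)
Your proposal is correct and follows essentially the same route as the paper: contract the permanent (always-present) edges to get at most $q$ edges and hence at most $q+1$ vertices, dispose of the instance trivially when some snapshot already connects everything so that every remaining snapshot misses an edge (giving $L\leq q$ and at most $(q-1)L$ edge appearances), and compress the weights with Frank--Tardos. The only differences are cosmetic — you contract whole components of permanent edges at once and trigger the trivial rule on snapshots equal to the full underlying graph rather than on merely connected snapshots — and both variants are sound.
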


\begin{proof}
    Let  $(\G=(G_1,\ldots,G_{L}),w,v,k)$ be an instance of \WEX{} with $G$ being the underlying graph of $\G$. Let also
$q =|E(G)|L-\mathfrak{m}=\sum_{t=1}^L|E(G)\setminus E(G_t)|$. The kernelization algorithms boils down to the following straightforward reduction rules.

 \begin{redrule}\label{rule:edge-contr}
 If there is $xy\in E(G)$ such that $xy\in E(G_t)$ for every $t\in [L]$, then contract the edge $xy$ in every $G_t$ and set the weight of the obtained vertex $z_{xy}$ to be  $w(z_{xy}):=w(x)+w(y)$. Furthermore, if $v\in\{x,y\}$, then set the source vertex $v:=z_{xy}$.
  \end{redrule}
  
 \begin{redrule}\label{rule:single}
 If there is $t\in [L]$ such that $G_t$ is connected, then
 \begin{itemize}
 \item if $w(V(G))\geq k$, then return a trivial yes-instance of constant size;
 \item otherwise, if $w(V(G))<k$, then return a trivial no-instance.
\end{itemize}
 \end{redrule}
 
 Notice that after exhaustive applying Reduction Rule~\ref{rule:edge-contr}, $G$ has at most $q$ edges. Since $G$ is connected, $G$ has at most $q+1$
 vertices. Because of Reduction Rule~\ref{rule:single}, each $G_t$ has at most $q-1$ edges since at least one edge of $G$ does not appear in $G_t$. Thus, the total
 number of edges is at most $(q-1)L$.  Reduction Rule~\ref{rule:single} also guarantees that $L\leq q$, because at least one  edge of $G$ is missing in each $G_t$. 
 To complete the proof, observe that the weights can be reduced in exactly the same way as in the proof of Theorem~\ref{thm:kernel-p} using Proposition~\ref{prop:FT} (see Reduction Rule~\ref{rule:weights}).
 This concludes the proof.
\end{proof}
\end{toappendix}

\section*{Conclusion}
We initialized the study of polynomial kernels for exploration problems on temporal graphs. We showed that for the problems \textsc{NS-TEXP}, \textsc{$k$-arb NS-TEXP}, and \textsc{Weighted k-arb-NS-TEXP}, unless  $\NP \subseteq \coNP/\poly$, there does not exist a polynomial kernel for the parameters 
number of vertices $n$, lifetime $L$, and number of vertices to visit $k$; and for the combined parameters $L+k$, $L + \gamma$, and $k+\gamma$, where $\gamma$ is the maximal number of connected components per time step. 
In fact, by a straight forward reduction that repeats each snapshot sufficiently many times, all of our hardness results, that do not involve the parameter $L$, carry over to the strict settings \textsc{TEXP} and \textsc{$k$-arb TEXP} where a temporal exploration traverses at most one edge per time step.
We showed that the temporal exploration problems remain \NP-hard restricted to temporal graphs where the underlying graph is a tree of depth two. 
From a parameterized complexity point of view, this eliminates most of the common structural parameters considered on the underlying graph. Nonetheless, we were able to identify a new parameter of a temporal graph 
$p(\mathcal{G}) = \sum_{i=1}^{L} (|E(G_i)|) - |V(G)| +1$ that captures how close the temporal graph is to a tree where each edge appears exactly once.
Our parameter can also be seen as a notion of sparsity for temporal graphs.
For this parameter~$p(\mathcal{G})$, we were able to obtain a polynomial kernel for \textsc{Weighted $k$-arb NS-TEXP}.
Using simplified reduction rules, we can obtain a kernel of linear size for \textsc{NS-TEXP}. While the reduction from \textsc{NS-TEXP} to \textsc{TEXP} blows up the parameter $p(\mathcal{G})$, our reduction rules can still be adapted to obtain polynomial kernels for the strict variants of the considered exploration problems. The natural next step would be to evaluate how useful our parameter is for other problems on temporal graphs.

\bibliography{bib}
\end{document}